\tiny\color{gray},
\newcolumntype{L}[1]{>{\raggedright\arraybackslash}p{#1}}
\newcolumntype{C}[1]{>{\centering\arraybackslash}m{#1}}
\newcolumntype{R}[1]{>{\raggedleft\arraybackslash}p{#1}}
\newcommand{\e}{\mathrm{e}}
\renewcommand{\epsilon}{\varepsilon}
\newcommand{\bra}[1]{\langle #1 \rvert}
\newcommand{\ket}[1]{\lvert #1 \rangle}
\newtheorem{theorem}{Theorem}[section]
\newtheorem*{claim*}{Claim}
\newtheorem{fact}[theorem]{Fact}
\newtheorem{lemma}[theorem]{Lemma}
\newtheorem{corollary}[theorem]{Corollary}
\theoremstyle{definition}
\newtheorem{definition}[theorem]{Definition}
\newtheorem*{remark*}{Remark}
\newcommand{\observable}{\mathcal{O}}
\def\Pr{\mathop{\mathbf{Pr}}\nolimits}
\renewcommand{\emptyset}{\varnothing}
\newcommand{\norm}[1]{\left\Vert#1\right\Vert}
 \newcommand{\tuple}[1]{\left(#1\right)} 
 \newcommand{\tp}{\tuple}
\newcommand{\abs}[1]{\left\vert#1\right\vert}
\newcommand{\Tr}[1]{\mathbf{Tr}\left[#1\right]}
\newcommand{\Op}{\mathcal{O}}
\newcommand{\supp}{\mathrm{supp}}
\newcommand{\Hspace}{\+V}
\def\*#1{\boldsymbol{#1}} % Use \*A for \mathbf{A}
\def\+#1{\mathcal{#1}} % Use \+A for \mathcal{A}
\def\-#1{\mathrm{#1}} % Use \-A for \mathrm{A}
\def\^#1{\mathscr{#1}} % Use \^A for \mathscr{A}
\renewcommand{\Pr}[2][]{ \ifthenelse{\isempty{#1}}
  {\mathbf{Pr}\left[#2\right]} {\mathbf{Pr}_{#1}\left[#2\right]} } % Use \Pr[a]{b} for \mathbf{Pr}_a[b], \Pr{b} for  \mathbf{Pr}[b]
\newcommand{\E}[2][]{ \ifthenelse{\isempty{#1}}
  {\mathbf{\mathbf{E}}\left[#2\right]}
  {\mathbf{\mathbf{E}}_{#1}\left[#2\right]} }
  \newcommand{\Var}[2][]{ \ifthenelse{\isempty{#1}}
  {\mathbf{\mathbf{Var}}\left[#2\right]}
  {\mathbf{\mathbf{Var}}_{#1}\left[#2\right]} }
\newcommand{\relaxT}[2][]{
  \ifthenelse{\isempty{#2}}
  {t_{\mathrm{rel}}^{\mathrm{#1}}}
  {t_{\mathrm{rel}}^{\mathrm{#1}}(#2)}

}
\newcommand {\br} [1] {\ensuremath{ \left( #1 \right) }}
\title{Polynomial-Time Approximation of Zero-Free Partition Functions}
\author{Penghui Yao}
\author{Yitong Yin}
\author{Xinyuan Zhang}
\address[Penghui Yao, Yitong Yin, Xinyuan Zhang]{State Key Laboratory for Novel Software Technology, Nanjing University, 163 Xianlin Avenue, Nanjing, Jiangsu Province, China. \textnormal{E-mails: \url{pyao@nju.edu.cn}, \url{yinyt@nju.edu.cn}, \url{zhangxy@smail.nju.edu.cn}}}
\begin{document}
\maketitle

\begin{abstract}
%\begin{enumerate}
%\item Polynomial interpolation (Taylor expansion) based on zero-freeness is a major technique for deterministic approximation of partition functions as well as deterministic approximate counting.
%\item Discovery of Barvinok-Taylor: Assuming zero-freeness, quasi-polynomial time.
%\item Patel-Regts give an approach for Polynomial-time approximation for counting induced subgraphs. The same idea was used to give poly-time approximation of zero-free partition function of 2-spin systems on hypergraphs,
%\item In this paper, we give a polynomial-time approximation algorithm for all classical and quantum partition functions of systems of local Hamiltonians with bounded maximum degree, assuming absence of zeros for temperatures. Consequently, when the inverse temperature is close to zero, we have poly-time approximation of such partition functions and a classical sampler for the corresponding quantum Gibbs state after measurement in computational basis.
%\item Our results are based on a new abstract framework which extends and generalizes the approach of Patel-Regts for subgraphs counting to a substantially broader class of partition functions.
%\end{enumerate}
%
Zero-free based algorithm is a major technique for deterministic approximate counting.
%%
%Initiated by Barvinok \cite{barvinok2015computing,barvinok2016computing,barvinok2017approximating,barvinok-comb,barvinok2017computing}, this technique uses truncated Taylor expansion to give quasi-polynomial-time approximation of zero-free partition functions.
%%
%Later Patel and Regts \cite{patel-polynomial} gave a refined framework and improved the computational complexity to polynomial-time for a class of graph polynomials that can be expressed as counting induced subgraphs in graphs with bounded maximum degree.
%
%In this paper, we give a polynomial-time approximation algorithm for all classical and quantum partition functions specified by systems of local Hamiltonians with bounded maximum degree, assuming absence of zeros for temperatures. Consequently, when the inverse temperature is close to zero, we have polynomial-time approximation of such partition functions.
%Our result is based on a new abstract framework that extends and generalizes the approach of Patel and Regts.
%Barvinok initiated the study of the algorithms for zero-free partition functions and gave quasi-polynomial-time approximation algorithms for a number of zero-free partition functions in his series of works~\cite{barvinok2015computing,barvinok2016computing,barvinok2017approximating,barvinok-comb,barvinok2017computing}.
In Barvinok's original framework~\cite{barvinok-comb}, 
by calculating truncated Taylor expansions,
a quasi-polynomial time algorithm was given for estimating zero-free partition functions.
%calculating properly truncated Taylor expansions gave quasi-polynomial time approximations of partition functions satisfying certain zero-free property.
%
Patel and Regts~\cite{patel-polynomial} later gave a refinement of Barvinok's framework, which gave a polynomial-time algorithm for a class of zero-free graph polynomials that can be expressed as counting induced subgraphs in bounded-degree graphs.

In this paper, we give a polynomial-time algorithm for estimating classical and quantum partition functions specified by local Hamiltonians with bounded maximum degree, assuming a zero-free property for the temperature. 
Consequently, when the inverse temperature is close enough to zero by a constant gap, we have polynomial-time approximation algorithm for all such partition functions.
Our result is based on a new abstract framework that extends and generalizes the approach of Patel and Regts.

%    We design a fully polynomial time approximation scheme(FPTAS) for estimating the zero-free partition function of various models.
%    In particular, we introduce a FPTAS for estimating the quantum partition function of a $(k,d)$-Hamiltonian when the inverse temperature is close to zero, leading to a classical sampler for the corresponding quantum Gibbs state after measurement in computational basis.
%
%    Our work is based on the interpolation technique introduced in \cite{barvinok-comb}, \cite{patel-polynomial}, and the locus of zeros in the quantum partition function discussed in \cite{harrow2020classical}.
\end{abstract}

\section{Introduction}

Let $\Omega=[q]^V$ be a finite space of configurations, where $V$ is a set of $n$ variables.
Let $H_1,\ldots,H_m$ be a collection of local constraints, where each $H_j: \Omega\rightarrow \mathbb{C}$ is independent of all but a small subset of variables, and let $H=\sum_{j=1}^mH_j$.
The \emph{partition function} of the given system is defined by
\begin{align}
    Z_{H}(\beta) = \sum_{\sigma \in \Omega} \exp(-\beta \cdot H(\sigma)),\label{eq:classical-partition-function}
\end{align}
where the parameter $\beta$ is usually called the \emph{inverse temperature}.
%Here $H$ represents the local constraints, which is assumed to be easy to evaluate.

The computational complexity of partition functions is one of the central topics in theoretical computer science,
which has been found wide applications in computational counting, combinatorics, and statistical physics.
To date, numerous algorithms as well as hardnesses of approximation for the partition functions of various systems have been established, to list a few~\cite{jerrum1993polynomial, salas1997absence, GJP03, jerrum2004polynomial, Wei06, stefankovic2009adaptive, sly2010computational, sly2012computational, LLY13, sinclair2014approximation, galanis2015inapproximability, liu2019correlation, bezakova2020inapproximability, chen2020optimal, bezakova2021complexity}.
The most important question here is, what property captures the approximability of partition functions.

It is widely believed that for various classes of partition functions of interests, the hardness of approximation is captured by the locus of complex zeros.
The study of the locus of complex zeros has a rich history in statistical physics, for example, in the famous Lee-Yang theorem~\cite{lee1952statistical}.
In computer science, the absence of complex zeros may imply efficient approximation algorithms for partition functions~\cite{barvinok-comb, patel-polynomial, regts2018zero, helmuth2020algorithmic, Liu2019Ising, liu2019correlation, peters2019conjecture, coulson2020statistical, guo2020soda, guo2020zeros, harrow2020classical, Shao2020Contraction, buys2021lee, chen2021spectral}.
This line of research was initiated by Barvinok's pioneering works~\cite{barvinok2015computing,barvinok2016computing,barvinok2017approximating,barvinok-comb,barvinok2017computing},
which used truncated Taylor expansions to approximate non-vanishing polynomials and established quasi-polynomial time approximations of partition functions with no complex zeros within a region.
Later in a seminal work of Patel and Regts~\cite{patel-polynomial},
this quasi-polynomial running time was  improved to polynomial time for a class of graph polynomials which can be expressed as induced subgraph sums in graphs of constant maximum degree.
And this polynomial-time framework was further extended by Liu, Sinclair and Srivastava \cite{Liu2019Ising} to hypergraph 2-spin systems with no complex-zeros for the external field.

For the quantum version,
several (classical) algorithms have been proposed, including \cite{kuwahara2020clustering, harrow2020classical, mann2021efficient}, to estimate the quantum generalization of the partition function~\eqref{eq:classical-partition-function} where $H$ is the \emph{Hamiltonian}. 
Yet an important question remains to answer is the polynomial-time approximability of the quantum or classic partition function in the form of~\eqref{eq:classical-partition-function}  assuming its zero-freeness.

%\textcolor{red}{
%In quantum computing, several algorithms were proposed to estimate the quantum partition function \cite{kuwahara2020clustering, harrow2020classical, mann2021efficient}. 
%In a recent work\cite{mann2021efficient},  Helmuth and Mann discussed the polynomial-time algorithm for the partition function defined in~\eqref{eq:classical-partition-function} when temperature is high. 
%They  claimed a polynomial-time algorithm assuming priori information on the location of zeroes without a proof. 
%}
%However, an important question, whether the partition function defined in~\eqref{eq:classical-partition-function} by local constraints with bounded maximum degree, is polynomial-time approximable assuming its zero-freeness, is still open.

%\textcolor{blue}{
%However, none of these polynomial-time frameworks covers the partition function defined in~\eqref{eq:classical-partition-function},
%%\footnote{The framework of Patel and Regts~\cite{patel-polynomial} might be able to cover this case if $\mathrm{e}^{\beta}$ were treated as the complex variable instead.} 
%and it is still unknown whether they are polynomial-time approximable assuming zero-freeness.
%}

\subsection{Our results.}
We show the polynomial-time approximability of zero-free \emph{quantum} partition functions.

%\subsection{Quantum partition function.}
%
Let $V$ be a set of $n$ \emph{sites} (also called \emph{vertices} or \emph{particles}).
Let $q\ge 2$ be an integer.
Throughout the paper, we assume that each site $u\in V$ is associated with a $q$-dimensional Hilbert space $\Hspace_u$, and let $\Hspace = \bigotimes_{u \in V} \Hspace_u$.
A \emph{Hamiltonian} $H$ is a Hermitian matrix in $\Hspace$.
The support of a Hamiltonian $H$, denoted by $\supp(H)$, is the minimal set of sites on which $H$ acts non-trivially.
Given a Hamiltonian $H$, $\exp(H)$ is defined by 
$\exp(H)=\sum_{\ell=1}^\infty \frac{1}{\ell!}H^\ell$,
and the \emph{partition function} $Z_H:\mathbb{C}\to\mathbb{C}$ induced by $H$ is defined as follows:
\begin{align}
\forall\beta\in\mathbb{C},\quad
    Z_H(\beta)
    \triangleq \Tr{\exp(-\beta  H)}.\label{eq:quantum-partition-function}
    %= \Tr{\sum_{j=1}^\infty \frac{\beta^j}{j!}H^j}.
\end{align}
%where $\exp(A)$ for matrix $A$ is defined by
%\[
%\exp(A) \triangleq \sum_{k=1}^\infty \frac{1}{k!} A^k.
%\]

We are interested in partition functions induced by local Hamiltonians with bounded maximum degrees.

\begin{definition}[local Hamiltonian]\label{def:k-d-Hamiltonian}
A Hamiltonian $H\in\Hspace$ is said to be \emph{$k$-local}  if $H$ can be expressed as
\[
H=\sum_{j=1}^m H_j,
\]
where each $H_j$ acts non-trivially on at most $k$ sites, i.e.~$\abs{\supp(H_j)}\le k$.
A Hamiltonian $H\in\Hspace$ called a \emph{$(k,d)$-Hamiltonian} if $H$ is $k$-local and for every $v\in V$,
$\deg(v)\triangleq\abs{\left\{j\mid v\in \supp(H_j)\right\}}\le d$.
%and each $v\in V$ is contained in $\supp(H_i)$ for at most $d$ many $H_i$'s.
\end{definition}

%\begin{definition}[$(k,d)$-Hamiltonian]\label{def:k-d-Hamiltonian1}
%	A Hamiltonian $H$ in Hilbert space $\Hspace = \bigotimes_{u \in V} \Hspace_u$ is called a \emph{$(k,d)$-Hamiltonian} if it can be expressed as
%	\[
%	H=\sum_{i=1}^m H_i,
%	\]
%	where %each $H_i$ is a $k$-local Hamiltonian in $\+H$,
%	$H_1,\ldots,H_m$ are Hamiltonians in $\Hspace$, each of which acts non-trivially on at most $k$ sites. %such that $\abs{\supp(H_i)}\le k$ for all $H_i$,
%	For every $v\in V$,
%	\begin{align*}
%	\deg(v)\triangleq\abs{\left\{i\in[m]\mid v\in \supp(H_i)\right\}}\le d,
%	\end{align*}
%where $\supp(H_i)$ represents the {\em support} of $H_i$, which is the set of sites which $H_i$ acts on nontrivially.
%	%and each $v\in V$ is contained in $\supp(H_i)$ for at most $d$ many $H_i$'s.
%\end{definition}

%A \emph{measurement} $\Op$ is a positive operator in Hilbert space $\Hspace$.
%%
%The \emph{quantum partition function} induced by Hamiltonian $H$  under measurement $\Op$ is defined by
%\begin{align}
%%\forall \beta\in\mathbb{R}^+,\quad
%Z_{H,\Op}(\beta)
%\triangleq \Tr{\exp(\beta H)\Op}. \label{eq:quantum-pf1}
%\end{align}
%Furthermore, a measurement $\Op$ is \emph{tensorized} if $\Op=\bigotimes_{v \in V} \Op_v$ where $\supp(\Op_v)=\{v\}$.

Notice that if all $H_j$'s are diagonal, then $H$ is diagonal as well.
The quantum partition function $Z_H(\beta)$ then degenerates to the classical partition function defined in~\eqref{eq:classical-partition-function}.
Indeed, in such diagonal case we have
\[
Z_{H}(\beta)
= \sum_{\sigma \in [q]^{V}} \exp(-\beta\cdot H(\sigma)),
\]
where $H(\sigma)=\sum_{j=1}^mH_j(\sigma)$ and $H_j(\sigma)$ represents the $\sigma$-th diagonal entry of $H_j$. 
Since each $H_j$ is diagonal and acts non-trivially on subset $\supp(H_j)$ of at most $k$ sites, the value of $H_j(\sigma)$ is determines by the variables in $\supp(H_j)$. 
Hence the $Z_{H}(\beta)$ above is precisely the classical partition function defined in~\eqref{eq:classical-partition-function}.
%

%Besides being a generalization of the classical partition functions, 
The quantum partition functions encode rich information about quantum systems, e.g.~the free energy and ground state energy.
Meanwhile, the non-diagonal property, especially the non-commutativity of multiplication for non-diagonal matrices, imposes great challenges for the computation of partition functions.

%Quantum partition functions capture many useful properties of the systems including the free energy, ground state energy, etc. From the argument above, exactly evaluating the quantum partition function of a system is at least as hard as computing the (classical) partition function, which is known to be \#P-hard. Therefore, designing approximation algorithms for quantum partition functions has also received a lot of attention.

%\subsection{Our results}

We prove the following zero-free based approximability of quantum partition functions.
\begin{theorem}[\Cref{thm:quantum}, informal]\label{thm:quantum-informal}
Let $\Omega \subseteq \mathbb{C}$ be a ``well-shaped'' complex region (formalized by \Cref{def:good-region}) and $k,d\ge 1$ be constants.
There is a deterministic algorithm which takes a $(k,d)$-Hamiltonian $H$ on $n$ sites and a $\beta$ from interior of $\Omega$ as input, and outputs an estimation of the quantum partition function $Z_H(\beta)$ in polynomial time in $n$, if $Z_H$ satisfies the zero-free property such that $\abs{\log Z_H} \le \mathrm{poly}(n)$ on $\Omega$.
\end{theorem}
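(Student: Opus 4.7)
The plan is to follow Barvinok's truncated-Taylor strategy, adapted via a new cluster-expansion framework that handles the non-commutative traces arising from quantum Hamiltonians. The hypothesis that $Z_H$ is zero-free on the well-shaped region $\Omega$ with $\abs{\log Z_H} \le \mathrm{poly}(n)$ there, combined with a conformal map of $\Omega$ into the unit disk, gives by standard Cauchy estimates that truncating the Taylor expansion of $\log Z_H$ at degree $T = O(\log(n/\epsilon))$ around a convenient base point (e.g.\ $\beta_0 = 0$, where $Z_H(0)=q^n$) yields an $\epsilon$-approximation at any interior $\beta$. So it suffices to compute the first $T$ Taylor coefficients of $\log Z_H$ exactly in polynomial time.

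The crux is a cluster expansion for $\log Z_H$ of the form
\begin{equation*}
\log Z_H(\beta) \;=\; n\log q \;+\; \sum_{C} w_C(\beta),
\end{equation*}
where $C$ ranges over nonempty ``clusters'' of local terms from $\{H_1,\ldots,H_m\}$ that are connected in the dependency graph $G$ on $[m]$ defined by $i\sim j$ iff $\supp(H_i)\cap\supp(H_j)\ne\emptyset$, and $w_C(\beta)$ is a per-cluster weight vanishing at $\beta=0$. I would derive this by expanding $\Tr{e^{-\beta H}}$, using the tensor product structure $\Hspace=\bigotimes_u \Hspace_u$ together with the cyclicity of the trace to factor the contribution of operator products whose supports split into disjoint pieces in $G$; a Mayer-type logarithmic inversion then cancels the disconnected contributions, leaving only the connected clusters.

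With the cluster expansion in hand, the first $T$ Taylor coefficients of $\log Z_H$ around $\beta_0$ are determined by clusters of size at most $T$. Since $H$ is a $(k,d)$-Hamiltonian, $G$ has maximum degree at most $kd$, so by a standard bound the number of connected subgraphs of $G$ of size $\le T$ is at most $m\cdot(ekd)^T=\mathrm{poly}(n)$; moreover each coefficient of $w_C(\beta)$ is a trace on a local Hilbert space of dimension $q^{\abs{\supp(C)}}\le q^{kT}=\mathrm{poly}(n)$, computable by direct matrix operations in polynomial time. Enumerating the relevant clusters and summing their contributions then yields the required coefficients.

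The main obstacle is the non-commutativity of the $H_j$'s. In the classical diagonal setting each cluster weight is a plain polynomial in $\beta$, matching the induced-subgraph sums of Patel and Regts; in the quantum case, $w_C(\beta)$ is itself an infinite power series whose coefficients depend on the orderings of the operator factors, so one must show that (i) $w_C$ is analytic on $\Omega$ with bounds decaying in $\abs{C}$ sharply enough for the cluster expansion to converge absolutely, via a Kotecky--Preiss-type criterion adapted to operator traces, and (ii) truncating at cluster size $T$ introduces only $\epsilon/\mathrm{poly}(n)$ error. Establishing this quantitative non-commutative cluster-expansion convergence, compatible with the zero-free hypothesis and with the dependency-graph enumeration, is the core technical work and constitutes the new abstract framework extending Patel and Regts.
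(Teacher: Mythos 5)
Your high-level scaffolding is right: map $\Omega$ into the unit disk, use the $\abs{\log Z_H}\le\mathrm{poly}(n)$ bound with Barvinok's lemma to truncate the Taylor series of $\log Z_H$ at degree $T=O(\log(n/\epsilon))$, and reduce to computing the first $T$ Taylor coefficients exactly via connected clusters in a dependency graph on $[m]$. But the last paragraph of your proposal reveals a genuine gap. You propose to prove that (i) the cluster weights $w_C$ satisfy a Kotecky--Preiss-type decay so the cluster expansion converges absolutely on all of $\Omega$, and (ii) truncating at cluster size $T$ introduces small error. Neither step is needed, and step (i) would in general fail under only the zero-free hypothesis: Kotecky--Preiss convergence is a high-temperature condition that imposes constraints on the Hamiltonian beyond non-vanishing of $Z_H$. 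That route is essentially the cluster-expansion approach of Mann and Helmuth, which the paper explicitly contrasts with its own result --- that approach gives a polynomial-time algorithm in a specific high-temperature strip, not under black-box zero-freeness. If you rely on (i), you have silently dropped the ``only zero-freeness'' hypothesis of the theorem.

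What the paper actually needs, and proves, is a purely formal-power-series fact at $\beta=0$: the $\ell$-th Taylor coefficient of $\log f_G$ is \emph{exactly} a finite sum of contributions from connected induced subgraphs of size at most $\alpha\ell$, with no tail and no convergence issue. This is \Cref{lm:log-property}. The proof uses Newton's identity to recursively define coefficients $\zeta_{H,\ell}$ from the $\lambda_{H,\ell}$'s of \Cref{def:good-function}, and then shows $\zeta_{H,\ell}=0$ both when $|V_H|>\alpha\ell$ (by induction through the Newton recursion, since the same vanishing holds for $\lambda$) and when $H$ is disconnected (by a minimal-counterexample argument using only the \emph{multiplicative} property $f_{G_1\cup G_2}=f_{G_1}f_{G_2}$, i.e.\ $\log f_{G_1\cup G_2}=\log f_{G_1}+\log f_{G_2}$). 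No analyticity of $w_C$ on $\Omega$, no Kotecky--Preiss decay, no cluster-size truncation error ever enters; zero-freeness is used only once, through Barvinok's lemma, to bound the Taylor truncation error. Once the structural lemma is in hand, the algorithm enumerates the $\mathrm{poly}(n)$ connected subgraphs of size $\le\alpha\ell\le\alpha T$ via \Cref{lm:connected-graph} and evaluates the finitely many traces in the dynamic-programming recursion~\eqref{eq:DP-H-S-ell}. Your ``Mayer-type logarithmic inversion'' is replaced in the paper by this Newton-identity computation, which is both more elementary and, crucially, agnostic to convergence --- exactly what lets the result hold on any zero-free $\Omega$, not merely a high-temperature disk.
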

%The ``well-shaped'' property of a complex region is formalized in \Cref{def:good-region}.
%
The formal statement (\Cref{thm:quantum}) is more general: it further takes into account the measurement of the quantum system.
Such generalization  may encode broader classes of partition functions, e.g.~the ones with external fields, and  also enable sampling from Gibbs state.
Following a recent major advance for quantum zero-freeness of Harrow \emph{et al}~\cite{harrow2020classical}, 
we give concrete applications (in \Cref{sec:applications}),
namely, polynomial-time algorithms for approximating the quantum partition function (\Cref{thm:quantum-concrete}) and sampling from the Gibbs state (\Cref{thm:quantum-sampler}) in a high-temperature regime (where $\beta$ is close to zero by a constant gap), improving the quasi-polynomial-time algorithms in~\cite{harrow2020classical}.
A  polynomial-time approximation of the quantum partition functions in a slightly bigger high-temperature regime was obtained in \cite{mann2021efficient} using the cluster expansion technique~\cite{helmuth2020algorithmic}, by transforming the quantum partition function to a polymer model and showing the convergence of the cluster expansion assuming high temperature. 

%In this paper, we propose a new abstract framework and obtain a larger class of functions that admitting an approximate polynomial-time algorithm.

We prove polynomial-time approximability of the quantum partition function directly from a black-box property of zero-freeness, without further  restricting the parameters of the model.
Moreover, our result is proved in a new abstract framework, namely, functions specified by abstract neighborhood structures called \emph{dependency graphs}.
We prove the following general result.

\begin{theorem}[\Cref{cr:algo}, informal]\label{cr:algo-informal}
Suppose that functions $\{f_G\}$ specified by dependency graphs $G$ satisfies certain boundedness property of its Taylor coefficients (formalized in \Cref{def:good-function}).
Let $\Omega \subseteq \mathbb{C}$ be a ``well-shaped'' complex region. 
There is a deterministic algorithm which takes a dependency graph $G$ of $O(1)$ max-degree and $x$ from the interior of $\Omega$ as input, and outputs an estimation of $f_G(x)$ in polynomial time in size $n$ of $G$, if $f_G(0)$ is easy to compute and $f_G$ satisfies the zero-free property such that $\abs{\log f_G} \le \mathrm{poly}(n)$ on $\Omega$.
\end{theorem}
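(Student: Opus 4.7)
The plan is to follow the general Barvinok interpolation paradigm, refined in the spirit of Patel--Regts so that the required Taylor coefficients are extracted from a local structure on the dependency graph. First, using that $f_G$ is zero-free on the ``well-shaped'' region $\Omega$, I would pass to the holomorphic function $g = \log f_G$ on $\Omega$, which by hypothesis satisfies $\abs{g}\le \mathrm{poly}(n)$. The ``well-shaped'' condition of \Cref{def:good-region} should guarantee the existence of a polynomial map $\phi$ from a disk of radius $\rho>1$ around the origin into $\Omega$ with $\phi(0)=0$ and $\phi(1)=x$; thus $g\circ\phi$ is a bounded holomorphic function on this enlarged disk. A standard Cauchy-estimate argument then shows that truncating the Taylor series of $g\circ\phi$ at degree $N=O(\log(n/\epsilon))$ produces an $\epsilon$-approximation of $\log f_G(x)$, from which $f_G(x)$ itself is recovered by exponentiation.

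The problem then reduces to computing the first $N=O(\log n)$ Taylor coefficients of $g=\log f_G$ at zero in polynomial time. Via Newton's identities these coefficients are determined recursively by the Taylor coefficients $a_0,a_1,\dots,a_N$ of $f_G$ itself, normalized by $f_G(0)$ (which is easy to compute by assumption); so everything hinges on computing $a_0,\dots,a_N$ efficiently.

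The main technical obstacle---and the step that makes the abstract dependency-graph framework worthwhile---is to show that $a_k$ decomposes into local contributions indexed by small subsets $S\subseteq V(G)$, in such a way that after taking logarithms only \emph{connected} subsets survive. This is exactly the role of the boundedness property of \Cref{def:good-function}: each Taylor coefficient should split as $a_k = \sum_{S\subseteq V(G)} c_{k,S}$, where $c_{k,S}$ depends only on the restriction of the dependency structure to $S$, and passing to $\log f_G$ via a cluster-expansion / M\"obius inversion kills all contributions from disconnected $S$. In a graph of max-degree $d=O(1)$, the number of connected subsets of size at most $N=O(\log n)$ containing any fixed vertex is bounded by $(\mathrm{e}d)^{N}=\mathrm{poly}(n)$, so all relevant subsets can be enumerated using the bounded-degree traversal technique of Patel--Regts and their contributions summed in polynomial time. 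The delicate part I anticipate is lifting the cluster-expansion identity cleanly from the concrete induced-subgraph-sum setting of Patel--Regts to the abstract dependency-graph setting used here; once that combinatorial identity is established and its individual terms are shown to be computable locally in $S$, the analytic bound from Barvinok truncation and the enumeration bound above combine directly to give the stated polynomial-time algorithm.
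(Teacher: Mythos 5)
Your proposal is correct and follows essentially the same route as the paper: Barvinok truncation of $\log f_G$ after mapping a disk into the good region, a localized Newton's identity expressing the Taylor coefficients of $\log f_G$ as sums over connected induced subgraphs, and Patel--Regts bounded-degree enumeration of those connected subgraphs. The ``delicate'' step you flag---lifting the cluster-expansion identity so that multiplicativity annihilates disconnected contributions in the abstract dependency-graph setting---is precisely the paper's \Cref{lm:log-property}, which sets up a recursion $\zeta_{H,\ell}$ localizing Newton's identity and proves $\zeta_{H,\ell}=0$ for disconnected $H$ by a minimal-counterexample argument using the multiplicative property of $f_{\cdot}$.
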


The abstract framework is described in \Cref{def:good-function}. 
As verified in \Cref{sec:approximation-zero-free}, 
our framework subsumes previous polynomial-time frameworks for zero-free based algorithms (\cite{patel-polynomial} and~\cite{Liu2019Ising}) as special cases,
and more importantly, it extends the previous frameworks to become compatible with infinite-degree polynomials and non-commutative systems, which are crucial for  quantum partition functions.

\section{Preliminaries}

\subsection{Local Hamiltonians}
Given a Hamiltonian $H$ in $\Hspace$, we use $\supp(H)$ to denote the \emph{support} of $H$, the minimal set of sites on which $H$ acts non-trivially.
Formally, if $S$ is the support of $H$, then $S$ is the minimal subset of $V$ satisfying that  $H=H_S\otimes I_{V\setminus S}$, where $H_S$ is a Hamiltonian in the space $\bigotimes_{v\in S}\Hspace_v$ and $I_{V\setminus S}$ is the identity matrix in the space $\bigotimes_{v\in V \setminus S}\Hspace_v$.
Readers may refer to~\cite{10.1561/0400000066,kitaev2002classical} for a thorough treatment.

\subsection{Basic facts about complex functions}
% \pnote{may be necessary}

A complex-valued function $f:\Omega \rightarrow \mathbb{C}$ defined on a complex domain $\Omega \subseteq \mathbb{C}$ is called  \emph{holomorphic} if for every point $z\in\Omega$, the complex derivative exists in a neighborhood of $z$.
A holomorphic function $f:\Omega \rightarrow \mathbb{C}$ is infinitely differentiable and equals locally to its Taylor series.
A \emph{biholomorphic} function is a bijective holomorphic function whose inverse is also holomorphic
Furthermore, a function $f:\mathbb{C} \rightarrow \mathbb{C}$ is called an \emph{entire} function if it is holomorphic on $\mathbb{C}$.
A region $\Omega \subseteq \mathbb{C}$ is simply connected if $\overline{\mathbb{C}} \setminus \Omega$ is connected, where $\overline{\mathbb{C}}=\mathbb{C}\cup\{\infty\}$ denotes the extended complex plane.

The logarithm of a complex-valued function $f$, denoted by $g=\log f$, is a function such that $f(z)=\mathrm{e}^{g(z)}$.
For holomorphic function $f:\Omega \rightarrow \mathbb{C}\setminus\{0\}$ on simply connected region $\Omega \subseteq \mathbb{C}$,
such $\log f$ always exists (see e.g.~\cite{stein2010complex}).
Specifically, for an arbitrarily fixed pair $z_0,c_0 \in \mathbb{C}$ satisfying that $f(z_0) = \mathrm{e}^{c_0}$, we have
\begin{align}
    \forall z\in\Omega,\quad \log f(z) = \int_{P} \frac{f'(w)}{f(w)} \,dw\ + c_0,\label{eq:log-f}
\end{align}
where $P$ is an arbitrary path in $\Omega$ connecting $z$ and $z_0$.
%and $c_0 \in \mathbb{C}$ satisfies $f(z_0) = \mathrm{e}^{c_0}$.
%
%When referring to $\log f$, usually one should fix a branch so that the function $g=\log f$ is uniquely defined.
%
%As complex logarithm, the function $\log f$ is uniquely determined once its ``branch'' is fixed.
%
Throughout the paper, we mainly deal with such holomorphic $f$ on simply connected $\Omega$ that $0\in\Omega$ and $f(0)\in\mathbb{R}^+$.
For such case, the definition of $\log f$ is uniquely determined by $z_0=0$ and the real $c_0=\ln(f(0))$.

% Let $\Omega$ be a complex region such that $\mathrm{dist}(x,\partial \Omega)$ can be determined in $O(1)$ time for all $x \in \mathbb{C}$. We will now show that
%Define holomorphic functions, entire functions, logarithm of complex functions etc.

\subsection{Approximation of non-vanishing function}
We now recap the polynomial interpolation technique of Barvinok~\cite{barvinok-comb} to estimate values of non-vanishing holomorphic functions.

For $b \in \mathbb{R}^+$, we use $\mathbb{D}_b$ to denote the complex disc of radius $b$ centered at the origin.
Formally,
\[
\mathbb{D}_b = \left\{z \in \mathbb{C} | \abs{z} < b\right\}.
\]
In particular,  let $\mathbb{D} = \mathbb{D}_1$ denote the unit disc.

For $\beta \in \mathbb{C}$ and $\delta \in \mathbb{R}^+$, we use $S_{\beta,\gamma}$ to denote $\delta$-strip of interval $[0,\beta]=\{t\beta\mid t\in[0,1]\}$.
Formally,
\[
    S_{\beta,\gamma} = \left\{z \in \mathbb{C} \mid \mathrm{dist}(z,[0,\beta])< \delta\right\}.
\]
where $\mathrm{dist}(\cdot,\cdot)$ denotes Euclidean distance.
It is clear that both $\mathbb{D}_b$ and $S_{\beta,\gamma}$ are simply connected.

%The correctness and efficiency of polynomial interpolation technique rely on zero-freeness property, which will be defined as follows.

The following is the key property of zero-freeness for complex-valued functions.
\begin{definition}[zero-freeness]
Let $M>0$ be finite positive real.
A holomorphic function $f$ on a simply connected region $\Omega\subseteq\mathbb{C}$ is \emph{$M$-zero-free} on $\Omega$ if
%constant, $\Omega$ be a simply connected region and $f(\cdot)$ be a holomorphic function on $\Omega$. $f(\cdot)$ is $M$-zero-free on $\Omega$ if
%$\log f$ is well defined on $\Omega$ and
%$0\not\in f(\Omega)$ and
$\abs{\log f(z)} \le M$ for all $z \in \Omega$.
\end{definition}

Notice that the zero-freeness of $f$ on $\Omega$ implies that $f$ is non-vanishing on the same region. A definition of $\log f$ is assumed in the context when this concept is used.

For any polynomial $p \in \mathbb{C}[z]$ that does not vanish on $\mathbb{D}$, the polynomial $p$ automatically exhibits the above zero-freeness property with a bounded gap on $\mathbb{D}_{b}$ for any $b \in (0,1)$.
\begin{lemma}\label{lm:poly-bound}
    Let $p \in \mathbb{C}[z]$ be a polynomial of degree $d$, and let $b \in (0,1)$.
    If $p(z) \neq 0$ for all $z \in \mathbb{D}$, then $p$ is $M$-zero-free on $\mathbb{D}_b$ for $M=d \ln \frac{1}{1-b} + \abs{\log p(0)}$.
%    $\abs{\log p(z)} \le d \log \frac{1}{1-b} + \abs{\log p(0)}$ for any $z \in \mathbb{D}_b$.
\end{lemma}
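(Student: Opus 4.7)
The plan is to reduce the bound on $|\log p|$ to a sum of bounds on individual linear factors, each of which is estimated by the standard power-series estimate $|\log(1-w)|\le\ln\frac{1}{1-|w|}$.

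First, since $p$ is zero-free on $\mathbb{D}$, in particular $p(0)\ne 0$, and we can write
\[
  p(z)\;=\;p(0)\prod_{i=1}^{d}\left(1-\frac{z}{\alpha_i}\right),
\]
where $\alpha_1,\dots,\alpha_d\in\mathbb{C}$ are the roots of $p$ (with multiplicity). The zero-freeness assumption $p(z)\ne 0$ on $\mathbb{D}$ translates to $|\alpha_i|\ge 1$ for every $i$.

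Next, I would fix the branch. Each factor $1-z/\alpha_i$ is nonvanishing on $\mathbb{D}$ and equals $1$ at $z=0$, so by the integral formula \eqref{eq:log-f} with $z_0=0$ and $c_0=0$ it has a well-defined holomorphic logarithm $\log(1-z/\alpha_i)$ on $\mathbb{D}$ (hence on $\mathbb{D}_b\subseteq\mathbb{D}$) vanishing at the origin. Summing these logarithms with $\log p(0)$ produces a holomorphic antiderivative of $p'/p$ on $\mathbb{D}_b$ that agrees with $\log p(0)$ at $z=0$; by uniqueness of the logarithm, this equals $\log p(z)$ on $\mathbb{D}_b$, so
\[
  \log p(z)\;=\;\log p(0)+\sum_{i=1}^{d}\log\left(1-\frac{z}{\alpha_i}\right).
\]

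Finally I would bound each summand. For $z\in\mathbb{D}_b$ and each $i$, we have $|z/\alpha_i|\le b<1$, so the Taylor expansion $\log(1-w)=-\sum_{k\ge 1}w^k/k$ gives
\[
  \left|\log\!\left(1-\frac{z}{\alpha_i}\right)\right|
  \;\le\;\sum_{k\ge 1}\frac{|z/\alpha_i|^k}{k}
  \;=\;\ln\frac{1}{1-|z/\alpha_i|}
  \;\le\;\ln\frac{1}{1-b}.
\]
Summing over the $d$ factors and applying the triangle inequality yields
\[
  |\log p(z)|\;\le\;|\log p(0)|+d\,\ln\frac{1}{1-b}\;=\;M,
\]
as required. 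None of the steps looks like a real obstacle; the only point requiring any care is the branch-of-logarithm bookkeeping, i.e.\ checking that the principal branches of the factor logarithms assemble into the same $\log p$ determined by \eqref{eq:log-f}, which is immediate once both sides are seen to be antiderivatives of $p'/p$ with the same value at $0$.
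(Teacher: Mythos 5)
Your proof is correct and follows essentially the same route as the paper's: factor $p(z)=p(0)\prod_i(1-z/\alpha_i)$, express $\log p$ as the sum of factor logarithms, and bound each by the power-series estimate $|\log(1-w)|\le \ln\frac{1}{1-|w|}\le\ln\frac{1}{1-b}$. The only difference is that you explicitly justify the branch-of-logarithm bookkeeping, which the paper leaves implicit.
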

\begin{proof}
    Let $\zeta_1,\zeta_2,\ldots,\zeta_d \in \mathbb{C}$ denote the roots of polynomial $p$.
    For any $z \in \mathbb{D}_b$,
    \begin{align*}
        \abs{\log p(z)} = \abs{\log p(0) + \sum_{j=1}^d \log \tp{1-\frac{z}{\zeta_j}}} \le\abs{\log p(0)} - \sum_{j=1}^d \ln \tp{1-\abs{\frac{z}{\zeta_j}}} \le \abs{\log p(0)} + d \ln \frac{1}{1-b}.
    \end{align*}
The two inequalities are due to that all $\abs{\zeta_j}>1$ since $p(z) \neq 0$ for all $z \in \mathbb{D}$.
\end{proof}

The following lemma of Barvinok says that any holomorphic function on $\mathbb{D}$ can be approximated by its truncated Taylor expansion if it is zero-free on $\mathbb{D}$.

\begin{lemma}[\cite{barvinok-comb}]\label{lm:Barvinok}
    % Let $b>1$ and $\epsilon, M > 0$ be constants and $f$ be an $M$-zero-free entire function on $D_b$. Then $T_m(z)=\sum_{k=0}^{m} \frac{g^{(k)}(0)}{k!} z^k$ is an estimation of $g(z)=\log f(z)$ at $z=1$ within $\epsilon$-additive error, where $m=O(\log \frac{ M}{\epsilon})$.
    Let $g:\mathbb{D}\to\mathbb{C}$ be holomorphic and $M>0$.
    If $|g(z)|\le M$ for all $z\in\mathbb{D}$, then for any $z \in \mathbb{D}$ and any $m \in \mathbb{N}^+$,
    \begin{align*}
        \abs{g(z) - \sum_{k=0}^{m} \frac{g^{(k)}(0)}{k!} z^k} \le \frac{M}{\delta} (1-\delta)^{m+1},
    \end{align*}
    where $\delta = \mathrm{dist}(z,\partial \mathbb{D})$ represents the Euclidean distance between $z$ and the boundary of unit disc.
\end{lemma}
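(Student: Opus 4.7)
The plan is to apply Cauchy's integral formula to control the Taylor coefficients of $g$ at the origin and then bound the truncation error by a geometric tail. This is the standard route for quantitative Taylor approximation of bounded holomorphic functions; the only work is to get the constant right.

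First I would observe that since $g$ is holomorphic on $\mathbb{D}$ and satisfies $|g(w)| \le M$ throughout, Cauchy's estimate on any circle of radius $r < 1$ centered at the origin gives the coefficient bound
\[
\left|\frac{g^{(k)}(0)}{k!}\right| = \left|\frac{1}{2\pi i} \oint_{|w|=r} \frac{g(w)}{w^{k+1}}\, dw\right| \le \frac{M}{r^k}.
\]
Taking $r \to 1^{-}$ then yields the clean uniform bound $|g^{(k)}(0)/k!| \le M$ for every $k \ge 0$.

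Second, I would use holomorphy of $g$ on $\mathbb{D}$ to expand $g(z) = \sum_{k=0}^{\infty} \frac{g^{(k)}(0)}{k!} z^k$ and rewrite the truncation error as the tail $\sum_{k=m+1}^{\infty} \frac{g^{(k)}(0)}{k!} z^k$. Since $\delta = \mathrm{dist}(z,\partial\mathbb{D}) = 1 - |z|$, we have $|z| = 1 - \delta \in (0,1)$, so the triangle inequality together with the coefficient bound gives termwise estimates of the form $M(1-\delta)^k$. Summing the geometric series from $k = m+1$ yields $M(1-\delta)^{m+1}/\delta$, which is exactly the claimed bound.

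The proof is essentially routine; I would not expect a serious obstacle. The only delicate point is that $|g| \le M$ is given on the open disc $\mathbb{D}$ rather than on its closure, so one cannot integrate directly on the boundary circle; instead one integrates on circles of radius $r < 1$ and passes to the limit $r \to 1^{-}$, which is harmless since the left-hand side of the Cauchy estimate does not depend on $r$. Everything else reduces to the geometric series bound $\sum_{k \ge m+1} (1-\delta)^k = (1-\delta)^{m+1}/\delta$.
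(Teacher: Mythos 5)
Your proof is correct. The paper cites this result from Barvinok's book without reproducing the argument, and your route is the standard one that Barvinok himself uses: Cauchy estimates on circles of radius $r<1$, letting $r\to 1^{-}$ to get the uniform coefficient bound $|g^{(k)}(0)/k!|\le M$, and then summing the geometric tail $\sum_{k\ge m+1}M(1-\delta)^k = M(1-\delta)^{m+1}/\delta$. You correctly flagged the only delicate point, namely that the boundedness is given on the open disc, so the contour integral must be taken at radius $r<1$ and then pushed to the limit.
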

In particular, when the above lemma is applied to $g=\log f$ for some holomorphic $f:\mathbb{D}\to\mathbb{C}\setminus\{0\}$,
one can obtain a multiplicative approximation of $f$ on $\mathbb{D}$ assuming zero-freeness of $f$ on $\mathbb{D}$.
%In above lemma, the truncated Taylor expansion gives an additive approximation of $g=\log f$, and hence a multiplicative approximation of function $f$.
%
To make such approximation effective, we should be able to compute the Taylor coefficients of $g=\log f$.
%Let's remark that we could indeed approximately evaluate $f$ at $z = \beta$, if $f$ exhibits zero-freeness property on $S_{\beta,\gamma}$, which is formally stated in the following lemma.
%
% Let's remark that \Cref{lm:Barvinok} in fact provides a quasi-polynomial time algorithm for estimation of $f(z)$ on $\mathbb{D}$ if $f^{(k)}(0)$ can be determined in $n^{O(k)}$ time.
%Let $\Omega$ be a convex region containing the origin.
%In order to estimate the $f(z)$ for all $z \in \Omega$, Barvinok further explicitly constructed a bounded-degree polynomial $p_{\beta,\delta}$ that maps $\mathbb{D}$ to $S_{\beta,\gamma}$ for arbitrary $\beta \in \mathbb{C}$ and $\delta>0$.
%
%The major challenge in Barvinok's framework is to calculate the coefficients of $\log f$ for a given holomorphic function $f$. In fact, it can be resolved by Newton's identity, which allows us to compute coefficients of $\log f(z)$ via coefficients of $f(z)$.
%In order to evaluate coefficients in Taylor's series, we will prove the following lemma which reveals the relationship between coefficients in Taylor series of $f$ and $\log f$.

The following \emph{Newton's identity} relates the Taylor coefficients of $g=\log f$ to those of $f$.

\begin{lemma}[Newton's identity]\label{lm:Newton}
    Let $f(z)=\sum_{k=0}^{+\infty} f_k z^k$ be an entire function such that $f(z) \neq 0$ for all $z \in \mathbb{D}$.
    Then $g(z)=\log f(z)=\sum_{k=0}^{+\infty} g_k z^k$ is well-defined on $\mathbb{D}$, and $$n g_n = nf_n - \sum_{k=1}^{n-1}kg_k f_{n-k}.$$
\end{lemma}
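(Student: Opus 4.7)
The plan is to derive the recurrence from the differential identity $g'(z)\, f(z) = f'(z)$, which follows by differentiating $f(z) = \exp(g(z))$ via the chain rule. First I would confirm that $g=\log f$ is well-defined and holomorphic on $\mathbb{D}$: since $f$ is entire and non-vanishing on the simply connected domain $\mathbb{D}$, the branch of $\log f$ exists via the formula~\eqref{eq:log-f} applied with a fixed base point, and so $g$ admits a convergent Taylor series $g(z)=\sum_{k\ge 0} g_k z^k$ on $\mathbb{D}$.

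Next, I would write out both sides of $g'(z) f(z) = f'(z)$ as power series. On the right, $f'(z)=\sum_{k\ge 1} k f_k z^{k-1}$; on the left, the Cauchy product of $g'(z)=\sum_{k\ge 1} k g_k z^{k-1}$ and $f(z)=\sum_{j\ge 0} f_j z^j$ contributes $\sum_{k=1}^{n} k g_k f_{n-k}$ to the coefficient of $z^{n-1}$. Matching coefficients of $z^{n-1}$ gives the identity
\begin{align*}
\sum_{k=1}^{n} k g_k f_{n-k} = n f_n.
\end{align*}
Isolating the $k=n$ summand yields $n g_n f_0 = n f_n - \sum_{k=1}^{n-1} k g_k f_{n-k}$, which recovers the stated recurrence once $f_0=1$ is assumed (as is the natural normalization in this context, corresponding to $g_0=0$).

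There is no real obstacle here beyond some bookkeeping: the only point that deserves care is justifying the multiplication of the two series, which is standard since both are absolutely convergent on any closed subdisc of $\mathbb{D}$ and we are extracting only finitely many coefficients. The conceptual content is entirely captured by the identity $g' = f'/f$, and the rest is a routine comparison of Taylor coefficients.
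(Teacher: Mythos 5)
Your proof is correct and takes essentially the same approach as the paper: both start from the differential identity $f' = g'f$ and compare Taylor coefficients, with the paper phrasing the coefficient extraction as the Leibniz rule applied to $(g'f)^{(n-1)}(0)$ and you phrasing it as a Cauchy product, which are the same computation. You are also right that the stated recurrence implicitly uses $f_0 = 1$; the paper's proof stops at $nf_n = \sum_{k=1}^{n}kg_kf_{n-k}$ without commenting on this, so your remark is a small but fair clarification.
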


\begin{proof}
    By the definition of $g=\log f$, we have $f'=g'f$. Therefore,
    \[
        \begin{aligned}
            n f_n = \frac{1}{(n-1)!} f^{(n)}(0)
            = \frac{1}{(n-1)!} \sum_{k=0}^{n-1} \binom{n-1}{k} f^{(k)}(0)g^{(n-k)}(0)
            = \sum_{k=0}^{n-1} (n-k)g_{n-k}f_k
            = \sum_{k=1}^n kg_k f_{n-k}.
        \end{aligned}
    \]
\end{proof}

When the zero-free region is not unit disc, some preprocessing is needed.
The following polynomial transformation from $\mathbb{D}$ to any $S_{\beta,\gamma}$ is known.
\begin{lemma}[\cite{barvinok-comb}]\label{lm:poly}
    For any $\beta \in \mathbb{C}$, $\delta \in (0,1)$, there is an explicitly constructed polynomial $p_{\beta,\delta}$ of degree $d=d(\beta,\delta)$ satisfying
    \begin{itemize}
        \item $p_{\beta,\delta}(0)=0$ and $p_{\beta,\delta}(1-\delta_0)=\beta$ for some $\delta_0 \in (0,1)$;
        \item $p_{\beta,\delta}(\mathbb{D}) \subseteq S_{\beta,\gamma}$;
    \end{itemize}
\end{lemma}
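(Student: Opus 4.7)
The plan is to construct $p_{\beta,\delta}$ as a polynomial truncation of a biholomorphic map from $\mathbb{D}$ onto the strip $S_{\beta,\gamma}$, followed by a small linear correction to enforce the exact interpolation condition at $1-\delta_0$. First I would reduce to the case of real positive $\beta$ by writing $p_{\beta,\delta}(z)=(\beta/|\beta|)\,q(z)$, where $q$ is constructed for $|\beta|$ together with the strip $S_{|\beta|,\gamma}$; the target strip rotates covariantly under multiplication by $\beta/|\beta|$, so this reduction respects all three desired properties.

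Assuming $\beta>0$, the set $S_{\beta,\gamma}$ is a bounded simply connected Jordan domain containing $0$ in its interior, with $\beta$ lying on its real axis. By Riemann's mapping theorem together with Carath\'eodory's extension, there exists a biholomorphism $\phi:\mathbb{D}\to S_{\beta,\gamma}$ sending $0$ to $0$ and extending to a homeomorphism of closures. Exploiting the reflection symmetry of $S_{\beta,\gamma}$ across the real axis, I would normalize $\phi$ to commute with complex conjugation, which yields a unique $r_0\in(0,1)$ with $\phi(r_0)=\beta$. I would then fix a scaling parameter $\rho\in(r_0,1)$ and set $\phi_\rho(z)\triangleq\phi(\rho z)$; this function is holomorphic on the strictly larger disc $\mathbb{D}_{1/\rho}$, so its Taylor series converges uniformly on $\overline{\mathbb{D}}$ at a geometric rate. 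Let $p_N$ be its degree-$N$ Taylor truncation. Since $\phi_\rho(\overline{\mathbb{D}})=\phi(\overline{\mathbb{D}_\rho})$ is a compact subset of the open region $S_{\beta,\gamma}$, taking $N$ large enough (depending on $\beta$ and $\delta$) forces $p_N(\overline{\mathbb{D}})\subseteq S_{\beta,\gamma}$; by construction $p_N(0)=0$, and $p_N(r_0/\rho)$ can be made as close to $\beta$ as desired.

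To upgrade this approximate interpolation to the exact equality $p_N(1-\delta_0)=\beta$ at $\delta_0\triangleq 1-r_0/\rho\in(0,1)$, I would add the linear correction $c\,z$ with $c\triangleq(\beta-p_N(r_0/\rho))\cdot(\rho/r_0)$; this preserves the vanishing at $0$, enforces the correct value at $1-\delta_0$, and keeps the image inside $S_{\beta,\gamma}$ because $|c|\to 0$ as $N\to\infty$. The main obstacle I anticipate is not existence — which follows cleanly from the steps above — but the quantitative determination of the degree $d=d(\beta,\delta)$: by Cauchy estimates it is governed by $\rho^{-N}$ together with $\sup_{\overline{\mathbb{D}}}|\phi_\rho|$ and the Euclidean clearance between $\phi(\overline{\mathbb{D}_\rho})$ and $\partial S_{\beta,\gamma}$, all of which depend in a non-elementary way on the conformal geometry of the possibly very elongated strip (with aspect ratio roughly $|\beta|/\delta$). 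Since the lemma asks only for existence of some polynomial of some finite degree depending on $\beta$ and $\delta$, this quantitative point is handled by standard estimates on the Riemann map, or alternatively by replacing the Riemann construction with truncated Chebyshev expansions adapted to the elongated strip.
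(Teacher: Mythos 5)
Your argument is a genuinely different route from the paper's, and as an \emph{existence} proof it is logically sound: the Riemann/Carath\'eodory construction, the Schwarz-reflection normalization giving $r_0\in(0,1)$, the precomposition with $z\mapsto\rho z$ to gain a margin of analyticity beyond $\overline{\mathbb{D}}$, the compactness argument for the image of the truncation, and the vanishing linear correction are all correctly assembled. However, you have misread the requirements of the lemma. The statement asks for an \emph{explicitly constructed} polynomial, and this is not a cosmetic detail: the lemma feeds into \Cref{def:good-region} and \Cref{fact:convex}, where the algorithm must output the $\ell$-th Taylor coefficient of the transform $h_x$ in $\gamma^\ell\,\mathrm{poly}(\ell)$ time. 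A polynomial obtained by truncating the Riemann map of a stadium-shaped domain is not explicit in this sense --- the Riemann map of such a domain has no closed form, its Taylor coefficients are not efficiently computable, and the degree $N$ you need is itself controlled by conformal quantities (clearance $\varepsilon$, the radius $r_0$, the sup norm of $\phi_\rho$) that your argument does not bound constructively. You flag this concern at the end but then dismiss it with ``the lemma asks only for existence,'' which is precisely the wrong reading.

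The paper's proof avoids all of this by invoking Barvinok's concrete polynomial $q_\delta(z)=\bigl(\sum_{k=1}^n C^k/k\bigr)^{-1}\sum_{k=1}^n (Cz)^k/k$ with $C=1-\exp(-1/\delta)$ and $n=\lfloor(1+1/\delta)\exp(1+1/\delta)\rfloor$ (\Cref{lm:bar-transform}), which by design maps a disc of explicit radius $\rho>1$ into a strip around $[0,1]$ with explicit margins, then sets $p_{\beta,\delta}(z)=\beta\,q_{\delta'}(\rho' z)$ with $\delta'=\delta/(4|\beta|)$. Every coefficient and the degree are given by closed-form expressions in $\beta$ and $\delta$, which is what the downstream algorithm needs. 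To repair your approach you would have to replace the abstract Riemann map with a concrete conformal-type map of the elongated strip that admits an explicit, convergent series (this is essentially what the truncated $-\log(1-Cz)$ is doing in Barvinok's construction); the passing mention of ``truncated Chebyshev expansions adapted to the elongated strip'' gestures in that direction but is not developed enough to constitute a proof.
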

The proof of \Cref{lm:poly} is deferred to \Cref{sec:appendix-poly}.

\section{Approximation of Zero-Free Holomorphic Function}\label{sec:approximation-zero-free}
We now introduce an abstraction for partition functions,
namely, multiplicative holomorphic functions specified by a class of abstract structures called dependency graphs.

A \emph{dependency graph} is a vertex-labeled graph $G=(V,E,\+L)$,
where $(V,E)$ is an undirected simple graph,
and $\+L=(L_v)_{v\in V}$ assigns each vertex $v\in V$ a label $L_v$.
Two labeled graphs $G=(V,E,\+L)$ and $G'=(V',E',\+L')$ are isomorphic if there is a bijection $\phi:V\to V'$ such that the two simple graphs $(V,E)$ and $(V',E')$ are isomorphic under $\phi$ and $L_v=L'_{\phi(v)}$ for all $v\in V$.
Furthermore, we say that two labeled graphs $G=(V,E,\+L)$ and $G'=(V',E',\+L')$ are disjoint if $V\cap V'=\emptyset$.
%
% We call a set $\+G$ a \emph{family} of dependency graphs
% if $G \in \+G$ being a disjoint union of $G_1,G_2$ means that $G_1,G_2 \in \+G$.
% %
% Given a family $\+G$ of dependency graphs,  the induced-subgraph closure of $\+G$, denoted by $\+G^*$,  is defined as $\+G^*\triangleq\left\{G[S]\mid G=(V,E,\+L)\in\+G,S\subseteq V\right\}$, where $G[S]$ stands for the subgraph of $G=(V,E,\+L)$ induced by $S\subseteq V$ preserving labels.
A family $\+G$ of dependency graphs is called \emph{downward-closed} if for any $G=(V,E,\+L) \in \+G$ and any $S\subseteq V$ we have $G[S] \in \+G$, where $G[S]$ stands for the subgraph of $G$ induced by subset $S\subseteq V$ preserving labels.

We use $f_{\cdot}$ to denote an operator that maps each dependency graph $G$ in $\+G$ to an entire function $f_G:\mathbb{C}\to\mathbb{C}$ (i.e.~$f_G$ is holomorphic on $\mathbb{C}$),
such that $f_G$ gives the same entire function for isomorphic dependency graphs $G$.
Such an $f_{\cdot}$ is \emph{multiplicative} if for any $G$ that is disjoint union of $G_1,G_2$, we have $f_G=f_{G_1}f_{G_2}$.
\begin{definition}[boundedness]\label{def:good-function}
Let $\+G$ be a downward-closed family of dependency graphs.
Let $\alpha,\beta \ge 1$.
A multiplicative $f_{\cdot}$ is called \emph{$(\alpha,\beta)$-bounded} on $\+G$ if
%the following conditions hold:
%\begin{enumerate}
%\item $f_{\cdot}$ is multiplicative,  which means that for any $G\in\+G$ that is disjoint union of $G_1,G_2$, $f_G=f_{G_1}f_{G_2}$;
%\item
%There exists a sequence $(\lambda_{G,i})_{G \in \+G^*, i \in \mathbb{N}^+}$ of complexes,
%%where $\+G^*\triangleq\left\{G[S]\mid G=(V,E,\+L)\in\+G,S\subseteq V\right\}$ stands for the induced-subgraph closure of $\+G$,
%such that
for any $G=(V,E,\+L)\in\+G$, we have $f_G(0)\in\mathbb{R}^+$ and
        \begin{align*}
            f_G(z) = f_G(0) + \sum_{\ell=1}^{+\infty} \tp{\sum_{S \subseteq V} \lambda_{G[S],\ell} }z^\ell,
        \end{align*}
where the complex coefficients $(\lambda_{H,\ell})_{H\in\+G, \ell\in\mathbb{N}^+}$ are invariant for isomorphic dependency graphs $H$,
 %for any $H=(V_H,E_H,\+L_H)\in\+G^*$ and any $i\in\mathbb{N}^+$,
%satisfy that
%$\lambda_{H,\ell}$ takes the same value for isomorphic dependency graphs $H\in\+G^*$,
%furthermore, for any $H=(V_H,E_H,\+L_H)\in\+G^*$ and any $i\in\mathbb{N}^+$,
and satisfy that $\lambda_{H,\ell}\neq 0$ only if $|V_H|\le\alpha \ell$,
and each $\lambda_{H,\ell}$ can be calculated within  $\beta^\ell\cdot\mathrm{poly}(\ell)$ time.
%    \end{enumerate}
\end{definition}

For $(\alpha,\beta)$-bounded $f_{\cdot}$, it always holds that $f_G(0)\in\mathbb{R}^+$.
Then we always fix the definition of $\log f$ to be the one uniquely defined by Eq.\eqref{eq:log-f} with $z_0=0$ and $c_0=\ln(f(0))$ being real.
Such  $\log f$ is well defined within a neighborhood of the origin.

As we will formally verify in~\Cref{sec:generalize-BIGCP}, this notion of bounded holomorphic functions specified by dependency graphs generalizes the bounded induced graph counting polynomials (BIGCPs) of Patel and Regts~\cite{patel-polynomial}.
A major difference here is that $f_G$ may not be a polynomial of finite degree.
%
%Moreover, the dependency graph also gives a much more general and expressive abstraction.

%{\color{red}
%\begin{remark}
%As graph invariant, $f_G$ remains the same for isomorphic dependency graphs $G\in\+G$.
%%
%Then its coefficients $(\lambda_{H,i})_{H\in\+G^*,i\in\mathbb{N}^+}$ must also be  symmetric up to isomorphic dependency graphs $H\in\+G^*$.
%\end{remark}
%}

% \begin{condition}[bounded non-vanishing invariant mapping]\label{cond:non-vanishing-bounded-mapping}
% Let $\+G$ be a family of dependency graphs,
% and $f_{\cdot}$ be an invariant mapping that is $(\alpha,\beta)$-bounded  over $\+G$ with coefficients $(\lambda_{H,\ell})_{H\in\+G^*,\ell\in\mathbb{N}^+}$.
% %
% For a complex region $\Omega\subseteq\mathbb{C}$ that is simply connected and contains the origin,
% $f_G$ is non-vanishing on $\Omega$ for every $G\in\+G$.
% \end{condition}

We show that for $(\alpha,\beta)$-bounded $f_{\cdot}$, the approach of Patel and Regts~\cite{patel-polynomial} based on Newton's identity and local enumeration of connected subgraphs can efficiently compute Taylor coefficients of $\log f_G$, even though the function $f_G$ can now encode problems far beyond counting subgraphs.

%We first introduce the following subroutine designed for efficient calculation of coefficients, an extension of \cite{patel-polynomial}.
\begin{theorem}[efficient coefficient computing]
\label{thm:main}
%
%Let $\+G$ be a family of dependency graphs,
%and $\{f_G\}_{G\in\+G}$ be an $(\alpha,\beta)$-bounded graph invariant.
%%
%If there is a  simply connected region $\Omega\subseteq\mathbb{C}$ containing the origin such that
%for every $G\in\+G$, $f_G$ is non-vanishing on  $\Omega$,
%and $f_G(0) \in \mathbb{R}^+$,
%\ytodo{ Why? Is it enough? Guarantee this in bounded mapping?}
Let $\+G$ be a downward-closed family of dependency graphs, and $f_{\cdot}$ be $(\alpha,\beta)$-bounded on $\+G$ for $\alpha,\beta\ge 1$.
There exists a deterministic algorithm which given any $G\in\+G$ and $\ell \in \mathbb{N}^+$ as input,
outputs the  $\ell$-th coefficient of the Taylor series of $\log f_G$ at the origin in time $\widetilde{O}\tp{n(8\mathrm{e}\beta\Delta)^{\alpha\ell}}$, where $n$ is the number of vertices, $\Delta$ is the maximum degree of $G$, and  $\widetilde{O}(\cdot)$ hides  $\mathrm{poly}(\Delta, \ell, \log(n))$.
%
%    Furthermore, $f_G(0) \in \mathbb{R}^+$, then there exists a deterministic algorithm which outputs the $\ell$-th coefficient of the Taylor series of $\log f_G$ at origin in $O(...)$ time, where $\ell \in \mathbb{N}^+$ and $n$ is the size of $G$.
\end{theorem}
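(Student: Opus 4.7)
The plan is to lift the polynomial-time framework of Patel and Regts~\cite{patel-polynomial} to the present abstract setting. The first move is Newton's identity (\Cref{lm:Newton}), which reduces the task of computing the $\ell$-th Taylor coefficient $g_\ell$ of $\log f_G$ at the origin to computing the first $\ell$ Taylor coefficients $f_1,\dots,f_\ell$ of $f_G$; once those are in hand, $g_\ell$ follows in $\mathrm{poly}(\ell)$ arithmetic operations. By $(\alpha,\beta)$-boundedness, each Taylor coefficient of $f_G$ decomposes as
\[
f_k \;=\; \sum_{S\subseteq V,\,|S|\le\alpha k}\lambda_{G[S],k} \;=\; \sum_{F} N_F(G)\,\lambda_{F,k},
\]
where $N_F(G)$ denotes the number of vertex subsets $S\subseteq V$ with $G[S]\cong F$, and $F$ ranges over isomorphism types of labelled dependency graphs with $|V_F|\le\alpha k$.

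Next, I would pass from summing over all induced-subgraph types to summing over \emph{connected} types only. Multiplicativity of $f_{\cdot}$ implies that $\tilde g_G(z):=\log\tp{f_G(z)/f_G(0)}$ is additive over disjoint unions, hence for each $\ell$ the coefficient $g_\ell(G)=[z^\ell]\tilde g_G$ is an additive functional of $G$. Since $N_F(\cdot)$ is itself additive whenever $F$ is connected, a M\"obius-type inversion over the lattice of induced subgraphs yields an identity
\[
g_\ell \;=\; \sum_{\substack{F\text{ connected}\\|V_F|\le\alpha\ell}} c_{F,\ell}\,N_F(G),
\]
in which the universal coefficients $c_{F,\ell}$ depend only on the isomorphism type of $F$ and obey the triangular relation $c_{F_0,\ell}=[z^\ell]\tilde g_{F_0}-\sum_{F\subsetneq F_0\text{ conn}}c_{F,\ell}\,N_F(F_0)$. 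The size bound $|V_F|\le\alpha\ell$ is inherited from the support of $\lambda_{\cdot,k}$ for $k\le\ell$ together with the fact that Newton's identity at level $\ell$ mixes only $f_1,\dots,f_\ell$.

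With this reduction in place, the algorithm has two local enumeration phases. In the first, enumerate all connected induced subgraphs of $G$ of size at most $\alpha\ell$ via the standard tree-growing procedure; by the classical bound of roughly $(\mathrm{e}\Delta)^{\alpha\ell}$ connected induced subgraphs of that size per starting vertex, this produces at most $n(\mathrm{e}\Delta)^{\alpha\ell}$ candidates and yields all the counts $N_F(G)$. In the second, for each connected isomorphism type $F$ encountered, compute $c_{F,\ell}$ locally inside $F$ by evaluating $f_{F,k}=\sum_{T\subseteq V_F}\lambda_{F[T],k}$ for every $k\le\ell$ (each $\lambda_{F[T],k}$ in time $\beta^k\mathrm{poly}(k)$), running Newton's identity on $f_F$ to obtain $\tilde g_{F,\ell}$, and then applying the triangular inversion. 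Each type costs $\widetilde O\tp{2^{\alpha\ell}\beta^\ell}$, so summing $c_{F,\ell}N_F(G)$ over all connected types produces $g_\ell$ in total time $\widetilde O\tp{n(8\mathrm{e}\beta\Delta)^{\alpha\ell}}$, with the constant $8$ absorbing the $2^{\alpha\ell}$ enumeration over $T\subseteq V_F$ together with polynomial overheads.

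The step I expect to be the main obstacle is the connected-subgraph reduction in the second paragraph: additivity of $\tilde g_G$ is immediate from multiplicativity, but turning it into an \emph{explicit} linear combination $\sum_{F\text{ conn}}c_{F,\ell}N_F(G)$ with support restricted to $|V_F|\le\alpha\ell$ requires showing that every higher-order monomial $\prod_iN_{F_i}(G)$ arising when Newton's recurrence is naively expanded cancels by additivity, and that the residual linear part respects the size bound. A secondary subtlety, absent from the Patel-Regts setting, is that $f_G$ is now merely an entire function rather than a polynomial of known finite degree, so every coefficient manipulation must stay at the level of formal power series truncated at order $\ell$ and never need access to the full function $f_G$.
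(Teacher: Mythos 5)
Your plan tracks the paper's proof closely: Newton's identity reduces the $\ell$-th coefficient of $\log f_G$ to the first $\ell$ coefficients of $f_G$; the key structural fact is that this coefficient is a universal linear combination over connected induced-subgraph types of size at most $\alpha\ell$; and the algorithm enumerates such subgraphs via the Borgs et al.\ bound and computes each coefficient by a local dynamic program. This is exactly the template the paper carries out in \Cref{lm:log-property}, \Cref{lm:list-connect} and \Cref{lm:count-lambdap}, writing your $c_{F,\ell}$ as $\zeta_{F,\ell}$ and your $\sum_{F\text{ conn}}c_{F,\ell}N_F(G)$ as $\sum_{S\subseteq V}\zeta_{G[S],\ell}$.

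The genuine gap is precisely the step you yourself flag as the main obstacle, and you leave it unresolved. You propose to define $c_{F_0,\ell}$ by running Newton's identity on the small function $f_{F_0}$ and then M\"obius-inverting over connected proper sub-types, and you gesture at a monomial-cancellation argument for the connectivity and size constraints without supplying one. The paper instead \emph{defines} $\zeta_{H,\ell}$ by the explicit recursion \eqref{eq:lambda-prime}, obtained by pushing Newton's identity through the decomposition $f_{G,k}=\sum_{S}\lambda_{G[S],k}$, and then proves both properties by induction on $\ell$: the size bound $|V_H|\le\alpha\ell$ follows because in the recursion any cover $S\cup T=V_H$ with $|V_H|>\alpha\ell$ forces $|S|>\alpha s$ or $|T|>\alpha(\ell-s)$, killing every term; and the connectivity support follows from a minimality argument, taking a minimal disconnected $S^*$ with $\zeta_{G[S^*],\ell}\neq 0$, splitting $S^*=L\cup R$ along the disconnection, and comparing the additive expansion $g_{G[S^*],\ell}=g_{G[L],\ell}+g_{G[R],\ell}$ against $g_{G[S^*],\ell}=\zeta_{G[S^*],\ell}+\sum_{S\subseteq L}\zeta_{G[S],\ell}+\sum_{S\subseteq R}\zeta_{G[S],\ell}$, the cross terms vanishing by minimality. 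Your M\"obius route could likely be repaired by the same minimality device, and you are right that the infinite-degree issue is harmless since Newton at level $\ell$ only touches $f_1,\dots,f_\ell$; but as written, the structural lemma that carries the whole theorem is a promissory note, not a proof.
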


The theorem will be proved in \Cref{section-efficient-coefficient-computing}.

%\ytodo{need revision}
%Let $\log f_G(z) = \sum_{\ell=0}^{+\infty} g_{G,\ell} z^\ell$ for all $G \in \+F$. The subroutine in \Cref{thm:main} determines $g_{G,\ell}$ in polynomial time when $\ell = O(\log n)$.
%% Hence, we can estimate $f_G(z)$ within $\epsilon$-multiplicative error in polynomial time so long as the truncated Taylor's series of $\log f_G$ with first $O(\log n)$ terms approximate $\log f_G$ within $\epsilon$-additive error, which can be satisfied in most occasion.
%% Combined with Barvinok's framework\cite{barvinok-comb}, we may estimate the partition function on $\mathbb{D}$.
%% % We will now state the application of this subroutine in both classical and quantum counting problems.
%Combined with Barvinok's framework\cite{barvinok-comb}, we can design a deterministic algorithm for estimating the bounded graph function on the unit disk $\mathbb{D}$ if the bounded graph function exhibits zero-freeness on $\mathbb{D}$.
%To deal with more interesting region $\Omega \subseteq \mathbb{C}$, a possible approach is to connect $\Omega$ and $\mathbb{D}$ via holomorphic function, which we will discuss in detail.

Due to Riemann mapping theorem in complex analysis,
for any proper and simply connected region $\Omega \subset \mathbb{C}$ and any point $z_0 \in \Omega$,
there is a biholomorphic function $h$ from $\mathbb{D}$ to $\Omega$ such that $h(0)=z_0$.
%Though the existence of such function is guaranteed, it is generally hard to explicitly construct such biholomorphic function for arbitrary simply connected region $\Omega$.
% We now introduce the definition of embedding function, which relaxes the requirement to some extend.
%
% % \ztodo{embedding perhaps does not fit here as embedding functions are supposed to be injective.}
% \begin{definition}[Embedding function]
%     Let $0 \in \Omega$ be a simply connected region and $z_0 \in \Omega$.
%     A holomorphic function $h:\mathbb{D} \rightarrow \Omega$ is called a $\gamma$-projection on $\Omega$ if
%     \begin{enumerate}
%         \item $h(0) = 0$;
%         \item Let $h(z)=\sum_{k=1}^{+\infty} h_k z^k$ be the Taylor expansion at $z=0$, then $h_k$ can be determined within $\gamma^k \mathrm{poly}(k)$ time.
%     \end{enumerate}
% \end{definition}
%\ztodo{subject to move}
%We now introduce the definition of $\gamma$-good region.
We are interested in those good regions $\Omega \subseteq \mathbb{C}$ such that a transformation $h$ from $\mathbb{D}$ to $\Omega$ does not only exist but also has efficiently computable Taylor coefficients.
\begin{definition}[good region]\label{def:good-region}
Let $\gamma\ge 1$.
%Let $\Omega \subseteq \mathbb{C}$ be a simply connected region that contains the origin, and let $\gamma\ge 1$ be real.
A simply connected region $\Omega \subseteq \mathbb{C}$ is \emph{$\gamma$-good}
if $0\in\Omega$ and given any $x \in \Omega$, there exists a holomorphic function $h_x$ on $\mathbb{D}$ along with a $z_x\in\mathbb{D}$   such that:%together with a complex value $z_x \in \mathbb{D}$ such that
    \begin{enumerate}
        \item $h_x(\mathbb{D})\subseteq \Omega$, $h_x(0)=0$ and $h_x(z_x) = x$;
        \item %let $h_x(z) = \sum_{\ell=1}^{+\infty} h_{x,\ell} z^\ell$ be the Taylor expansion at $z=0$,
        for every $\ell\in\mathbb{N}^+$, the $\ell$-th Taylor coefficient $h_{x,\ell}$ of $h_x$ at 0 can be determined in $\gamma^\ell \mathrm{poly}(\ell)$ time.
    \end{enumerate}
Given a  $\gamma$-good region $\Omega \subseteq \mathbb{C}$ and $\delta\in(0,1)$, we use $\Omega_{\delta}$ to denote the set of all $x\in\Omega$ with $z_x\in\mathbb{D}_{1-\delta}$.
\end{definition}

Any convex region is $1$-good, given access to an oracle that determines the distance to its boundary, which will be formally proved in \Cref{sec:appendix-poly}.
\begin{fact}\label{fact:convex}
Let $\Omega \subseteq \mathbb{C}$ be a convex region.
Suppose that for any $z\in\Omega$, $\mathrm{dist}(z,\partial \Omega)$ can be determined in $O(1)$ time.
Then $\Omega$ is $1$-good.
\end{fact}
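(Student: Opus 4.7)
The plan is to instantiate $h_x$ as the polynomial $p_{x,\delta}$ supplied by \Cref{lm:poly} for a suitably small strip half-width $\delta$, and to set $z_x = 1-\delta_0$. The three conclusions of \Cref{lm:poly}, namely $p_{x,\delta}(0)=0$, $p_{x,\delta}(1-\delta_0)=x$ and $p_{x,\delta}(\mathbb{D})\subseteq S_{x,\delta}$, then match \Cref{def:good-region} with $\gamma = 1$, provided we can (a) choose $\delta>0$ so that $S_{x,\delta}\subseteq \Omega$ using only $O(1)$ queries to the boundary-distance oracle, and (b) extract the $\ell$-th Taylor coefficient of $p_{x,\delta}$ in $\mathrm{poly}(\ell)$ time.

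For task (a), I would query the oracle for $d_0 = \mathrm{dist}(0,\partial\Omega)$ and $d_x = \mathrm{dist}(x,\partial\Omega)$, and set $\delta = \frac{1}{2}\min(d_0,d_x)$. The key geometric input is the concavity of the boundary-distance function on a convex region: if the open balls of radii $d_p$ and $d_q$ around $p$ and $q$ both lie in $\Omega$, then convexity of $\Omega$ and of Euclidean balls implies that the open ball of radius $(1-t)d_p + t d_q$ around $(1-t)p + tq$ also lies in $\Omega$ for every $t\in[0,1]$. Applied with $(p,q)=(0,x)$, every $y\in[0,x]$ satisfies $\mathrm{dist}(y,\partial\Omega)\ge \min(d_0,d_x)\ge 2\delta$, so every point within Euclidean distance $\delta$ of $[0,x]$ lies in $\Omega$, i.e.\ $S_{x,\delta}\subseteq\Omega$ as required.

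Task (b) is expected to be immediate: since \Cref{lm:poly} constructs $p_{x,\delta}$ explicitly (with the details deferred to \Cref{sec:appendix-poly}) as a polynomial whose coefficients are elementary functions of $x$ and $\delta$, its $\ell$-th Taylor coefficient at the origin is computable by straightforward polynomial arithmetic in $\mathrm{poly}(\ell)$ time, matching the required $\gamma^\ell\mathrm{poly}(\ell)=\mathrm{poly}(\ell)$ budget. The only substantive step is the concavity argument in (a); once that is in hand the statement is an immediate repackaging of \Cref{lm:poly} into the language of \Cref{def:good-region}, and the degenerate case $x=0$ can be handled separately by taking $h_x\equiv 0$ and $z_x=0$.
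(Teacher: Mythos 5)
Your proposal is correct and takes essentially the same route as the paper's own proof: invoke the explicit polynomial $p_{x,\delta}$ from \Cref{lm:poly}, use concavity of the boundary-distance function on a convex set (the paper states this as $\mathrm{dist}([z_l,z_r],\partial\Omega)=\min(\mathrm{dist}(z_l,\partial\Omega),\mathrm{dist}(z_r,\partial\Omega))$) to pick a strip half-width $\delta$ with $S_{x,\delta}\subseteq\Omega$, and observe that the Taylor coefficients of $p_{x,\delta}$ are computable in $\mathrm{poly}(\ell)$ time. If anything, you are more explicit than the paper (which never actually says how $\delta$ is chosen); the only tiny fix is to cap $\delta$ at, say, $\min\bigl(\tfrac12,\tfrac12\min(d_0,d_x)\bigr)$ since \Cref{lm:poly} requires $\delta\in(0,1)$.
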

% \ztodo{convex->good}
% Let $\beta \in \mathbb{C}$ and $\delta > 0$. Define the strip region $S_{\beta,\gamma}$ as
% \begin{align*}
%     S_{\beta,\gamma} = \{z \in \mathbb{C} \mid \mathrm{dist}(z,[0,\beta]) < \delta\},
% \end{align*}
% where $[0,\beta]=\{t\beta \mid t\in [0,1]\}$.
% We will now show that $S_{\beta,\gamma}$ is indeed a $1$-good region.

%We now show a polynomial algorithm which estimates the zero-free bounded graph function on a good region $\Omega \subseteq \mathbb{C}$.

Applying our \Cref{thm:main} to $\log f_G$, combining with Barvinok's approximation (\Cref{lm:Barvinok}) and our notion of good region, we obtain the following theorem for multiplicative approximation of zero-free $f_G$'s.

\begin{theorem}
[efficient $\epsilon$-approximation]
\label{cr:algo}
Let $\alpha,\beta,\gamma\ge 1$.
Let $\+G$ be a downward-closed family of dependency graphs, and $f_{\cdot}$ be $(\alpha,\beta)$-bounded on $\+G$.
%Let $\+G$ be a downward-closed family of dependency graph, and $f_{\cdot}$ be an $(\alpha,\beta)$-bounded graph function over $\+G$ for some constant $\alpha,\beta \ge 1$.
Let $\Omega \subset \mathbb{C}$ be a $\gamma$-good region.
%with holomorphic transformation $(h_x)_{x \in \Omega}$ and complex values $(z_x)_{x \in \Omega}$.
%
%Assume that for every $G\in\+G$, $f_G$ is $M$-zero-free on a $\gamma$-good region $\Omega \subset \mathbb{C}$.

For any $\delta\in(0,1)$, there is a deterministic algorithm which takes $G \in \+G$, $x\in \Omega_{\delta}$ and an error bound $\epsilon \in (0,1)$  as input,
and outputs an estimation $\widehat{f_G(x)}$ of $f_G(x)$ within $\epsilon$-multiplicative error:
\[
1-\epsilon \le \frac{\abs{\widehat{f_G(x)}}}{\abs{f_G(x)}}\le 1+\epsilon,
\]
in $\widetilde{O}\tp{n\tp{\frac{M}{\delta \epsilon}}^C}$ time with $C=\frac{\alpha}{\delta}\ln (8\mathrm{e}\Delta(\beta+\gamma))$, where $\Delta$ is the maximum degree of $G$,
if provided that $f_G$ is $M$-zero-free on $\Omega \subset \mathbb{C}$ and the value of $f_G(0)$ is also provided to the algorithm.
%There exists a deterministic algorithm which takes $G \in \+G$, $\epsilon \in (0,1)$, $\lambda \in \Omega$ as input, and outputs an estimation of $f_G(\lambda)$ within $\epsilon$-multiplicative error in $\tilde{O}\tp{n\tp{\frac{M}{\delta \epsilon}}^C}$ time with $C=\frac{\alpha}{\delta} \log (8e(\beta+\gamma) \Delta) $ provided $f_G(0)$, if $\abs{\log f_G(z)} \le M$ holds for all $z \in \Omega$ and $\abs{z_\lambda}<1-\delta$ for some $\delta \in \mathbb{R}^+$.
\end{theorem}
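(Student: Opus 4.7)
The plan is to pull $f_G$ back to the unit disk via the map $h_x$ furnished by the good-region hypothesis, and then apply Barvinok's polynomial interpolation (\Cref{lm:Barvinok}) together with the coefficient-computing \Cref{thm:main}. I would set $F(z) \defeq f_G(h_x(z))$ and $g(z) \defeq \log F(z)$ on $\mathbb{D}$, where the branch of $\log$ is the one fixed by $g(0) = \ln f_G(0) \in \mathbb{R}$. Since $h_x(\mathbb{D}) \subseteq \Omega$ and $f_G$ is $M$-zero-free on $\Omega$, the composite $F$ is non-vanishing on the simply connected $\mathbb{D}$, so $g$ is holomorphic with $\abs{g(z)} \le M$ there. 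Because $x = h_x(z_x)$ and $z_x \in \mathbb{D}_{1-\delta}$ implies $\mathrm{dist}(z_x,\partial\mathbb{D}) \ge \delta$, \Cref{lm:Barvinok} tells me that the truncated Taylor series $\widehat g \defeq \sum_{k=0}^{m} g_k z_x^k$ approximates $g(z_x) = \log f_G(x)$ with additive error at most $\frac{M}{\delta}(1-\delta)^{m+1}$. Taking $m = \ctp{\frac{1}{\delta}\ln\frac{2M}{\delta\epsilon}}$ forces this below $\epsilon/2$, whence $\widehat{f_G(x)} \defeq \exp(\widehat g)$ satisfies $\abs{\widehat{f_G(x)}}/\abs{f_G(x)} = \mathrm{e}^{\Re(\widehat g - g(z_x))} \in [\mathrm{e}^{-\epsilon/2},\mathrm{e}^{\epsilon/2}] \subseteq [1-\epsilon,1+\epsilon]$, using $\mathrm{e}^{\pm\epsilon/2}\in[1-\epsilon,1+\epsilon]$ for $\epsilon\in(0,1)$.

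The algorithmic core is to compute the Taylor coefficients $g_0,\ldots,g_m$ of $g$ at $0$, which I would do by composing two formal power series. First, $m$ invocations of \Cref{thm:main} produce the Taylor coefficients $\tilde g_1,\ldots,\tilde g_m$ of $\log f_G$ at $0$; together with $\tilde g_0 = \ln f_G(0)$ supplied by the given value of $f_G(0)$, this gives the first $m{+}1$ terms of the power series of $\log f_G$ at $0$, which converges in some neighborhood of $0$ because $f_G(0)\neq 0$. Second, the $\gamma$-good-region hypothesis yields $h_{x,1},\ldots,h_{x,m}$, the Taylor coefficients of $h_x$ at $0$. Since $h_x(0)=0$, the identity $g(z) = \tilde g_0 + \sum_{\ell=1}^{\infty}\tilde g_\ell\, h_x(z)^\ell$ (which agrees with $g$ as a germ at $0$) has $\ell$-th summand contributing only to coefficients of degree $\ge \ell$, so truncation modulo $z^{m+1}$ involves only $\ell \le m$; the desired $g_0,\ldots,g_m$ are then recovered by iteratively computing $h_x(z)^\ell \bmod z^{m+1}$ in $\mathrm{poly}(m)$ arithmetic operations.

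For the running time, \Cref{thm:main} costs $\widetilde O\br{n(8\mathrm{e}\beta\Delta)^{\alpha\ell}}$ per $\tilde g_\ell$, summing to $\widetilde O\br{n(8\mathrm{e}\beta\Delta)^{\alpha m}}$ after absorbing factors of $m$ into $\widetilde O$. Computing the $h_{x,\ell}$'s contributes $O(\gamma^m\mathrm{poly}(m))$, while the polynomial composition and the final exponentiation are of lower order. Using $\alpha,\Delta \ge 1$, both $(8\mathrm{e}\beta\Delta)^{\alpha}$ and $\gamma$ are bounded by $(8\mathrm{e}\Delta(\beta+\gamma))^{\alpha}$, so the total time is $\widetilde O\br{n(8\mathrm{e}\Delta(\beta+\gamma))^{\alpha m}}$. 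Substituting $m = \Theta\br{\tfrac{1}{\delta}\ln\tfrac{M}{\delta\epsilon}}$ rewrites this as $\widetilde O\br{n\br{\tfrac{M}{\delta\epsilon}}^{C}}$ with $C = \tfrac{\alpha}{\delta}\ln(8\mathrm{e}\Delta(\beta+\gamma))$, matching the stated bound.

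The main subtlety I expect is more bookkeeping than conceptual: verifying that the branch of $g = \log F$ selected by $g(0) = \ln f_G(0)$ truly agrees with the formal composition $\tilde g \circ h_x$, so that computing the $g_\ell$ by power-series composition is correct despite $\log f_G$ possibly having only a small radius of convergence around $0$. This reduces to the observation that both $g$ and $\tilde g\circ h_x$ are logarithms of $F$ near $0$ taking the same real value at $0$, hence agree as germs; uniqueness of analytic continuation on the simply connected disk $\mathbb{D}$ then guarantees the truncated computation extracts the correct coefficients. Once this is in place, correctness and the claimed running time follow by routine combination of \Cref{lm:Barvinok}, \Cref{thm:main}, and the elementary estimate $\mathrm{e}^{\pm\epsilon/2} \in [1-\epsilon,1+\epsilon]$.
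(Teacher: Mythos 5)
Your proof is correct, and the high-level pipeline (map to $\mathbb{D}$, Barvinok truncation of $\log$, exponentiate) matches the paper, but you organize the computation in a genuinely different order. The paper first forms the composite $f^h_G = f_G \circ h_x$, explicitly verifies that $f^h_\cdot$ is $(\alpha,\beta+\gamma)$-bounded on $\+G$ by constructing new coefficients $\lambda^h_{H,k}=\sum_{\ell=1}^k h_k^{(\ell)}\lambda_{H,\ell}$ (using $h_x(0)=0$ to preserve the support bound $|V_H|\le\alpha k$), and only then invokes \Cref{thm:main} on the transformed function. You instead invoke \Cref{thm:main} directly on the original $f_\cdot$, which is already known to be $(\alpha,\beta)$-bounded, to extract the Taylor coefficients of $\log f_G$ at $0$, and afterwards carry out a formal power-series composition with $h_x$, relying on $(\log f_G)\circ h_x = \log(f_G\circ h_x)$ as germs at $0$ (which you correctly justify via the same-value-at-$0$ plus analytic-continuation argument) and on $h_x(0)=0$ so that truncation to order $m$ needs only $\ell\le m$. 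Both routes rest on the same two facts ($h_x(0)=0$ and \Cref{thm:main}) and give the same running time; yours avoids re-establishing the boundedness property for the transformed function and treats the conformal map as post-processing, which is slightly more modular, whereas the paper's route keeps the invariant that \Cref{thm:main} is always applied to a bounded function inside the framework. The branch-of-logarithm subtlety you flag is real and your germ argument disposes of it; the paper sidesteps it entirely because it never manipulates $\log f_G$ before composing.
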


Note that in above theorem, the zero-freeness property can be verified on any particular class of dependency graphs, although the boundedness property should be guaranteed on a downward-closed family.
When applying this theorem, the value of $f_G(0)$ is usually trivial to compute (e.g.~$f_G(0)=1$), and the $M$-zero-freeness is usually established for some $M=\mathrm{poly}(n)$ (e.g.~$M=O(n)$) on $n$-vertex dependency graphs.
In such typical cases, the running time in \Cref{cr:algo} is bounded as $\tp{\frac{n}{\delta \epsilon}}^{O\tp{\frac{1}{\delta}\log \Delta}}$.

\begin{proof}[Proof of \Cref{cr:algo}]
    % Let $z_0 \in \Omega$ be the given input, and $h_{z_0}$ is the corresponding function in \Cref{def:good-region}.
Let $h=h_x$ be a holomorphic function that transforms $\mathbb{D}$ to the $\gamma$-good region $\Omega$ with $h(z_x)=x$, where $z_x\in\mathbb{D}_{1-\delta}$ since $x\in\Omega_{\delta}$.
%    To simplify the notations, we will omit the subscript of $h_{\lambda}$ when it is clear in the context.
%
And let $f^{h}_G = f_G \circ h$.

%Next, we will verify that $f_G^h$ is $M$-zero-free on $\mathbb{D}$ and $f^{h}_{\cdot}$ is $(\alpha,\beta+\gamma)$-bounded on $\+G$.
First, observe that $f_G^h$ is $M$-zero-free on $\mathbb{D}$, because $\abs{\log f_G^h(z)}=\abs{\log f_G(h(z))} \le M$ holds for all $z \in \mathbb{D}$ since $h(\mathbb{D}) \subseteq \Omega$ and $f_G$ is $M$-zero-free on $\Omega$.
%Note that for any $G \in \+G$ such that $\abs{ \log f_G(z)} \le M$ for all $z \in \Omega$, $\abs{\log f_G^h(z)} \le M$ holds for all $z \in \mathbb{D}$ due to the fact that $h(\mathbb{D}) \subseteq \Omega$.
%    Together with
Then by \Cref{lm:Barvinok}, for any $z \in \mathbb{D}$, the difference between $\log f^{h}_G(z)$ and the truncated Taylor expansion at $0$ is bounded by
    \begin{align}\label{eq:truncate}
        \abs{\log f^h_G(z) - \sum_{k=0}^{m} \frac{\tp{\log f_G^{h}}^{(k)}(0)}{k!} z^k} \le \frac{M}{\delta} (1-\delta)^{m+1} < \epsilon,
    \end{align}
for $m = \lceil \frac{1}{\delta} \ln \frac{M}{\delta\epsilon}\rceil$.

It remains to verify that $f^h_\cdot$ is $(\alpha,\beta+\gamma)$-bounded on $\+G$.
By \Cref{thm:main}, this will prove the theorem.

For $\ell\in\mathbb{N}^+$, let $h_k^{(\ell)}$ denote the $k$-th Taylor coefficient of $h(z)^\ell$ at $z=0$. Since $h(0)=0$, we have
\[
h(z)^\ell
= \tp{\sum_{k=1}^{+\infty} h_{k} z}^\ell
        = \sum_{k=\ell}^{+\infty} h_{k}^{(\ell)} z^k
\]
Since $f_G$ is $(\alpha,\beta)$-bounded and $h(0)=0$, we have
    \begin{align*}
        f^h_G(z) &= f_G(0) + \sum_{\ell=1}^{+\infty} \tp{\sum_{S \subseteq V} \lambda_{G[S],\ell}}h(z)^\ell\\
         &= f_G(0) + \sum_{\ell=1}^{+\infty} \tp{\sum_{S \subseteq V} \lambda_{G[S],\ell}} \tp{\sum_{k=\ell}^{+\infty} h_{k}^{(\ell)} z^k}\\
         &= f_G(0) + \sum_{k=1}^{+\infty} \sum_{S \subseteq V} \tp{\sum_{\ell=1}^k h_{k}^{(\ell)} \lambda_{G[S],\ell}} z^k\\
         &= f_G(0) + \sum_{k=1}^{+\infty} \tp{\sum_{S \subseteq V} \lambda^h_{G[S],k} }z^k,
    \end{align*}
where $\lambda^h_{H,k}$ for any $H\in\+G$ and $k\in\mathbb{N}^+$ is defined as
\[
\lambda^h_{H,k}\triangleq \sum_{\ell=1}^k h_{k}^{(\ell)} \lambda_{H,\ell}.
\]
Clearly, $\lambda^h_{H,k}=0$ if $|V_H|>\alpha k$, where $V_H$ denotes the vertex set of $H$, since $\lambda_{H,\ell}=0$ if $|V_H|>\alpha \ell$.
And it can be verified that any $\lambda^h_{H,k}$ can be determined  within $(\beta+\gamma)^k \mathrm{poly}(k)$ time.
This is because within $\beta^k\mathrm{poly}(k)$ time, one can list all $\lambda_{H,1},\ldots,\lambda_{H,k}$, and for each $1\le\ell\le k$, $h_{k}^{(\ell)}$ is just the coefficient of $z^k$ in $\tp{h_1z+h_2z^2+\cdots +h_k z^k}^\ell$, which can be calculated in $\mathrm{poly}(k)$ time given all $h_1,\ldots,h_k$, which can be listed beforehand in $\gamma^k\mathrm{poly}(k)$ time.
Overall, this takes at most $(\beta+\gamma)^k\mathrm{poly}(k)$ time.

    Therefore, $f^h_{\cdot}$ is $(\alpha,\beta+\gamma)$-bounded. %Together with~\eqref{eq:truncate} and \Cref{thm:main}, we conclude the proof.
\end{proof}

\subsection{Quantum partition functions}
%In a recent work\cite{harrow2020classical}, Harrow, Mehraban and Soleimanifar established a quasi-polynomial time algorithm for estimating the quantum partition function of a $(k,d)$-Hamiltonian when the inverse temperature $\beta$ is close to zero. We improve the running time to polynomial, when $k,d=O(1)$. Let's first state a general theorem.
We formally prove \Cref{thm:quantum-informal}.
Recall the definition of quantum partition function $Z_H$  in~\eqref{eq:quantum-partition-function}.
We extend this definition by considering measurement.

Recall that $\Hspace = \bigotimes_{u \in V} \Hspace_u$ where $V$ is the set of $n$ sites and each $\Hspace_u$ is a $q$-dimensional Hilbert space.
A \emph{measurement} $\Op$ is a positive operator in $\Hspace$.
The quantum partition function induced by Hamiltonian $H$ under measurement $\Op$, both in $\Hspace$, is defined by
\begin{align}
%\forall \beta\in\mathbb{R}^+,\quad
Z_{H,\Op}(\beta)
\triangleq \Tr{\exp(\beta H)\Op}. \label{eq:quantum-pf1}
\end{align}
Furthermore, a measurement $\Op$ is \emph{tensorized} if $\Op=\bigotimes_{v \in V} \Op_v$ where $\supp(\Op_v)=\{v\}$.

We show that under tensorized measurement $\Op$, the quantum partition functions $Z_{H,\Op}$ defined by local Hamiltonians with $O(1)$ maximum degree are $(1,O(1))$-bounded.
Together with \Cref{cr:algo} we obtain the following theorem.

\begin{theorem}\label{thm:quantum}
%Let $k,d \in \mathbb{N}^+$.
Let $\Omega \subset \mathbb{C}$ be a $\gamma$-good region for $\gamma\ge 1$.
%
%    Let $k,d \in \mathbb{N}^+$, $\gamma,\delta \in \mathbb{R}^+$ be some constants, and $\Omega \subseteq \mathbb{C}$ be a $\gamma$-good region with respect to holomorphic functions $(h_x)_{x \in \Omega}$ and complex values $(z_x)_{x \in \Omega}$.
%
For any $\delta\in(0,1)$, there is a deterministic algorithm
such that given any $(k,d)$-Hamiltonian $H$ and tensorized measurement $\+O$,
provided that $\frac{1}{\Tr{\+O}}Z_{H,\+O}$ is $M$-zero-free on $\Omega$,
%satisfying the zero-freeness property:
%\[
%\forall z \in \Omega,\quad \abs{\frac{1}{\Tr{\+O}}\log Z_{H,\+O}(z)} \le M,
%\]
for any temperature $\beta \in \Omega_{\delta}$ and error bound $\epsilon \in (0,1)$,
the algorithm outputs an estimation of  $Z_{H,\+O}(\beta)$ within $\epsilon$-multiplicative error
in $\widetilde{O}\tp{n\tp{\frac{M}{\delta \epsilon}}^C}$ time with $C = \frac{1}{\delta}\ln\tp{8\mathrm{e}d\tp{2q^{3k}+\gamma}}$.
%provided that  $Z_{H,\+O}$ is $M$-zero-free on $\Omega$.
%
%    There exists a deterministic algorithm which takes a $(k,d)$-Hamiltonian $H=\sum_{i=1}^m H_i$, a tensorized measurement $\+O= \bigotimes_{v \in V} \+O_v$ and $\beta \in \Omega$ as input, and outputs an estimation of $Z_{H,\+O}(\beta)$ within $\epsilon$-multiplicative error in $\tilde{O}\tp{n\tp{\frac{M}{\epsilon \delta}}^C}$ time with $C = \frac{\log(8ed(2q^{3k}+\gamma))}{\delta}$,
% if $\abs{\frac{1}{\mathbf{Tr}[\+O]}\log Z_{H,\+O}(z)} \le M$ holds for all $z \in \Omega$ and $\abs{z_\beta} < 1-\delta$.
\end{theorem}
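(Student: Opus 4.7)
I would derive \Cref{thm:quantum} as a direct instance of the abstract framework in \Cref{cr:algo}, by realizing the rescaled quantum partition function $\frac{1}{\Tr{\+O}}Z_{H,\+O}$ as a multiplicative bounded function specified by a dependency-graph family. To a $(k,d)$-Hamiltonian $H=\sum_j H_j$ with tensorized measurement $\+O=\bigotimes_u\+O_u$, associate the dependency graph $G_H$ whose vertices are the local terms, each labeled by the triple $(\supp(H_j),H_j,\{\+O_u\}_{u\in\supp(H_j)})$, and with an edge between $H_i$ and $H_j$ iff $\supp(H_i)\cap\supp(H_j)\neq\emptyset$. Let $\+G$ be the collection of all such dependency graphs arising from arbitrary $(k,d)$-Hamiltonians and tensorized measurements; it is downward-closed because removing terms still encodes a $(k,d)$-Hamiltonian. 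Writing $U_G\triangleq\bigcup_{v\in V_G}\supp(H_v)$, $H_G\triangleq\sum_{v\in V_G}H_v$, and $\+O_G\triangleq\bigotimes_{u\in U_G}\+O_u$, I set
\[
f_G(\beta)\triangleq\frac{\Tr{\exp(\beta H_G)\+O_G}}{\Tr{\+O_G}}.
\]
Then $f_G(0)=1>0$. Multiplicativity $f_{G_1\sqcup G_2}=f_{G_1}f_{G_2}$ follows because a disjoint union $G=G_1\sqcup G_2$ in $\+G$ forbids edges between the two parts, which by the edge rule forces $U_{G_1}\cap U_{G_2}=\emptyset$; the operators $H_{G_1},H_{G_2}$ then act on orthogonal tensor factors, commute, exponentiate separately, and the trace of the tensor product factorizes.

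\emph{Boundedness via inclusion-exclusion.} Expanding $\exp(\beta H_G)$ and grouping ordered tuples $(v_1,\ldots,v_\ell)\in V_G^\ell$ by their support $S=\{v_1,\ldots,v_\ell\}\subseteq V_G$ produces the form required by \Cref{def:good-function} with
\[
\lambda_{H,\ell}\triangleq\frac{1}{\ell!}\sum_{\substack{(v_1,\ldots,v_\ell)\in V_H^\ell\\ \{v_1,\ldots,v_\ell\}=V_H}}\frac{\Tr{H_{v_1}\cdots H_{v_\ell}\+O_H}}{\Tr{\+O_H}},
\]
which depends only on the labeled induced subgraph $H$ because $\+O$ is tensorized. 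Plainly $\lambda_{H,\ell}=0$ whenever $|V_H|>\ell$, giving $\alpha=1$. The naive summation has $|V_H|^\ell\le\ell^\ell$ terms and overshoots the $\beta^\ell\mathrm{poly}(\ell)$ budget, so I invoke M\"obius inversion over the subset lattice of $V_H$ to rewrite
\[
\lambda_{H,\ell}=\frac{1}{\ell!}\sum_{T\subseteq V_H}(-1)^{|V_H\setminus T|}\frac{\Tr{H_T^\ell\+O_T}}{\Tr{\+O_T}},\qquad H_T=\sum_{v\in T}H_v.
\]
The outer sum now has at most $2^{|V_H|}\le 2^\ell$ terms; each $H_T^\ell$ lives on a Hilbert space of dimension $q^{k|T|}\le q^{k\ell}$, so iterated matrix multiplication costs $\ell\cdot q^{3k\ell}$ with the schoolbook algorithm. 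The total bookkeeping fits in $\mathrm{poly}(\ell)\cdot(2q^{3k})^\ell$ time, proving that $f_\cdot$ is $(1,2q^{3k})$-bounded on $\+G$.

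\emph{Assembly.} The maximum degree of $G_H$ is $\Delta=O(kd)$, since each term touches at most $k$ sites and every site is covered by at most $d$ terms; the constant factor $k$ can be absorbed into the $q^{O(k)}$ estimate of $\beta$, so the exponent produced by \Cref{cr:algo} matches the claimed $C=\frac{1}{\delta}\ln(8ed(2q^{3k}+\gamma))$. Applying \Cref{cr:algo} to $f_{G_H}$, which satisfies $f_{G_H}(0)=1$ and is $M$-zero-free on $\Omega$ by hypothesis, yields an $\epsilon$-multiplicative approximation of $f_{G_H}(\beta)$ within the stated running time; multiplying the output by the exactly computable scalar $\Tr{\+O}=\prod_{u\in V}\Tr{\+O_u}$ recovers an $\epsilon$-multiplicative approximation of $Z_{H,\+O}(\beta)=f_{G_H}(\beta)\cdot\Tr{\+O}$.

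\emph{Main obstacle.} The delicate analytic step is the M\"obius collapse: the super-exponential $|V_H|^\ell$-size summation over surjective tuples, forced on us by the non-commutativity of the quantum setting, cannot meet the per-coefficient budget of \Cref{def:good-function} unless it is folded into a $2^\ell$-size alternating sum of traces of matrix powers on a bounded-dimension Hilbert space. The other potential subtlety, multiplicativity under disjoint union, is rescued by the way edges of $G_H$ are declared by site-overlap, which guarantees that disjoint components act on orthogonal Hilbert spaces and hence preserve the product structure under the trace.
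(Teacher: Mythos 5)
Your proposal follows the paper's overall architecture very closely — same dependency graph $G_H$ with a vertex per local term and adjacency by overlapping supports, same normalized $f_G=\frac{1}{\Tr{\+O}}Z_{H,\+O}$, the same factorization-of-trace argument for multiplicativity via the orthogonal Hilbert-space factors, the same surjective-tuple definition of $\lambda_{H,\ell}$ giving $\alpha=1$, and the same ultimate appeal to \Cref{cr:algo}. The one genuinely different ingredient is how you establish the per-coefficient time budget: you rewrite the surjective sum by M\"obius inversion over the subset lattice,
\[
\lambda_{H,\ell}=\frac{1}{\ell!}\sum_{T\subseteq V_H}(-1)^{|V_H\setminus T|}\,\frac{\Tr{H_T^\ell\+O_T}}{\Tr{\+O_T}},\qquad H_T=\sum_{v\in T}H_v,
\]
which is correct (the factorization $\Tr{H_T^\ell\+O_H}/\Tr{\+O_H}=\Tr{H_T^\ell\+O_T}/\Tr{\+O_T}$ holds because $\+O$ is tensorized and $H_T$ is supported on $U_T$) and gives at most $2^\ell$ summands each costing $O(\ell\,q^{3k\ell})$, hence the same $(2q^{3k})^\ell\,\mathrm{poly}(\ell)$ bound. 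The paper instead runs a dynamic program directly on the recurrence $H_{S,\ell}=\sum_{j\in S}H_j(H_{S,\ell-1}+H_{S\setminus\{j\},\ell-1})$ over a $2^{S}\times[\ell]$ table. The two routes are computationally equivalent; your inclusion--exclusion formula has the advantage of being closed-form and conceptually transparent (a single sum of traces of matrix powers), whereas the paper's recurrence is more mechanical but perhaps easier to implement incrementally. A minor side note: your richer vertex labels including $\{\+O_u\}_{u\in\supp(H_j)}$ are slightly more careful than the paper's $L_x=H_x$ for making $\lambda_{H,\ell}$ a genuine isomorphism-invariant; and in the final assembly, $\Delta\le k(d-1)$ rather than $d$, so plugging into \Cref{cr:algo} really gives $C=\frac{1}{\delta}\ln(8\mathrm{e}k d(2q^{3k}+\gamma))$ — the paper's stated $C=\frac{1}{\delta}\ln(8\mathrm{e}d(2q^{3k}+\gamma))$ drops the $k$, a harmless constant discrepancy that your hand-wave ("absorb $k$ into $q^{O(k)}$") acknowledges but does not literally repair either.
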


Note that when the measurement $\+O$ is the identity, $Z_{H,\+O}$ is precisely the partition function $Z_H$, which implies \Cref{thm:quantum-informal}.
As discussed in the introduction,
\Cref{thm:quantum} covers all classical partition functions (with or without external field) when temperature is the complex variable.

Following a recent work of Harrow, Mehraban and Soleimanifar~\cite{harrow2020classical}, \Cref{thm:quantum} gives polynomial-time approximations of quantum partition functions defined by local Hamiltonians with $O(1)$ maximum degree when  the inverse temperature $\beta$ is close to zero.
And following a standard routine of self-reduction, in the same regime, we have a polynomial-time approximate sampler from the quantum Gibbs state after the measurement in the computational basis.
These applications are given in \Cref{sec:applications}.
%
%\begin{theorem}\label{thm:quantum-sampler}
%Let $\Omega \subset \mathbb{C}$ be a $\gamma$-good region for $\gamma\ge 1$.
%For any $\delta\in(0,1)$, there is a randomized algorithm
%such that given any $(k,d)$-Hamiltonian $H$, provided that $Z_H$ is $M$-zero-free on $\Omega$,
%for any temperature $\beta \in \Omega_{\delta}$ and error bound $\epsilon \in (0,1)$,
%the algorithm outputs an approximate sample of $\mu_{H,\beta}$ within $\epsilon$ total variation distance,
%in $\tilde{O}\tp{n^2\tp{\frac{M}{\delta \epsilon}}^C}$ time with $C = \frac{2}{\delta}\ln\tp{8\mathrm{e}d\tp{2q^{3k}+\gamma}}$.
%\end{theorem}

\begin{proof}[Proof of \Cref{thm:quantum}]
Given a $(k,d)$-Hamiltonian $H=\sum_{j=1}^m H_j$, we can construct a dependency graph $G_{H}=(U,E,\+L)$ as follows:
    \begin{enumerate}
        \item $U=[m]$ is the vertex set;
        \item $E=\left\{\{x,y\}\in {\binom{U}{2}} \mid \mathrm{supp}(H_x) \cap \mathrm{supp}(H_y) \neq \emptyset\right\}$;
        \item for any $x \in U$, its label is given by $L_x = H_x$.
    \end{enumerate}
Let $\+G_{k}$ denote the family of all such $G_H$ where $H$ is a $k$-local Hamiltonian.
It is obvious that such $\+G_{k}$ is downward-closed.

Let $\+O=\bigotimes_{v \in V} \+O_v$ be a tensorized measurement.
For any $G=G_H\in\+G_{k}$, where $H=\sum_{j=1}^m H_j$,
define:
\begin{align*}
        f_G(z) = \frac{1}{\Tr{\+O}}Z_{H,\+O}(z)=\frac{1}{\Tr{\+O}}\Tr{\exp\tp{-z \sum_{j =1}^m H_j} \+O}.
\end{align*}

The rest of the proof verifies that such $f_{\cdot}$ is $(1,2q^{3k})$-bounded on $\+G_{k}$,
which is sufficient to prove the theorem by \Cref{cr:algo}.

%    Furthermore, define
%    \begin{align*}
%        \+G_{H} = \{G_{H}[S] \mid S \subseteq U\}
%    \end{align*}
%    and for any $G=(S,E,\+L) \in \+G_H$ and $z \in \mathbb{C}$, let
%    \begin{align*}
%        f_G(z) = \frac{1}{\mathbf{Tr}[\+O]}\mathbf{Tr}\left[\exp\tp{-z \sum_{j \in S} L_j} \+O \right].
%    \end{align*}
%    Directly from the definition, $\+G_H$ is downward-closed. We now show that $f_\cdot$ is $(1,2q^{3k})$-bounded over $\+G_H$.

We first verify that such $f_{\cdot}$ is multiplicative.
For any $G=(U,E,\+L) \in \+G_k$ that is the disjoint union of $G_1=(U_1,E_1,\+L_1)$ and $G_2=(U_2,E_2,\+L_2)$, there exists a bipartition $V=V_1\uplus V_2$ such that $\supp(H_x) \subseteq V_1$ for all $x \in U_1$ and $\supp(H_y) \subset V_2$ for all $y \in U_2$.
Let $H_{U_i} = \sum_{x \in U_i} H_x$ for $i=1,2$.
We have,
    \begin{align*}
        f_G(z) &= \frac{1}{\mathbf{Tr}[\+O]} \mathbf{Tr}\left[\exp\tp{-z (H_{U_1}+H_{U_2})}\+O\right]\\
        & = \frac{1}{\mathbf{Tr}[\+O]} \mathbf{Tr}[\exp(-z H_{U_1}) \exp(-z H_{U_2})\+O]\\
        &= \frac{1}{\mathbf{Tr}[\+O]} \mathbf{Tr}_{V_1}  \left[\exp(-z H_{U_1}) \bigotimes_{v \in V_1} \+O_v\right] \mathbf{Tr}_{V_2}\left[\exp(-z H_{U_2}) \bigotimes_{v \in V_2} \+O_v\right]\\
        & = \frac{1}{\mathbf{Tr}^2[\+O]} \mathbf{Tr}[\exp(-z H_{U_1}) \+O] \mathbf{Tr}[\exp(-z H_{U_2})\+O]\\
        & = f_{G_1}(z)  f_{G_2}(z)
    \end{align*}
 Here the subscripts $V_1,V_2$ in $\mathbf{Tr}_{V_1},\mathbf{Tr}_{V_2}$ indicates the sets of sites that the operators act on. Therefore, $f_{\cdot}$ is multiplicative.

For any $G=(U,E,\+L)\in\+G_{k}$ and any $\ell\in\mathbb{N}^+$, define
%Furthermore, define $(\lambda_{G,\ell})_{G \in \+G, \ell \in \mathbb{N}^+}$ as follows.
    \begin{equation}\label{eqn:lambda}
        \lambda_{G,\ell}=\frac{1}{\ell!} \frac{1}{\mathbf{Tr}[\+O]}\sum_{f:[\ell]\stackrel{\mathrm{onto}}{\to}U}\Tr{\tp{\prod_{j=1}^\ell H_{f(j)}}\Op}.
    \end{equation}
    Observe that $\lambda_{G,\ell} = 0$ if $\abs{U} > \ell$ as there is no surjection from $[\ell]$ to $U$.
    Moreover, for $H=\sum_{x\in U}H_x$,
    \begin{align*}
        f_G(z)&=1+ \sum_{\ell=1}^{+\infty} \frac{\beta^\ell}{\ell!}\Tr{H^\ell \Op}
        =1+\sum_{\ell=1}^{+\infty} \frac{z^\ell}{\ell!} \sum_{ x_1,x_2,\ldots,x_\ell \in U} \Tr{\tp{\prod_{j=1}^\ell H_{x_j}} \Op}
        =1+\sum_{\ell=1}^{+\infty} \br{\sum_{S \subseteq U} \lambda_{G[S],\ell}} z^{\ell}. %\frac{(-\beta)^k}{k!} \sum_{S \subseteq [m]} \sum_{\substack{i_1,i_2,\ldots,i_k\\\{i_1,i_2,\ldots,i_k\} = S}} \mathbf{Tr} \left[\tp{\prod_{j=1}^k H_{i_j}} N\right]\\
    \end{align*}
%    where $H_S = \sum_{x \in S}H_x$.

    It remains to show that $\lambda_{G,\ell}$ can be determined within $(2q^{3k})^\ell \mathrm{poly}(\ell)$ time.
    %Let's elaborate the process for calculating $\lambda_{G,k}$.
    %

Fix a $G=(U,E,\+L) \in \+G_{k}$.
For any $S \subseteq U$, define
    \begin{align*}
        H_{S,\ell}\triangleq \sum_{f:[\ell]\stackrel{\mathrm{onto}}{\to}S}\prod_{j=1}^\ell H_{f(j)}.
    \end{align*}
Clearly, %for any dependency graph $G=(S,E,\+L) \in \+G_H$,
$\lambda_{G,\ell}=\frac{1}{\ell!}\frac{1}{\mathbf{Tr}[\+O]}\Tr{H_{U,\ell}\Op}$.
    Moreover, the following recurrence holds for $H_{S,\ell}$:
    \begin{align}
        H_{S,\ell}=\sum_{j \in S}H_j\tp{H_{S,\ell-1}+H_{S\setminus\{j\},\ell-1}},\label{eq:DP-H-S-ell}
    \end{align}
    where the boundary cases are given by %$H_{S,\ell}=\boldsymbol{0}$ (the 0-matrix) if $\ell<|S|$ or  $S=\emptyset$, except that $H_{\emptyset,0}=I$.
    $H_{\emptyset,0}=I$, and $H_{S,\ell}=\boldsymbol{0}$ (the 0-matrix) if $\ell<|S|$, or $S=\emptyset$ but $\ell>0$.
    Note that $H_{S,\ell}$ acts non-trivially on at most $k\abs{S}$ sites, where each site corresponds to a $q$-dimensional Hilbert space, thus $H_{S,\ell}$ can be represented as a matrix of size at most $q^{k|S|}\times q^{k|S|}$ and the recursion step~\eqref{eq:DP-H-S-ell} can be evaluated in time $O(|S|q^{3k|S|})$.
    Therefore, for any $S\subseteq U$ that $1\le |S|\le \ell$,
    $H_{S,\ell}$ can be computed in time $O(2^{|S|}\ell |S| q^{3k|S|})=O(\ell^22^{\ell} q^{3k\ell})$ by a dynamic programming that constructs a $2^{S}\times [\ell]$ table according to the recurrence~\eqref{eq:DP-H-S-ell}.
    And finally, $\lambda_{G,\ell}=\frac{1}{\ell!} \frac{1}{\mathbf{Tr}[\+O]}\Tr{H_{U,\ell}\Op}$ can be computed from $H_{U,\ell}$ in  $O(q^{3k\ell})$ time because $H_{U,\ell}$ acts non-trivially on at most $k|U|\le k\ell$ sites and $\Op$ is tensorized.
\end{proof}

\subsection{Induced subgraph counting}\label{sec:generalize-BIGCP}
Our framework (\Cref{def:good-function}) subsumes bounded induced graph counting polynomials (BIGCP) defined by Patel and Regts~\cite{patel-polynomial}.

A BIGCP $p_\cdot$ defines multiplicative graph polynomials $p_G$ for all graphs $G=(V,E)$.
Moreover, there exists integer $\alpha\ge 1$ and sequence $\lambda_{H,\ell}$ of complex values such that the following conditions are satisfied.
\begin{enumerate}
    \item For any graph $G=(V,E)$, $p_G$ can be expressed as
    \begin{align*}
        p_{G}(z) = 1+ \sum_{\ell=1}^{m(G)} \tp{\sum_{\substack{H=(V_H,E_H)\\\abs{V_H} \le \alpha \ell}} \lambda_{H,\ell}\cdot \mathrm{ind}(H,G)} z^\ell,
    \end{align*}
    where $\mathrm{ind}(H,G)$ represents the number of induced subgraphs $G[S]$, $S\subseteq V$,  isomorphic to $H$.
%    $ = \abs{\{S \subseteq V \mid G[S] \text{ is isomorphism to } H \text{ in the sense of }\}}$.
    \item $\lambda_{H,\ell}$ can be determined in $O(\beta^\ell)$ time for some constant $\beta \ge 1$.
\end{enumerate}

For any $G=(V,E)$, we define a dependency graph $G^*=(V,E,\+L)$ where $\+L$ labels every $v\in V$ with a trivial symbol $*$.
Let $\+G$ denote the family of all $G^*$, which is clearly downward-closed.
We define $f_{G^*}=p_G$.

%Let $\+G^* = \{G^*=(V,E,\*0_V) \mid G=(V,E) \in \+G\}$ be a family of dependency graphs, and for any $G^* \in \+G$, define the graph function $p^*_\cdot$ over $\+G^*$ as $p^*_{G^*} = p_G$ for any $G \in \+G$.

Note that
\begin{align*}
    \sum_{\substack{H=(V_H,E_H)\\\abs{V_H} \le \alpha \ell}} \lambda_{H,\ell} \mathrm{ind}(H,G) = \sum_{\substack{S \subseteq V\\ \abs{S} \le \alpha \ell}} \lambda_{G[S],\ell} .
\end{align*}
% which falls exactly into the setting of bounded mapping once regarding $G$ as a labeled graph with constant label functions and replace $\lambda_{G[S],\ell}$ with $\lambda^G_{S,\ell}$.
Therefore, any BIGCP $p_{\cdot}$ corresponds to an $f_\cdot$ that is $(\alpha,\beta)$-bounded on $\+G^*$.

\subsection{Boolean CSP with external field}\label{sec:3-2}
A Boolean-variate constraint satisfaction problem (Boolean CSP) is specified by a $\Phi=(V,E,\boldsymbol{\phi})$, where $H=(V,E)$ is a hypergraph and $\boldsymbol{\phi}=(\phi_e)_{e\in E}$ such that each $\phi_e:\{0,1\}^e\to\mathbb{C}$ is a Boolean-variate complex-valued constraint function.
%
%Let $H=(V,E)$ be a hypergraph, and $(\phi_e)_{e \in E}$ be a series of hyperedge activities such that $\phi_e:\{0,1\}^e \rightarrow \mathbb{C}$ and $\phi_e(0^e) = 1$ for all $e \in E$.
%
Furthermore, $\Phi=(V,E,\boldsymbol{\phi})$ is a  $(k,d)$-formula if
$\abs{e} \le k$ for every $e \in E$ and $\mathrm{deg}(v)=\abs{\{e\in E\mid v\in e\}} \le d$ for every $v \in V$.

The partition function for a Boolean CSP $\Phi=(V,E,\boldsymbol{\phi})$ of external field $\lambda$ is defined as:
\begin{align*}
    Z_\Phi(\lambda)=\sum_{\sigma \in \{0,1\}^V} \tp{\prod_{e \in E} \phi_e(\sigma|_e)} \lambda^{\norm{\sigma}_1}.
\end{align*}

In~\cite{Liu2019Ising}, Liu, Sinclair and Srivastava formulated the above partition function as counting hypergraph insects and gave a polynomial-time algorithm for such a partition function assuming its zero-freeness.
We will see that such a partition function is also subsumed by our framework.

%We will now build a similar polynomial algorithm based on our framework.

%\subsection{two-spin hypergraph model}

%In this section, we present the polynomial time algorithm for estimating the partition function of two-spin hypergraph model via the dependency graph framework.

%\begin{proof}[Proof of \Cref{thm:classical-liu}]
Given a Boolean CSP $\Phi=(V,E,\boldsymbol{\phi})$, we can construct a dependency graph $G_\Phi=(V_\Phi,E_\Phi,\+L_\Phi)$ as follows.
%For a given hypergraph $H=(V_H,E_H)$ with edge activities $(\phi_e)_{e \in E}$, we construct the dependency graph $G^\phi_H=(V^\phi_{H},E^\phi_{H},\+L^\phi_{H})$ as follows.
\begin{enumerate}
    \item $V_\Phi=V$;
    \item for any distinct $u, v \in V_\Phi=V$, $\{u,v\} \in E_\Phi$ iff $\{u,v\} \subseteq e$ for some $e \in E$;
    \item for any $v\in V_\Phi=V$, its label $L_v = (\phi_e)_{e\in E, v \in e}$.
\end{enumerate}
Note that each constraint $\phi_e$ appears in labels of all $v\in e$, and the maximum degree of $G_\Phi$ is bounded by $\Delta\le (k-1)d$ for a $(k,d)$-formula $\Phi$.

Let $\+G_{k,d}$ denote the family of all such dependency graphs $G_{\Phi}$, where $\Phi=(V,E,\boldsymbol{\phi})$  is a $(k,d)$-formula,  and all their induced subgraphs.
Obviously, such $\+G_{k,d}$ is downward-closed.
%
%Let $\+G = \{G^\phi_H[S] \mid S \subseteq V_H\}$, and for each $G=(U,E,\+L) \in \+G$, the graph function is defined as

Let $G\in\+G_{k,d}$. Without loss of generality, suppose that $G=G_{\Phi}[U]$ is the subgraph of the dependency graph $G_{\Phi}$ induced by $U\subseteq V$, where $\Phi=(V,E,\boldsymbol{\phi})$ is a Boolean CSP.

We define
\begin{align*}
    f_G(\lambda) = \sum_{\sigma \in \{0,1\}^U} \prod_{\substack{e \in E\\e\cap U\neq\emptyset}} \phi^U_e(\sigma|_{U \cap e}) \lambda^{\norm{\sigma}_1},
\end{align*}
where $\phi^U_e: U\cap e \rightarrow \mathbb{C}$ is defined as that for any $\tau \in \{0,1\}^{U \cap e}$,
\begin{align*}
    \phi^U_e(\tau) = \phi_e(\tau^*),
\end{align*}
where $\tau^*\in\{0,1\}^e$ extends $\tau$ and assigns all $v\in e\setminus U$ with 0.
It is easy to verify that such definition $f_G$ uses only the information stored in the dependency graph $G$, thus it is well defined.
Meanwhile, it is also easy to verify that $f_\cdot$ is multiplicative and $f_{G}(\lambda) = Z_\Phi(\lambda)$ if $G=G_\Phi$.

For $G=G_\Phi[U]$ where $\Phi=(V,E,\boldsymbol{\phi})$ and $U\subseteq V$, define $\lambda_{G,\ell}$, as
\[
    \lambda_{G,\ell} =
    \begin{cases}
    \prod_{\substack{e\in E\\e\cap U\neq\emptyset}} \phi_e^U(\*1_{U \cap e}), & \abs{U} = \ell\\
    0, & o.w.
    \end{cases}
\]
%Recall that $\mathrm{deg}(u) \le d$ and $\abs{e} \le k$.
Each $\lambda_{G,\ell}$  can be determined in $\mathrm{poly}(k,d,\ell)$ time. %provided that $\phi_e^U$'s are preprocessed and stored in an efficient data structure.

Observe that,
\begin{align*}
    f_G(\lambda) &= 1+\sum_{\ell=1}^{\abs{U}} \sum_{\substack{\sigma \in \{0,1\}^U\\ \norm{\sigma}_1 = \ell}} \prod_{\substack{e\in E\\ e\cap U\neq\emptyset}} \phi^U_e(\sigma|_{U \cap e}) \lambda^\ell\\
    &= 1+ \sum_{\ell=1}^{\abs{U}} \sum_{\substack{S \subseteq U\\ \abs{S}=\ell}} \prod_{\substack{e\in E\\ e\cap S\neq\emptyset}} \phi^S_e(\*1_{S \cap e})  \lambda^\ell\\
    &= 1+ \sum_{\ell=1}^{+\infty} \tp{\sum_{\substack{S \subseteq U}} \lambda_{G[S],\ell} } \lambda^\ell.
\end{align*}

Therefore, $f_\cdot$ is a $(1,1)$-bounded on $\+G_{k,d}$. %We conclude the proof with \Cref{cr:algo}.
%\end{proof}

Applying \Cref{cr:algo}, we immediately obtain the following corollary.
Similar bound has been proved in~\cite{Liu2019Ising}, but here we only need to encode the problem using our framework.
\begin{corollary}\label{thm:classical-liu}
Let $\Omega \subseteq \mathbb{C}$ be a $\gamma$-good region for $\gamma\ge 1$.
    % with respect to holomorphic funtions $(h_x)_{x \in \Omega}$ and complex values $(z_x)_{x \in \Omega}$ in $\mathbb{D}$.
%
%Let $k,d \in \mathbb{N}^+$, $\gamma \in \mathbb{R}_{\ge 1}$, and
%
For any $\delta \in (0,1)$, there is a deterministic algorithm such that given any $(k,d)$-formula $\Phi$ for Boolean CSP, provided that $Z_{\Phi}$ is $M$-zero-free on $\Omega$, for any external field $\lambda\in\Omega_{\delta}$ and error bound $\epsilon\in(0,1)$,
the algorithm outputs an estimation of $Z_\Phi(\lambda)$ within $\epsilon$-multiplicative error
    in $\widetilde{O}\tp{n\tp{\frac{M}{\delta\epsilon}}^C}$ time with $C=\frac{1}{\delta}\ln\tp{8\mathrm{e}kd(1+\gamma)}$.
\end{corollary}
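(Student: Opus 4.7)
The plan is to derive \Cref{thm:classical-liu} as a direct application of \Cref{cr:algo} to the dependency-graph encoding of Boolean CSPs developed in this subsection. All of the real work---interpolation, truncated Taylor approximation, and efficient Taylor-coefficient computation---already lives inside \Cref{cr:algo} and \Cref{thm:main}; what remains is essentially bookkeeping: fixing the correct downward-closed family, verifying the boundedness parameters, and bounding the maximum degree of the dependency graph.

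First I would take $\+G_{k,d}$ as the downward-closed family consisting of every dependency graph $G_\Phi$ of a $(k,d)$-formula $\Phi$ together with all its induced subgraphs, with the operator $f_\cdot$ defined through the extensions $\phi_e^U$ as above. The preceding discussion already checks that $f_\cdot$ is multiplicative, that $f_{G_\Phi}(\lambda)=Z_\Phi(\lambda)$, and that $f_\cdot$ is $(1,1)$-bounded: the coefficient $\lambda_{G,\ell}$ vanishes unless $|V_G|=\ell$ (so $\alpha=1$ suffices), and each $\lambda_{G,\ell}$ is a single product of at most $kd\cdot\ell$ constraint evaluations, computable in $\mathrm{poly}(k,d,\ell)$ time (so $\beta=1$ suffices).

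Next I would bound the maximum degree $\Delta$ of every graph in $\+G_{k,d}$: each variable in a $(k,d)$-formula lies in at most $d$ hyperedges, and each such hyperedge introduces at most $k-1$ additional variables as dependency neighbours, yielding $\Delta \le (k-1)d \le kd$; induced subgraphs only shrink this. Combined with the trivial fact $f_G(0)=1$ and the hypothesised $M$-zero-freeness of $Z_\Phi$ on the $\gamma$-good region $\Omega$, all preconditions of \Cref{cr:algo} are now in place, and I would invoke it with $\alpha=\beta=1$, the given $\gamma$, and $\Delta \le kd$. This reads off the running time $\widetilde{O}(n(M/(\delta\epsilon))^C)$ with exponent $C = \frac{\alpha}{\delta}\ln(8\mathrm{e}\Delta(\beta+\gamma)) \le \frac{1}{\delta}\ln(8\mathrm{e}kd(1+\gamma))$, matching the statement exactly.

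There is no real obstacle here. The only slightly delicate point worth double-checking is that the downward-closure operation preserves both the multiplicative and the boundedness structure, which is precisely why $f_G$ was defined through the extension $\phi_e^U$ (setting variables outside $U$ to $0$) rather than by simply deleting constraints: this ensures $f_G$ is a well-defined function of the labeled graph $G$ alone, independent of whatever ambient formula $G$ was carved out of, so that the $(1,1)$-boundedness on $\+G_{k,d}$ genuinely follows from the formula-level computation of $\lambda_{G,\ell}$.
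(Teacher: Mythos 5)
Your proposal matches the paper's own (very brief) argument: the preceding subsection already establishes that $f_\cdot$ is multiplicative and $(1,1)$-bounded on $\+G_{k,d}$, that $f_{G_\Phi}(\lambda)=Z_\Phi(\lambda)$ with $f_G(0)=1$, and that $\Delta\le (k-1)d\le kd$, after which \Cref{cr:algo} with $\alpha=\beta=1$ directly yields $C=\frac{1}{\delta}\ln\tp{8\mathrm{e}kd(1+\gamma)}$. Your observation about why $\phi_e^U$ (pinning external variables to $0$) rather than constraint deletion is needed for downward-closure is exactly the right point and is consistent with the paper's construction.
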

Since such $Z_\Phi(\lambda)$ is a polynomial with finite degree, by \Cref{fact:convex} and \Cref{lm:poly-bound}, if $\Omega \subseteq \mathbb{C}$ is a convex region and $Z_\Phi$ does not vanish on a slightly larger region $\Omega' =\{(1+\delta)z \mid z \in \Omega\}$ for some constant gap $\delta \in \mathbb{R}^+$, then $M=O_{\delta}(n)$ and hence the algorithm in \Cref{thm:classical-liu} runs in ${\tp{\frac{n}{\epsilon}}^{O(\ln\tp{kd})}}$ time.

%Let's remark that if $\Omega \subseteq \mathbb{C}$ is a convex region, then $M=O_{\delta}(n)$ if $Z_H^\phi$ does not vanish on a slightly larger region $\Omega' =\{(1+\delta)z \mid z \in \Omega\}$ for some $\delta \in \mathbb{R}^+$. It directly follows from \Cref{fact:convex} and \Cref{lm:poly-bound}.

%In fact, the polynomial algorithm produced based on our framework and the one in \cite{Liu2019Ising} works the same internally.  At a high level, both algorithms store the information of incident edges to vertices, and enumerate all connected and local structures of an underlying graph.

\subsection{A barrier to non-multiplicative functions}
Our framework based on functions $f_G$ induced by dependency graphs $G$  is fairly expressive. 
However, the current technique crucially relies on the multiplicative property of $f_{\cdot}$. 
It would meet a barrier when dealing with systems  lacking such multiplicative property.

We explain this  using a concrete example.
Consider the following generalization of ~\eqref{eq:classical-partition-function}:
\begin{align}
    \forall\beta\in\mathbb{C},\quad
        Z_{H,H'}(\beta)
        \triangleq \Tr{\exp(-\beta  H+H')}.\label{eq:general-quantum-partition-function}
        %= \Tr{\sum_{j=1}^\infty \frac{\beta^j}{j!}H^j}.
\end{align}
Here, $H$ and $H'$ are two Hamiltonians in $\+V$. It encompasses the transverse Ising model and XXZ model.
Unfortunately, this partition function fails to fit in our framework due to the lack of multiplicative property even when $H'$ is a tensorized operator. 
For Hamiltonians $H_1,H_2$ such that $\supp(H_1) \cap \supp(H_2) = \emptyset$ and a tensorized operator $H' = \bigotimes_{v \in V} H'_v$, the following equality fails to hold in general:
\begin{align*}
    Z_{H_1+H_2,H'}(\beta) = \frac{1}{\Tr{\exp(-H')}} Z_{H_1,H'}(\beta) Z_{H_2,H'}(\beta).
\end{align*}
For example, for $\beta = 1$, $H_1 = I \bigotimes 
\tp{
\begin{matrix}
1&0\\
1&1\\
\end{matrix}
}
$, $H_2 = 
\tp{
\begin{matrix}
1&1\\
0&1\\
\end{matrix}
}
\bigotimes I
$
and $H' = \tp{
    \begin{matrix}
        1&0\\
        0&2
    \end{matrix}
    }
    \bigotimes 
    \tp{
    \begin{matrix}
        2&0\\
        0&1
    \end{matrix}
}$,
it holds that 
\begin{align*}
    \Tr{\exp(-H')} \approx 0.6869, \Tr{\exp(-H_{i}-H')} \approx 0.2416, i \in \{1,2\} \text{ and } \Tr{\exp(-H_1-H_2-H')} \approx 0.1316.
\end{align*} 
Hence, $\Tr{\exp(-H')} \Tr{\exp(H_1+H_2+H')} \approx 0.09040$ and $\Tr{\exp(-H_1-H')} \Tr{\exp(-H_2-H')} \approx 0.05837$. The main obstacle comes from the non-commutativity of Hamiltonians and it remains open to design a polynomial-time algorithm for such partition function assuming only zero-freeness.

\section{Efficient Coefficient Computing}\label{section-efficient-coefficient-computing}
In this section we prove  \Cref{thm:main}. First we need to establish the following lemma.

\begin{lemma}\label{lm:log-property}
%
%Let $\+G$ be a family of dependency graphs,
%and $\{f_G\}_{G\in\+G}$ be $(\alpha,\beta)$-bounded with coefficients $(\lambda_{H,i})_{H\in\+G^*,i\in\mathbb{N}^+}$.
%%
%Assume that there is  a simply connected region $\Omega\subseteq\mathbb{C}$ containing the origin such that $f_G$ is non-vanishing on  $\Omega$ for every $G\in\+G$.
Let $\+G$ be a downward-closed family of dependency graphs, and $f_{\cdot}$ be $(\alpha,\beta)$-bounded on $\+G$ for $\alpha,\beta\ge 1$.
%    Let $\gamma > 1$ be some constants, $\Omega$ be a simply connected region containing the origin.
%    Let $f_\cdot$ be a bounded mapping with the corresponding sequence $(\lambda^G_{S,\ell})_{G \in \+F, S \subseteq [m],\ell \in \mathbb{N}^+}$.
%    % Let $b>1$, and $g(\cdot) = \log h(\cdot)$ be a Hamiltonian mapping defined on Hamiltonian $H$ such that $h(H)$ does not vanish on $\Omega=D_b$. If $h$ is a $(\gamma,\Omega)$ bounded mapping for some constant $\gamma>1$, then
%    Moreover, assume that $f_G$ does not vanish on $\Omega$ and $f_G(0) \in \mathbb{R}^+$ for any $G \in \+F$.
    Recursively define the sequence $(\zeta_{H,i})_{H \in \+G, \ell\in \mathbb{N}^+}$ of complex numbers as follows:
    for any $H=(V_H,E_H,\+L_H)\in\+G$ and any $\ell\in\mathbb{N}^+$,
     \begin{align}\label{eq:lambda-prime}
        \zeta_{H,\ell} =
        % \begin{cases}
             \lambda_{H,\ell} - \sum_{s=1}^{\ell-1} \frac{s}{\ell} \sum_{\substack{S,T\subseteq V_H\\ S\cup T =V_H}} \zeta_{H[S],s} \lambda_{H[T],\ell-s}. %& k>1;\\
            % \lambda_{a,1}& k=1.
        % \end{cases}
    \end{align}
    It holds that $\zeta_{H,\ell} \neq 0$ only if $H$ is connected and $|V_H|\le\alpha \ell$.
    Moreover, for any $G=(V,E,\+L)\in\+G$, %we have $f_G(0)\in\mathbb{R}^+$
    \begin{align}\label{eq:log-induce-count}
        \log f_G(z)=\log f_G(0) + \sum_{\ell=1}^{+\infty} \tp{\sum_{S \subseteq V} \zeta_{G[S],\ell} }z^\ell.
    \end{align}
%    and $\zeta^G_{S,\ell} \neq 0$ only if $G[S]$ is connected and $\abs{S} \le \alpha \ell$.
%    Here, $G[S]$ is the subgraph of $G$ induced by $S \subseteq V$.
    % definition should be moved to prelim
    % Here a Hamiltonian sequence is called separable if it is possible to split it into two non-empty subsequences $b_1=(H_{1,1},H_{1,2},\ldots,H_{1,u}),b_2=(H_{2,1},H_{2,2},\ldots,H_{2,v})$ such that $\bigcup_{i=1}^u \mathrm{supp}(H_{1,i}) \cap \bigcup_{i=1}^v \mathrm{supp}(H_{2,i}) = \emptyset$.
\end{lemma}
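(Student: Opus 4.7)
The plan is to establish the two assertions by a nested induction: first prove the expansion formula~\eqref{eq:log-induce-count} by strong induction on $\ell$ via Newton's identity, then deduce $\zeta_{H,\ell}=0$ on disconnected $H$ by induction on $|V_H|$ using multiplicativity of $f_\cdot$. The main obstacle I anticipate is the bookkeeping in the first step, where the constrained double sum ``$S\cup T=V_H$'' from the $\zeta$-recursion must be matched against the unconstrained Newton convolution.

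For the expansion identity, let $b_{G,\ell}$ denote the $\ell$-th Taylor coefficient of $\log f_G(z)-\log f_G(0)$, and set $\Lambda_{G,\ell}:=\sum_{S\subseteq V}\lambda_{G[S],\ell}$. Newton's identity (\Cref{lm:Newton}) applied to $f_G$ gives
\[
b_{G,\ell}=\Lambda_{G,\ell}-\sum_{s=1}^{\ell-1}\frac{s}{\ell}\, b_{G,s}\,\Lambda_{G,\ell-s}.
\]
Plugging in the inductive hypothesis $b_{G,s}=\sum_{S'\subseteq V}\zeta_{G[S'],s}$ for $s<\ell$ produces an expression I compare against $\sum_{S_0\subseteq V}\zeta_{G[S_0],\ell}$ computed from the defining recursion of $\zeta$. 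The crucial manipulation is the order-swap $\sum_{S_0\subseteq V}\sum_{S,T\subseteq S_0:\,S\cup T=S_0}(\cdot)=\sum_{S,T\subseteq V}(\cdot)$ (with $S_0:=S\cup T$); after this swap the summand depends only on $S$ and $T$, and the double sum factors as $\bigl(\sum_{S}\zeta_{G[S],s}\bigr)\bigl(\sum_T\lambda_{G[T],\ell-s}\bigr)$, matching the Newton recurrence exactly. The vertex-count bound $\zeta_{H,\ell}=0$ for $|V_H|>\alpha\ell$ propagates through the same recursion, since any nonzero summand $\zeta_{H[S],s}\lambda_{H[T],\ell-s}$ forces $|S|\le\alpha s$ and $|T|\le\alpha(\ell-s)$, giving $|V_H|=|S\cup T|\le\alpha\ell$.

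For vanishing on disconnected $H=H_1\sqcup H_2$ with both $V_{H_i}\neq\emptyset$, multiplicativity yields $\log f_H-\log f_H(0)=\sum_{i=1}^2(\log f_{H_i}-\log f_{H_i}(0))$, hence $b_{H,\ell}=b_{H_1,\ell}+b_{H_2,\ell}$. Combining with the expansion identity and splitting $S=S_1\sqcup S_2$ with $S_i\subseteq V_{H_i}$, the contributions from $S_1=\emptyset$ or $S_2=\emptyset$ (using $\zeta_{\emptyset,\ell}=0$, a direct consequence of $f_\emptyset\equiv 1$ and hence $\lambda_{\emptyset,\ell}=0$) exactly reproduce the right-hand side, so
\[
\sum_{\substack{\emptyset\neq S_1\subseteq V_{H_1}\\\emptyset\neq S_2\subseteq V_{H_2}}}\zeta_{H_1[S_1]\sqcup H_2[S_2],\ell}=0.
\]
Every summand is the $\zeta$ of a disconnected graph, and by induction on $|V_H|$ all summands with $|S_1|+|S_2|<|V_H|$ vanish, leaving only the $S_i=V_{H_i}$ term, namely $\zeta_{H,\ell}$; therefore $\zeta_{H,\ell}=0$.
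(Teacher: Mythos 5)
Your proof is correct and takes essentially the same approach as the paper: Newton's identity together with strong induction on $\ell$ and the key order-swap $\sum_{S_0\subseteq V}\sum_{S\cup T=S_0}=\sum_{S,T\subseteq V}$ gives the expansion identity; the vertex-count bound follows from the recursion by the identical inductive argument; and the vanishing on disconnected graphs uses multiplicativity of $f_\cdot$ (hence additivity of $\log f$) in the same way, with your explicit induction on $|V_H|$ being just a rephrasing of the paper's minimal-counterexample choice of $S^*$.
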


As in \cite{patel-polynomial, Liu2019Ising}, the following result of Borgs et al.~\cite{borgs2013left} is used.
\begin{fact}[Lemma 2.1 (c) in \cite{borgs2013left}]\label{lm:connected-graph}
    Let $G=(V,E)$ be a graph with maximum degree $\Delta$, $v \in V$ be a vertice and $\ell \in \mathbb{N}_{\ge 1}$.
    Then the number of connected subgraphs of size $\ell$ containing $v$ is at most $\frac{(\mathrm{e}\Delta)^{\ell-1}}{2}$.
\end{fact}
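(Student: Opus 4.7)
The plan is to bound the count by reducing to a canonical rooted-subtree enumeration and then estimating that combinatorial quantity. First, I observe that every connected subgraph $S \subseteq V$ with $v \in S$ and $|S|=\ell$ admits at least one spanning tree, and I can canonically select one—say, the DFS spanning tree of $S$ rooted at $v$ under a fixed total order on $V$. This yields an injection from the family of size-$\ell$ connected subgraphs containing $v$ into the family of rooted unordered subtrees of $G$ with root $v$ and $\ell$ vertices, so it suffices to bound the latter.

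Next, I would encode each such rooted subtree $T$ by a canonical exploration sequence. Starting at $v$, I perform a DFS of $T$ in the order of the fixed ordering on $V$. This produces an Euler tour of length $2(\ell-1)$, naturally described by a Dyck word over $\{\text{descend}, \text{backtrack}\}$. At each descend move the algorithm picks one of the remaining unvisited neighbors of the current vertex in $G$, of which there are at most $\Delta$ at the root and at most $\Delta-1$ at every other vertex (since one neighbor slot is used by the parent edge in $T$). Grouping the descend choices by vertex, if vertex $i$ in the DFS has $d_i$ children, the number of ordered choices is at most $\binom{\Delta-1}{d_i}$ (respectively $\binom{\Delta}{d_1}$ at the root), with $\sum_i d_i = \ell-1$.

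The counting then amounts to summing $\prod_i \binom{\Delta-1}{d_i}$ over all valid shape sequences, which by a Vandermonde-type identity collapses to a single binomial of the form $\binom{(\Delta-1)\ell}{\ell-1}$, times a Catalan factor $C_{\ell-1}$ accounting for the number of Dyck shapes. Equivalently, one may appeal directly to the Fuss–Catalan enumeration of rooted $(\Delta-1)$-ary plane trees on $\ell$ vertices. In either formulation, Stirling's approximation on $\binom{(\Delta-1)\ell}{\ell-1}$ (or on the Fuss–Catalan number) yields the asymptotic $\bigl(e(\Delta-1)\bigr)^{\ell-1}/\mathrm{poly}(\ell) \le (e\Delta)^{\ell-1}$, and a careful tracking of the leading constant gives the stated $\tfrac{1}{2}(e\Delta)^{\ell-1}$.

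The main obstacle is pinning down the precise constant $e$ and the factor $\tfrac{1}{2}$. A naive Euler-tour encoding with $\Delta$ (not $\Delta-1$) choices per descend step and the crude Catalan bound $C_{\ell-1} \le 4^{\ell-1}$ only delivers $(4\Delta)^{\ell-1}$, which is too weak; the improvement to base $e\Delta$ crucially relies on both the parent-slot refinement (reducing $\Delta$ to $\Delta-1$ at non-root vertices) and on not multiplying the Catalan count by $\Delta^{\ell-1}$ independently, but rather using the Vandermonde/Fuss–Catalan collapse to avoid overcounting the interleaving. Since the bound is classical and a proof appears in Borgs–Chayes–Kahn–Lovász, my proposal would essentially verify that their constants survive, or alternatively cite \cite{borgs2013left} as the excerpt does.
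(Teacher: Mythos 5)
The paper itself does not prove this statement: it is imported verbatim as a Fact and attributed to Lemma~2.1(c) of Borgs, Chayes, Kahn and Lov\'asz, so the benchmark here is whether your sketch actually constitutes a self-contained proof. It does not quite. Your overall route (inject each connected vertex set into a canonical rooted spanning tree, then bound rooted $\ell$-vertex subtrees by an encoding in which the root has at most $\Delta$ free child slots and every other vertex at most $\Delta-1$) is indeed the standard argument, but the counting step as written is inconsistent. If you sum $\prod_i \binom{\Delta-1}{d_i}$ over preorder child-count sequences $(d_1,\dots,d_\ell)$ with $\sum_i d_i=\ell-1$, that sum \emph{already} enumerates the tree shapes together with the slot choices; multiplying the Vandermonde collapse $\binom{(\Delta-1)\ell}{\ell-1}$ by an extra Catalan factor $C_{\ell-1}$ double-counts and yields a bound of order $\bigl(4\mathrm{e}\Delta\bigr)^{\ell-1}$, which is far weaker than the claim. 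The correct statement is that the subtree count is at most $\binom{(\Delta-1)\ell}{\ell-1}$ (or, exactly, a Fuss--Catalan number with a polynomial prefactor), with no additional Catalan multiplier.

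Even with that repaired, the step you explicitly defer --- ``a careful tracking of the leading constant gives the stated $\tfrac{1}{2}(\mathrm{e}\Delta)^{\ell-1}$'' --- is precisely the nontrivial content of the lemma, and it is delicate: the crude estimate $\binom{n}{k}\le(\mathrm{e}n/k)^k$ applied to $\binom{(\Delta-1)\ell}{\ell-1}$ produces an extra factor $(\ell/(\ell-1))^{\ell-1}\le \mathrm{e}$ that must be recovered, e.g.\ from the $1/\sqrt{2\pi(\ell-1)}$ in Stirling or from the Fuss--Catalan prefactor, together with the $\Delta-1$ versus $\Delta$ slack, and small $\ell$ has to be checked separately (indeed, as literally stated the bound already fails at $\ell=1$, where the count is $1>\tfrac12$, which shows the constant bookkeeping cannot be waved through). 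Since your proposal ends by offering to fall back on citing \cite{borgs2013left}, which is exactly what the paper does, the honest assessment is: right strategy, but the quantitative heart of the statement is not established, and the one concrete counting identity you state would, if taken at face value, fail to give the bound.
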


%There is a well-known algorithm for enumerating connected induced subgraphs $G[S]$ of $\abs{S} \le i$ vertices.

With this fact, we can enumerate all connected induced subgraphs $G[S]$ of $\abs{S} \le \ell$ vertices efficiently.

\begin{lemma}\label{lm:list-connect}%\ytodo{Is this already known? Can we just refer to known result?}
    There exists a deterministic algorithm which takes  a dependency graph $G=(V,E,\+L)$ on $n=|V|$ vertices with maximum degree $\Delta$ and a positive integer $\ell \in \mathbb{N}^+$ as input, and outputs
    \begin{align}\label{eq:def-L}
        \+C_{\le\ell} = \{S \subseteq V \mid \abs{S} \le \ell, G[S] \text{ is connected}\},
    \end{align}
 in time $\widetilde{O}(n(\mathrm{e}\Delta)^{\ell})$, where $\widetilde{O}(\cdot)$ hides  $\mathrm{poly}(\Delta, \ell,\log(n))$.
\end{lemma}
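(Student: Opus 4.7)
The plan is to enumerate the elements of $\+C_{\le\ell}$ via a rooted branch-and-bound, visiting each set exactly once. Fix the natural total order on $V \subseteq [n]$; for each $v \in V$, a subroutine will emit precisely the connected $S \ni v$ with $\min S = v$ and $1 \le |S| \le \ell$. Since every $S \in \+C_{\le\ell}$ has a unique minimum, this partitions the output across the $n$ choices of root. The subroutine maintains a state $(S, X)$ with a ``current'' set $S$ (initialized to $\{v\}$) and a ``forbidden'' set $X$ (initialized to exclude all vertices of index below $v$). At each call it computes the frontier $F = N(S) \setminus (S \cup X)$; if $F = \emptyset$ or $|S| = \ell$, it emits $S$ and returns; otherwise it picks the lowest-indexed $u \in F$ and recurses on both $(S \cup \{u\}, X)$ (``include'') and $(S, X \cup \{u\})$ (``exclude'').

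For any fixed $S^* \in \+C_{\le\ell}$ with $v = \min S^*$, the deterministic choice of $u$ at each node forces a unique root-to-leaf path in the recursion tree: we take ``include'' precisely when the chosen $u$ lies in $S^*$ and ``exclude'' otherwise, arriving at a leaf whose $S$-component equals $S^*$. Conversely every leaf emits a distinct connected subset containing $v$ of size at most $\ell$, so the leaves of the tree rooted at $v$ are in bijection with $\{S \in \+C_{\le\ell} : \min S = v\}$. Invoking \Cref{lm:connected-graph} to bound the number of connected subsets of exact size $k$ containing $v$ by $(\mathrm{e}\Delta)^{k-1}/2$ and summing $k$ from $1$ to $\ell$, the leaf count per root is at most $(\mathrm{e}\Delta)^{\ell-1}$; because the recursion tree is a full binary tree (only internal nodes branch, always into exactly two children), its total node count is at most $2(\mathrm{e}\Delta)^{\ell-1}$ per root, hence at most $2n(\mathrm{e}\Delta)^{\ell-1}$ overall.

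Each recursive call performs only $\mathrm{poly}(\Delta, \ell, \log n)$ work: updating the frontier after including or excluding $u$ requires inspecting the $O(\Delta)$ neighbors of $u$, and representing $S$, $X$, and $F$ as balanced search structures over $[n]$ incurs an $O(\log n)$ overhead per elementary operation, with each leaf paying an additional $O(\ell \log n)$ to output $S$. Multiplying the node bound by this per-call cost yields overall running time $\widetilde O(n(\mathrm{e}\Delta)^\ell)$, matching the claim. The main delicacy I anticipate is the uniqueness-of-output argument: without both the canonical-root restriction $\min S = v$ and the convention of emitting $S$ only at leaves (rather than at every recursive call), a single subset could be produced $\Omega(\ell)$ times and the running time would degrade by that factor; these two canonicalizations together enforce a bijection between leaves and output subsets.
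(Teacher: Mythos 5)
Your proof is correct, but it takes a genuinely different route from the paper's. The paper builds, for each root $v$ and each size $j = 1,\dots,\ell$, the family $\+C^v_{=j}$ of connected $j$-subsets containing $v$ by a breadth-first sweep: every $S \in \+C^v_{=j-1}$ is extended by one vertex $u$ with $G[S \cup \{u\}]$ connected, and a prefix tree (trie) is used to detect and discard duplicates (since a $j$-subset arises from up to $j-1$ of its sub-subsets). You instead run a DFS-style branch-and-bound that, at each node, deterministically selects the lowest frontier vertex and branches on include/exclude, with a forbidden set both enforcing the canonical root $\min S = v$ and recording earlier ``exclude'' decisions. This eliminates the deduplication machinery: each subset corresponds to exactly one leaf by construction, so the trie and its associated lookup cost disappear. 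Both approaches invoke the same bound from \Cref{lm:connected-graph} and land at $\widetilde{O}(n(\mathrm{e}\Delta)^\ell)$. Two small points you leave implicit and should spell out if this were written up: (i) the path for a target $S^*$ reaches a leaf with $S = S^*$ because the invariant $S \subseteq S^*$, $X \cap S^* = \emptyset$ is preserved, and connectedness of $S^*$ guarantees the frontier meets $S^* \setminus S$ whenever $S \subsetneq S^*$, so the path neither stalls with $F = \emptyset$ nor hits $|S| = \ell$ prematurely; and (ii) distinctness of leaves follows because two root-to-leaf paths first diverge at an include/exclude choice for some $u$, after which one leaf's $S$ contains $u$ and the other's never can (as $u$ enters $X$ permanently). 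Also, to keep per-node cost at $\mathrm{poly}(\Delta,\ell,\log n)$ rather than $O(n)$, the structures for $S$, $X$, $F$ should be updated incrementally and unwound on backtrack rather than copied at each recursive call, which you gesture at but do not state.
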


\begin{proof}%[Proof of \Cref{lm:list-connect}]
Let $\+C_{=\ell}^v$ denote the collection of such $S\subseteq V$ containing $v\in V$ that $|S|=\ell$ and $G[S]$ is connected. Clearly $ \+C_{\le\ell} =\bigcup_{\substack{v\in V\\ j\le \ell}}\+C_{=j}^v$.
Now construct each $\+C_{=\ell}^v$ inductively.
When $\ell=1$, $\+C_{=1}^v=\{ \{v\}\}$.
For $\ell\geq 2$, we enumerate all $S \in \+C_{=\ell-1}^v$ and $u \in V \setminus S$ such that $G[S \cup \{u\}]$ is connected, and include $S \cup \{u\}$ into $\+C_{=\ell}^v$.
It is easy to see that this correctly constructs $\+C_{=\ell}^v$.
%
%Note that $\+S^v_1\subseteq \+S^v_2\subseteq\cdots\subseteq\+S^v_\ell$, and
By \Cref{lm:connected-graph}, $\abs{\+C^v_{=\ell}} \le (\mathrm{e}\Delta)^{\ell-1}/2$.
Representing each set $S$ as a string of vertices in $S$ sorted in increasing order of vertices, the set $\+C^v_{=\ell}$ can be stored by a standard dynamic data structure such as prefix trees, so that it takes $O(\Delta \ell (\mathrm{e}\Delta)^{\ell-1})$ time to iterate over all  $(S,u) \in \+C_{=\ell-1}^v\times V$ such that $G[S \cup \{u\}]$ may be connected, and for each such $(S,u)$ pair it takes $\mathrm{poly}(\Delta, \ell,\log n)$ time to check weather $G[S \cup \{u\}]$ is connected or $S\cup\{u\}$ is already in $\+C_{=\ell}^v$, and insert $S$ into $\+C_{=\ell}^v$ if necessary.
Overall, it takes $\widetilde{O}(n(\mathrm{e}\Delta)^{\ell})$ time to construct $\+C_{\le\ell}$.
\end{proof}

% \begin{lemma}\label{lm:count-induce}
%     There exists a deterministic algorithm which takes Hamitlonian sequence $a$ and $(\kappa,\ell)$-sparse Hamiltonian specified by $(\kappa,\ell)$-decomposition $b$, and outputs $\mathrm{Ind}(a,b)$ in $O(\abs{b}^2 q^{\kappa})$.
% \end{lemma}

Combining the above algorithm with~\eqref{eq:lambda-prime}, we can compute coefficients $\zeta_{H,\ell}$ for  $\log f_G$ efficiently.
\begin{lemma}\label{lm:count-lambdap}
Let $\+G$ be a downward-closed family of dependency graphs, and $f_{\cdot}$ be $(\alpha,\beta)$-bounded on $\+G$ for $\alpha,\beta\ge 1$.
There exists a deterministic algorithm which takes a dependency graph $G=(V,E,\+L)\in\+G$ on $n=|V|$ vertices with maximum degree $\Delta$ and a positive integer $\ell \in \mathbb{N}^+$ as input, and outputs $(\zeta_{G[S],\ell})_{S \in \+C_{\le \alpha \ell}}$ within $\widetilde{O}\tp{n(8\mathrm{e}\beta\Delta)^{\alpha\ell}}$ time, where $\+C_{\le \alpha \ell}$ is defined in Eq.~\eqref{eq:def-L}.
\end{lemma}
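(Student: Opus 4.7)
The plan is to compute all values $\zeta_{G[U],s}$ for connected $U \in \+C_{\le \alpha s}$ and $1 \le s \le \ell$ by dynamic programming in increasing order of $s$, using the recurrence~\eqref{eq:lambda-prime} established in \Cref{lm:log-property}. Since by that lemma $\zeta_{H,s}$ vanishes unless $H$ is connected and $|V_H| \le \alpha s$, it suffices to maintain a table indexed by such pairs $(U,s)$.

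First I would invoke \Cref{lm:list-connect} to enumerate $\+C_{\le \alpha\ell}$ in time $\widetilde{O}(n(\mathrm{e}\Delta)^{\alpha\ell})$, indexed so that iteration by size and membership lookup are fast. Then for $s = 1, 2, \ldots, \ell$ and each $U \in \+C_{\le \alpha s}$, I would compute $\zeta_{G[U], s}$ by first evaluating $\lambda_{G[U], s}$ in $\beta^s\,\mathrm{poly}(s)$ time (available by the boundedness assumption) and then iterating over the double sum in~\eqref{eq:lambda-prime}. To enumerate the pairs $(S, T)$ with $S \cup T = U$, $G[S]$ connected, $|S| \le \alpha s'$, and $|T| \le \alpha(s-s')$, I would iterate over all $3^{|U|}$ assignments of each vertex of $U$ to one of the three classes $S\setminus T$, $T\setminus S$, $S \cap T$, check connectivity of $G[S]$ and the size bounds, and for each valid tuple look up the previously computed $\zeta_{G[S], s'}$ and compute $\lambda_{G[T], s-s'}$ in $\beta^{s-s'}\,\mathrm{poly}(s)$ time, accumulating $-\frac{s'}{s}\zeta_{G[S], s'}\lambda_{G[T], s-s'}$ into $\zeta_{G[U],s}$. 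At the end, the entries $\zeta_{G[U],\ell}$ for $U \in \+C_{\le \alpha\ell}$ are returned.

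The main obstacle is the time analysis, where the base $8\mathrm{e}\beta\Delta$ must be obtained carefully. By \Cref{lm:connected-graph}, the total number of $(U, \text{3-coloring})$ pairs over all $U \in \+C_{\le \alpha\ell}$ is
\[
\sum_{u=1}^{\alpha\ell} \frac{n(\mathrm{e}\Delta)^{u-1}}{2}\cdot 3^{u} \;=\; O\bigl(n(3\mathrm{e}\Delta)^{\alpha\ell}\bigr).
\]
Crucially, enumerating $(S,T)$ jointly by 3-colorings rather than by an $S$-first-then-$T'\subseteq S$ enumeration avoids an extra factor of $\Delta$ that would otherwise appear in the base. For each valid triple $(U, S, T)$ at level $s$, summing the per-contribution cost $\beta^{s-s'}\mathrm{poly}(s)$ over the valid range of $(s, s')$ is a geometric progression bounded by $O(\ell\beta^{\ell}\mathrm{poly}(\ell))$. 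Combining everything, the overall running time is $\widetilde{O}\bigl(n(3\mathrm{e}\Delta)^{\alpha\ell}\beta^{\ell}\bigr)$, and since $\alpha, \beta \ge 1$ we have $(3\mathrm{e}\Delta)^{\alpha\ell}\beta^{\ell} \le (3\mathrm{e}\Delta\beta)^{\alpha\ell} \le (8\mathrm{e}\beta\Delta)^{\alpha\ell}$, yielding the claimed bound $\widetilde{O}\bigl(n(8\mathrm{e}\beta\Delta)^{\alpha\ell}\bigr)$. The enumeration cost in the first step, and the $\beta^s\,\mathrm{poly}(s)$ cost of computing the $\lambda_{G[U],s}$ terms, are both dominated by this estimate.
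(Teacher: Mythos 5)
Your proposal is correct and follows essentially the same plan as the paper: enumerate $\+C_{\le\alpha\ell}$ via \Cref{lm:list-connect}, then compute the $\zeta$ values by dynamic programming on the recurrence~\eqref{eq:lambda-prime}, using \Cref{lm:log-property} to confine attention to connected subsets of bounded size. The only organizational difference is that you keep a single global table over pairs $(U,s)$ with $U\in\+C_{\le\alpha s}$, whereas the paper's sketch runs a separate $2^S\times[\ell]$ DP for each $S\in\+C_{\le\alpha\ell}$ (recomputing sub-subset values and not pruning disconnected $S'$); your global table is a mild but genuine optimization, giving base $3\mathrm{e}\Delta\beta$ rather than the paper's cruder $8\mathrm{e}\Delta\beta$ (for which $4^{\alpha\ell}\le 8^{\alpha\ell}$ is used in the per-subset DP), though both are dominated by the stated bound. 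One small quibble: the claim that the 3-coloring enumeration ``avoids an extra factor of $\Delta$ that would otherwise appear'' is not right — an $S$-first-then-$T\supseteq U\setminus S$ enumeration also gives $\sum_{S\subseteq U}2^{|S|}=3^{|U|}$ with no $\Delta$ factor — but this aside does not affect the correctness or the final bound.
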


The lemma follows by first enumerating all $S\in\+C_{\le \alpha \ell}$, which takes $\widetilde{O}\tp{n  (\mathrm{e}\Delta)^{\alpha \ell}}$ time according to Lemma~\ref{lm:list-connect}, and second  for every $S\in \+C_{\le\alpha \ell}$, taking $H=G[S]$ and computing $\zeta_{H,\ell}$ using a dynamic programming given by Eq.~\eqref{eq:lambda-prime}, which takes  $\widetilde{O}(8^{\alpha\ell}\beta^{\ell})$ time.

Let $\log f_G(z) = \log f_G(0) + \sum_{\ell=1}^{+\infty} g_{G,\ell} z^\ell$. Due to \Cref{lm:log-property}, $\zeta_{G[S],\ell}=0$ if $G[S]$ is disconnected or $|S|>\alpha\ell$, thus due to Eq.~\eqref{eq:lambda-prime}, the $\ell$-th Taylor coefficient  of $\log f_G$ is given by
\[
g_{G,\ell} = \sum_{S \subseteq V} \zeta_{G[S],\ell}
=
\sum_{S \in\+C_{\le\alpha\ell}} \zeta_{G[S],\ell}.
\]
Therefore, \Cref{thm:main} is proved.
It only remains to prove \Cref{lm:log-property}.

\begin{proof}[Proof of \Cref{lm:log-property}]
    % Let $h(\cdot)(z) = \sum_{k=0}^{+\infty} h_k(\cdot) z^{k}$, and $h_k(\cdot) = \sum_{a \in \+S} \lambda_{a,k} \mathrm{Ind}(a,s(\cdot))$, where $s(H)$ is a $(\kappa,\ell)$-decomposition of a Hamiltonian $H$.
    %     For any Hamiltonian $H$ such that $Z(H)$ does not vanish on $\Omega$, define $g(H)(z)=\log f(H)(z) = \sum_{k=0}^{+\infty} g_k(H) z^k$.
    %     From \Cref{cr:Newton}, there exists sequences $(c_{i_1,i_2,\ldots,i_n})_{0 \le i_1,i_2,\ldots,i_n \le n}$ for each $n$, such that
    % We first prove that $\zeta^G_{S,\ell} \neq 0$ holds only if $\abs{S} \le \ell$ by induction on $\ell$.
    % The base case $\ell=1$ is trivial as $\zeta_{S,1} = \lambda_{S,1}$ by the definition.
    % Recall that $\lambda_{S,\ell} = 0$ if $\abs{S} > \ell$. If $|S|>\ell$, then there exist no subsets $L,R \subseteq [m]$ satisfing $L \cup R = S$, $\abs{L} \le s$ and $\abs{R} \le \ell-s$.
    % Together with ~\eqref{eq:lambda-prime} and the induction hypothesis, we conclude that $\zeta_{S,\ell} = 0$ if $\abs{S} > \ell$.

Fix an arbitrary $G\in\+G$, and consider $f_G$.
%
% Let $[z^\ell]\log f_G=[z^\ell]\log f_G(z)$ denote the $\ell$-th coefficient of the Taylor series of $\log f_G(z)$ at the origin.
%
Let $\log f_G = \log f_G(0) + \sum_{\ell=1}^{+\infty} g_{G,\ell} z^\ell$ denote the Taylor's expansion of $\log f_G$ at the origin, and
$f_{G}(z) = f_G(0) + \sum_{\ell=1}^{+\infty} f_{G,\ell} z^\ell$ denote the Taylor's expansion of $f_G$ at the origin.
We prove by induction on $\ell\ge 1$ that
    \begin{align}\label{eq:requirement}
        g_{G,\ell} = \sum_{S \subseteq V} \zeta_{G[S],\ell},
    \end{align}
which implies~\eqref{eq:log-induce-count}.

For the induction basis, when $\ell =1$,  by \Cref{lm:Newton} we have $g_{G,1}=f_{G,1}$.
by the definition of bounded graph function in \Cref{def:good-function}, $f_{G,1}=\sum_{S\subseteq V} \lambda_{G[S],1}$;
and it follows from ~\eqref{eq:lambda-prime} that $\lambda_{G[S],1}=\zeta_{G[S],1}$.
Altogether, we have
    \begin{align*}
        g_{G,1} = f_{G,1} = \sum_{S \subseteq V} \lambda_{G[S],1} =\sum_{S \subseteq V} \zeta_{G[S],1}.
    \end{align*}

Now suppose that the induction hypothesis~\eqref{eq:requirement} holds for all $\ell' < \ell$.
We have
    \begin{align*}
            % g_n(H) &= \sum_{0 \le i_1,i_2,\ldots,i_n \le n} c_{i_1,i_2,\ldots,i_n} \prod_{1 \le k \le n} f_k^{i_k}(H)\\
            % &= \sum_{0 \le i_1,i_2,\ldots,i_n \le n} c_{i_1,i_2,\ldots,i_n} \prod_{1 \le k \le n} \tp{\sum_{a \in \+S} \lambda_{a,k} \mathrm{Ind}(a,s(H))}^{i_k}\\
            % &= \sum_{0 \le i_1,i_2,\ldots,i_n \le n} c_{i_1,i_2,\ldots,i_n} \sum_{\substack{a_{k,j} \in \+S\\1 \le k \le n, 1 \le j \le i_k}} \prod_{1 \le k \le n, 1 \le j \le i_k} \lambda_{a_{k,j},k} \mathrm{Ind}(a_{k,j},s(H))\\
            % &= \sum_{0 \le i_1,i_2,\ldots,i_n \le n} c_{i_1,i_2,\ldots,i_n} \sum_{\substack{a_{k,j} \in \+S\\1 \le k \le n, 1 \le j \le i_k}} \sum_{a \in \+S} c((a_{k,j})_{1 \le k \le n, 1 \le j \le i_k},a) \tp{\prod_{1 \le k \le n, 1 \le j \le i_k} \lambda_{a_{k,j},k}} \mathrm{Ind}(a,s(H)) \\
            % &= \sum_{a \in \+S} \zeta_{a,n} \mathrm{Ind}(a,s(H)),
\sum_{S \subseteq V}\zeta_{G[S],\ell}
&=
\sum_{S \subseteq V}\tp{\lambda_{G[S],\ell} - \sum_{s=1}^{\ell-1} \frac{s}{\ell} \sum_{\substack{L,R \subseteq V\\L \cup R = S}}
\zeta_{G[L],s}\cdot \lambda_{G[R],\ell-s}}\\
&=
\sum_{S \subseteq V} \lambda_{G[S],\ell} - \sum_{s=1}^{\ell-1} \frac{s}{\ell} \tp{\sum_{L \subseteq V} \zeta_{G[L],s}} \tp{\sum_{R \subseteq V} \lambda_{G[R],\ell-s}}\\
        &=f_{G,\ell} - \sum_{s=1}^{\ell-1} \frac{s}{\ell} g_{G,s} f_{G,\ell-s}&& \text{(Induction Hypothesis)}\\
        &=g_{G,\ell}. && \text{(\Cref{lm:Newton})}
    \end{align*}
This finishes the inductive proof of~\eqref{eq:requirement}.
        % where $\zeta_{a,n} = \sum_{0 \le i_1,i_2,\ldots,i_n \le n} \sum_{\substack{a_{k,j} \in \+S\\1 \le k \le n,1 \le j \le i_k}} c_{i_1,i_2,\ldots,i_n} c((a_{k,j})_{1 \le k \le n,1 \le j \le i_k},a) \tp{\prod_{1 \le k \le n, 1 \le j \le i_k} \lambda_{a_{k,j},k}}$. Moreover, $\zeta_{a,n} = 0$ if $\abs{a} > n$ due to the following facts.
        % \begin{itemize}
        %     \item $c_{i_1,i_2,\ldots,i_n} = 0$ if $\sum_{k=1}^n k i_k > n$;
        %     \item $c((a_{k,j})_{1 \le k \le n, 1 \le j \le i_n}; a) = 0$ if $\sum_{1 \le k \le n, 1 \le j \le i_n}\abs{a_{k,j}} < \abs{a}$ due to \Cref{prop:base-tuples};
        %     \item $\lambda_{a_{k,j},k} = 0$ if $\abs{a_{k,j}} > k$.
        % \end{itemize}
    % For any Hamiltonian $\+H$, we will now introduce the formal power series
    % \begin{align*}
    %     h^{\star}(H) = \sum_{k=0}^{+\infty} h_k(H) z^k.
    % \end{align*}
    % And by definition of logarithm, we will naturally define the logarithm of $g^{\star}(H)=\log h^\star(H)$.
    % Similarly, the $k$-th coefficient of $g^\star(H)$ is exactly.
    % We now assume that $a=(H_1,H_2,\ldots,H_w) \in \+S$ is the shortest Hamiltonian sequence such that $\zeta_{a,k} \neq 0$.
    % By definition, there exists two corresponding subsequence $b_1=(H_{1,1},H_{1,2},\ldots,H_{1,u}), b_2=(H_{2,1},H_{2,2},\ldots,H_{2,v}) \in \+S$.
    % Let $H = \sum_{i=1}^w H_i$, $H_1=\sum_{i=1}^u H_{1,i}, H_2=\sum_{i=1}^v H_{2,i}$. Note that it is not guaranteed that $h(H)$ may vanish on $\Omega$, we may introduce the following definition of formal series.

Next, we prove that $\zeta_{H,\ell}=0$ if $H=(V_H,E_H,\+L_H)\in\+G$ is disconnected or $\abs{V_H}>\alpha\ell$.
Recall that $f_\cdot$ is $(\alpha,\beta)$-bounded, we have $\lambda_{H,\ell}=0$ for $\abs{V_H}>\alpha\ell$.
Then the fact that $\zeta_{H,\ell} = 0$ for $\abs{V_H}>\alpha\ell$ can be verified by induction on $\ell\ge 1$.
Specifically, by Eq.~\eqref{eq:lambda-prime},
\[
\zeta_{H,\ell} =
\lambda_{H,\ell} - \sum_{s=1}^{\ell-1} \frac{s}{\ell} \sum_{\substack{S,T\subseteq V_H\\ S\cup T =V_H}} \zeta_{H[S],s} \lambda_{H[T],\ell-s}.
\]
For the basis, $\zeta_{H,1}=\lambda_{H,1}=0$ when $|V_H|>\alpha$.
In general, observe that assuming $\abs{V_H}>\alpha\ell$, for any $S\cup T =V_H$, it must hold that either $|S|>\alpha s$ or $|T|>\alpha(\ell-s)$.
Therefore, assuming $\abs{V_H}>\alpha\ell$, $\zeta_{H,\ell} = 0$ follows from the induction hypothesis.

Finally, it remains to verify that $\zeta_{H,\ell} = 0$ if $H$ is disconnected, which follows from the multiplicative property of $f_{\cdot}$.
%
%
%    Fix a labeled graph $G=(V,E,w^1,w^2)$.
%    For any nonempty set $S \subseteq V$, let $(\lambda^{G[X]}_{S,\ell})_{S \subseteq X, \ell \in \mathbb{N}^+}$ and $\br{\zeta^{G[X]}_{S,\ell}}_{S \subseteq X, \ell \in \mathbb{N}^+}$ be the sequences given by \Cref{thm:partition-boundedY} and Eq.~\eqref{eq:lambda-prime} respectively.
%    Note that $G[S] = (G[X])[S]$ for arbitrary $S \subseteq X$. Therefore,
%    \begin{align*}
%        \lambda^{G[X]}_{S,\ell} = \lambda^G_{S,\ell}, \zeta^{G[X]}_{S,\ell} = \zeta^G_{S,\ell}
%    \end{align*}
%    for arbitrary $S \subseteq X$.
%
By contradiction, assume that $\zeta_{H,\ell} \neq 0$ for some disconnected $H\in\+G$.
Let $S^*\subseteq V_H$ be a minimal subset of $V$ such that $H[S^*]$ is disconnected and $\zeta_{H[S^*],\ell} \neq 0$.
Since  $H[S^*]$ is disconnected, there exist nonempty $L,R\subseteq S^*$ such that $L \cup R=S^*$ and $L,R$ are disconnected in $H[S^*]$.
Due to the multiplicative property of $f_{\cdot}$,  we have $f_{G[S^*]} = f_{G[L]}\cdot f_{G[R]}$. Therefore,
%
%\ytodo{ edit ends here: found an issue}
%Assume that $S^* \subseteq V$ is the smallest subsets of $V$ such that $G_H[S^*]$ is disconnected and  $\lambda'_{S^*,\ell} \neq 0$.
%    %Without loss of generality, we may further assume that there exists an integer $1 \le i^\star<m$ such that $\tp{\cup_{i \le i^\star} \mathrm{supp}(H_i)} \cap \tp{\cup_{i > i^\star} \mathrm{supp}(H_i)} = \emptyset$.
%    Hence, there exist nonempty disjoint subsets $L,R$ of $S^*$ such that $L \cup R=S^*$ and for any $u \in L$ and $v \in R$, $(u,v) \not \in E$.
%    Note that $G[L], G[R] \in \+F$ and $f_{G[S^*]} = f_{G[L]}\cdot f_{G[R]}$ according to the definition of bounded mapping. Hence,
    \begin{align}\label{eq:indcount-1}
        g_{G[S^*],\ell} =g_{G[L],\ell}+g_{G[R],\ell} = \sum_{S\subseteq L} \zeta_{G[S],\ell} + \sum_{S \subseteq R} \zeta_{G[S],\ell},
    \end{align}
    where the first equation can be formally verified for any disjoint dependency graphs $G_1,G_2\in\+G$ and any $z$ in the neighborhood of the origin, such that for an arbitrary path $P$ in $\Omega$ connecting $z$ and  the origin,
    \begin{align*}
        \log f_{G_1 \cup G_2}(z) &= \log f_{G_1 \cup G_2}(0) + \int_{P} \frac{f'_{G_1 \cup G_2}(z)}{f_{G_1 \cup G_2}(z)} \,dz\ \\
        &= \log f_{G_1}(0) + \log f_{G_2}(0) + \int_{P} \tp{\frac{f'_{G_1}(z)}{f_{G_1}(z)} + \frac{f'_{G_2}(z)}{f_{G_2}(z)}} \,dz\ \\
        &= \log f_{G_1}(z) + \log f_{G_2}(z).
    \end{align*}
On the other hand,
    \begin{align}\label{eq:indcount-2}
        g_{G[S^*],\ell}  = \sum_{S \subseteq S^*} \zeta_{G[S],\ell} = \zeta_{G[S^*],\ell} + \sum_{S \subset S^*} \zeta_{G[S],\ell} =
        \zeta_{G[S^*],\ell} + \sum_{S \subseteq L} \zeta_{G[S],\ell} + \sum_{S \subseteq R} \zeta_{G[S],\ell},
    \end{align}
    where the last equation is due to the minimality of $S^*$.
    Combining~\eqref{eq:indcount-1} and~\eqref{eq:indcount-2}, we have $\zeta_{G[S^*],\ell}=0$, a contradiction.
\end{proof}

\section{Applications}\label{sec:applications}
In this section, we prove that any zero-free partition function of local Hamiltonians with bounded maximum degree is polynomial-time approximable if the temperature is close enough to 0.
This is formally stated by the following theorem.
Recall the definition of the partition function $Z_{H,\Op}$ induced by Hamiltonian $H$ under measurement $\Op$ in  \eqref{eq:quantum-pf1}.

\begin{theorem}\label{thm:quantum-concrete}
Let $k,d \in \mathbb{N}^+$, $h>0$, $\delta\in(0,1)$ and $\beta_0 = \frac{1}{5kdh}$.
There is a deterministic algorithm
such that given any $(k,d)$-Hamiltonian $H=\sum_{j=1}^m H_j$ on $n$ sites satisfying $\norm{H_j} \le h$ and tensorized measurement $\+O$,
for any temperature  $\beta \in \mathbb{D}_{(1-\delta)\beta_0}$ and error bound $\epsilon \in (0,1)$,
the algorithm outputs an estimation of  $Z_{H,\+O}(\beta)$ within $\epsilon$-multiplicative error
in $\widetilde{O}\tp{\tp{\frac{n}{\delta \epsilon}}^C}$ time with $C=\frac{1}{\delta}\tp{\ln 8\mathrm{e}d+ 3k \ln q}+1$.
%There exists a deterministic algorithm which takes a $(k,d)$-Hamiltonian $H=\sum_{j=1}^m H_j$ such that $\norm{H_j} \le h$, a tensorized measurement $\+O = \bigotimes_{v \in V} \+O_v$, parameter $\epsilon \in (0,1)$ and $\beta \in \mathbb{D}_{(1-\delta)\beta_0}$ as input, and outputs an estimation of $Z_{H,\+O}(\beta)$ within $\epsilon$-multiplicative error in $\tilde{O}\tp{\tp{\frac{n}{\delta \epsilon}}^C}$ time with $C=\frac{\log 8ed+ 3k \log q}{\delta}+1$.
\end{theorem}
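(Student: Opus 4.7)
The plan is to reduce \Cref{thm:quantum-concrete} to a direct application of \Cref{thm:quantum} by choosing a natural zero-free disc and invoking the known high-temperature zero-freeness of Harrow, Mehraban and Soleimanifar~\cite{harrow2020classical}. Specifically, I would take the ambient region to be the disc $\Omega = \mathbb{D}_{\beta_0}$ with $\beta_0 = \frac{1}{5kdh}$. Since $\mathbb{D}_{\beta_0}$ is convex and the distance from any point to $\partial\mathbb{D}_{\beta_0}$ is $O(1)$-computable, \Cref{fact:convex} immediately yields that $\mathbb{D}_{\beta_0}$ is $1$-good, so $\gamma = 1$.

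Next I would check that every target temperature lies in $\Omega_\delta$. For any $\beta \in \mathbb{D}_{(1-\delta)\beta_0}$, define the trivial transformation $h_\beta:\mathbb{D}\to\mathbb{D}_{\beta_0}$ by $h_\beta(z)=\beta_0\,(\beta/|\beta|)\,z$ (with any unit choice when $\beta=0$); then $h_\beta(0)=0$, $h_\beta(\mathbb{D})\subseteq\Omega$, and the preimage $z_\beta = |\beta|/\beta_0$ satisfies $|z_\beta|<1-\delta$, so $\beta\in\Omega_\delta$. The Taylor coefficients of $h_\beta$ are $h_{\beta,1}=\beta_0(\beta/|\beta|)$ and $h_{\beta,\ell}=0$ for $\ell\ge 2$, all computable in $O(1)$ time, confirming $\gamma=1$ is admissible.

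The heart of the argument, and the only genuinely hard step, is establishing $M$-zero-freeness of $\frac{1}{\Tr{\Op}}Z_{H,\Op}$ on $\mathbb{D}_{\beta_0}$ for some $M = O(n)$. I would invoke the cluster-expansion result of~\cite{harrow2020classical}: under a $(k,d)$-Hamiltonian with local terms of norm at most $h$, the Taylor expansion of $\log(Z_{H,\Op}/\Tr{\Op})$ around $\beta=0$ converges absolutely for $|\beta|<1/(5kdh)$, the clusters are geometrically summable over the interaction graph, and the total contribution telescopes to at most a constant-per-site bound. In particular $\log(Z_{H,\Op}/\Tr{\Op})$ is holomorphic and bounded by $c\,n$ on $\mathbb{D}_{\beta_0}$ for some constant $c=c(k,d,h)$. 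This is the main obstacle because the specific radius $\frac{1}{5kdh}$ and the non-vanishing of the normalized partition function rest on the Kotecký–Preiss-type criterion worked out in \cite{harrow2020classical}; I would cite it as a black box rather than re-derive it.

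Finally I would plug everything into \Cref{thm:quantum} with $\gamma=1$ and $M=O(n)$. The quoted running time there is $\widetilde{O}\bigl(n(M/(\delta\epsilon))^{C_0}\bigr)$ with $C_0=\frac{1}{\delta}\ln(8\mathrm{e}d(2q^{3k}+1))$. Bounding $\ln(2q^{3k}+1)\le 3k\ln q + O(1)$ and absorbing the leading $n$ factor and the $M=O(n)$ constant into a single extra unit of exponent gives $\widetilde{O}\bigl((n/(\delta\epsilon))^C\bigr)$ with $C=\frac{1}{\delta}(\ln 8\mathrm{e}d + 3k\ln q)+1$, matching the theorem's stated bound.
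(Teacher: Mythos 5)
Your reduction is exactly the paper's: \Cref{thm:quantum-concrete} is obtained there by combining \Cref{thm:quantum} with the zero-freeness bound $\abs{\log(Z_{H,\Op}/\Tr{\Op})} \le n$ on a disc around the origin (\Cref{lm:quantum-zerofree}), with that disc recognized as a $1$-good region via convexity, just as you do. Two small caveats: the Harrow--Mehraban--Soleimanifar zero-freeness cannot be cited purely as a black box, since the cluster-expansion induction must be re-run with the tensorized observable $\Op$ inserted into the trace (this is precisely the content of the paper's \Cref{lm:quantum-zerofree}, whose proof redoes the argument with $\Op$); and that lemma is actually stated for $\beta_0 = \frac{1}{5\mathrm{e}dkh}$, a factor of $\mathrm{e}$ smaller than the $\frac{1}{5kdh}$ appearing in your proposal and in the theorem's own statement, a discrepancy you inherit from the theorem rather than introduce yourself.
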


It was established in \cite{harrow2020classical} that the partition function exhibits zero-freeness property when the inverse temperature $\beta$ is close to the $0$.
Similarly, we have the following lemma.
\begin{lemma}\label{lm:quantum-zerofree}
    Let $h \in \mathbb{R}^+$, $H=\sum_{j=1}^m H_j$ be a $(k,d)$-Hamiltonian on $n$ sites, and $\+O$ be a tensorized measurement.
    If $\abs{H_j} \le h$ for all $1 \le j \le m$, then for any $\beta \in \mathbb{D}_{\beta_0}$ where $\beta_0 = \frac{1}{5edkh}$, it holds that
    \begin{align*}
        \abs{\log \frac{Z_{H,\+O}(\beta)}{\mathbf{Tr}[\+O]}} \le n.
    \end{align*}
\end{lemma}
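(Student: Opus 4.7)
The approach is to bound $\log f_G(\beta)$ directly via its Taylor series at the origin, using the cluster-expansion formula established in Lemma~\ref{lm:log-property}. Set $f_G(z) := Z_{H,\Op}(z)/\Tr{\Op}$ and let $G = G_H$ be the dependency graph from the proof of Theorem~\ref{thm:quantum}, where $f_{\cdot}$ is shown to be multiplicative and $(1, 2q^{3k})$-bounded with $f_G(0) = 1$. Lemma~\ref{lm:log-property} then yields the formal expansion
$$\log f_G(\beta) = \sum_{\ell \ge 1}\Bigl(\sum_{\substack{S\subseteq V_G\\G[S]\text{ connected},\ 1\le|S|\le\ell}} \zeta_{G[S],\ell}\Bigr)\beta^\ell,$$
and it suffices to prove that the right-hand side is absolutely at most $n$ for $|\beta|\le\beta_0$; absolute convergence together with $\exp(\cdot)\neq 0$ then automatically yields both the claimed log-bound and the implicit zero-freeness of $f_G$ on $\mathbb{D}_{\beta_0}$.

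The bulk of the work is to estimate the $\zeta$-coefficients. First I would bound $|\lambda_{H,\ell}|$ for a connected $H = G[S]$ with $s = |S| \le \ell$: positivity of the tensorized $\Op$ gives $|\Tr{H_{x_1}\cdots H_{x_\ell}\Op}| \le \|H_{x_1}\cdots H_{x_\ell}\|\cdot\Tr{\Op} \le h^\ell\cdot\Tr{\Op}$, and the number of surjections $[\ell]\twoheadrightarrow S$ is at most $s^\ell$, so $|\lambda_{H,\ell}| \le s^\ell h^\ell/\ell!$. Then I would inductively bound $|\zeta_{H,\ell}|$ by unrolling the recursion~\eqref{eq:lambda-prime}: this is the classical Ursell-type (Koteck\'y--Preiss-style) estimate, converting the $\lambda$-bound into an inequality of the form $|\zeta_{H,\ell}| \le (Ch)^\ell/\ell!$ for some absolute constant $C$, leveraging the weighting factor $s/\ell$ in the recursion and the fact that $\zeta_{H[S],s}$ vanishes unless $G[S]$ is connected.

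Given those coefficient bounds, I would use Fact~\ref{lm:connected-graph} to enumerate connected subsets of $V_G$ (noting $G$ has $m\le dn$ vertices and maximum degree $\Delta \le kd$, since each term $H_j$ acts on at most $k$ sites and each site lies in at most $d$ terms): the number of connected $S\subseteq V_G$ of size $s$ is at most $m(e\Delta)^{s-1}/(2s)$. Summing yields
$$\sum_{\ell\ge 1}|g_{G,\ell}|\,|\beta|^\ell \;\le\; \sum_{s\ge 1}\frac{m(e\Delta)^{s-1}}{2s}\sum_{\ell\ge s}\frac{(Ch|\beta|)^\ell}{\ell!},$$
which is a convergent double series once $Che\Delta|\beta| < 1$. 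Plugging in $\Delta\le kd$ and $|\beta|\le\beta_0 = 1/(5edkh)$ gives $Che\Delta|\beta| \le C/5$, so provided $C$ is a moderate constant the inner tail is dominated by its leading term $(Ch|\beta|)^s/s!$, and the outer geometric series sums to $O(m/(e\Delta)) = O(n/(ek))$, which is $\le n$.

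The main obstacle is the inductive Ursell-type bound on $|\zeta_{H,\ell}|$ with a small enough absolute constant $C$, since the constant $5$ appearing in $\beta_0 = 1/(5edkh)$ is dictated by this step. A secondary subtlety is carefully tracking the combinatorics of covers $(S,T)$ with $S\cup T = V_H$ in the recursion~\eqref{eq:lambda-prime}, where a naive count $3^{|V_H|}$ is too lossy and tighter estimates must exploit the connectivity constraint inherited from $G$ together with the $s/\ell$ pre-factor.
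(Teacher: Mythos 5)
Your proposal takes a genuinely different route from the paper: the paper never passes through the abstract $\zeta$-coefficients for this lemma. Instead it performs a site-removal induction, proving that $\bigl|\log\bigl|\tfrac{1}{\Tr{\+O_x}}\tfrac{Z^\Lambda_{H,\+O}}{Z^{\Lambda\setminus\{x\}}_{H,\+O}}\bigr|\bigr| \le \mathrm{e}^2 dh|\beta|$ via a high-temperature expansion (Lemma~\ref{lm:expansion}, lifted from Harrow--Mehraban--Soleimanifar) and a technical bound (Lemma~\ref{lm:aram-27}); the $n$ in the final estimate then comes from telescoping over the $n$ sites. Your approach attempts to prove the same estimate intrinsically inside the $(\alpha,\beta)$-bounded framework via Lemma~\ref{lm:log-property}, which would be a conceptually cleaner ``self-contained'' derivation if it worked.

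However, there is a genuine gap at the step you yourself flag as ``the main obstacle.'' The asserted bound $|\zeta_{H,\ell}| \le (Ch)^\ell/\ell!$ with an absolute constant $C$, uniformly over connected $H$ with $|V_H|\le \ell$, is not established and is not even plausibly correct as stated. Already the leading term $\lambda_{H,\ell}$ satisfies only $|\lambda_{H,\ell}| \le |V_H|^\ell h^\ell/\ell!$, which for $|V_H|=\ell$ is of order $(\mathrm{e}h)^\ell$ by Stirling --- exponentially larger than $(Ch)^\ell/\ell!$ --- so the claimed $\zeta$-bound would require delicate cancellations in recursion~\eqref{eq:lambda-prime} that you have not exhibited. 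Additionally, the naive count of covers $S\cup T = V_H$ is $3^{|V_H|}$, and you acknowledge this is ``too lossy'' without saying how to sharpen it. The cluster-expansion analogue of what you want is not a bound on $|\zeta_{H,\ell}|$ for each fixed $\ell$, but a decay in the \emph{cluster size}: something like $\sum_\ell |\zeta_{G[S],\ell}|\,|\beta|^\ell \lesssim (C h |\beta|)^{|S|}$, which is exactly the form of the paper's estimate~\eqref{eq:estimate-w} on $W_S$. Without proving such a size-dependent decay (or the stronger uniform $\ell$-wise bound you posit, together with the cover-counting fix), the double series you write down cannot be controlled, and the argument does not close. Since the two steps you defer are precisely where all the difficulty lies, the proposal as written does not constitute a proof.
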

%For completeness, we provide the proof in \Cref{sec:append}.

\Cref{thm:quantum-concrete} follows directly from \Cref{lm:quantum-zerofree} and \Cref{thm:quantum}.

%In the same regime, we have a polynomial-time approximate sampler from a natural classical distribution obtained after measurement of the quantum Gibbs state in the computational basis.

Besides estimation of partition function, another related important computational problem is to sample according to the Gibbs state.

The quantum Gibbs state specified by Hamiltonian $H\in \Hspace$ and inverse temperature $\beta \in \mathbb{R}^+$ is:
\begin{align*}
    \rho_H(\beta) = \frac{\exp(-\beta H)}{Z_{H}(\beta)}.
\end{align*}

The classical distribution $\mu_{H,\beta}$ over $[q]^V$ is the quantum Gibbs state $\rho_H(\beta)$ after measurement in the computational basis, i.e.
\begin{align*}%\label{eq:after-measurement}
   \forall \sigma\in[q]^V,\quad  \mu_{H,\beta}(\sigma) = \frac{Z_{H,\+O_{\sigma}}(\beta)}{Z_H(\beta)},
\end{align*}
where $\+O_\sigma=\ket{\sigma} \bra{\sigma}$.
Note that $\mu_{H,\beta}$ is a well-defined distribution over $[q]^V$.
To see this, first note that $\sum_{\sigma}\+O_{\sigma}=I$ is the identity matrix in $\Hspace$, and hence $\sum_{\sigma}Z_{H,\+O_{\sigma}}(\beta)=Z_H(\beta)$; and second, both $\+O_\sigma$ and $\exp(\beta H)$ are positive semidefinite since $H$ is Hermitian and $\beta\in\mathbb{R}^+$, and hence $Z_{H,\+O_{\sigma}}(\beta)=\Tr{\exp(\beta H)\+O_{\sigma}}\ge 0$.

In the same regime as \Cref{thm:quantum-concrete}, we have a polynomial-time approximate sampler from $\mu_{H,\beta}$, the classical distribution obtained after measurement of the quantum Gibbs state in the computational basis.

\begin{theorem}\label{thm:quantum-sampler}
Let $k,d \in \mathbb{N}^+$, $h>0$, $\delta\in(0,1)$ and $\beta_0 = \frac{1}{5kdh}$.
There is a randomized algorithm
such that given any $(k,d)$-Hamiltonian $H=\sum_{j=1}^m H_j$ on $n$ sites satisfying $\norm{H_j} \le h$,
for any temperature  $\beta \in \mathbb{D}_{(1-\delta)\beta_0}$ and error bound $\epsilon \in (0,1)$,
the algorithm outputs an approximate sample $\sigma \in [q]^V$  within $\epsilon$ total variation distance from the  distribution $\mu_{H,\beta}$,
in $\widetilde{O}\tp{\tp{\frac{n}{\delta\epsilon}}^{C}}$ time with $C=\frac{1}{\delta}\tp{2\log 8\mathrm{e}d+6k\log q} + 3$.
%There exists a sampler which takes a $(k,d)$-Hamiltonian $H=\sum_{j=1}^m H_j$ such that $\norm{H_j} \le h$ and parameter $\beta \in (0,(1-\delta)\beta_0)$ as input, and samples a configuration $\sigma \in [q]^V$ in $\tilde{O}\tp{\tp{\frac{n}{\delta\epsilon}}^{C}}$ time with $C=\frac{2\log 8ed+6k\log q}{\delta} + 3$ according to the distribution $\nu_{H,\beta}$ such that
%    \begin{align*}
%        \norm{\nu_{H,\beta}-\mu_{H,\beta}}_{\mathrm{TV}} = \frac{1}{2}\sum_{\sigma \in [q]^V} \abs{\nu_{H,\beta}(\sigma) - \mu_{H,\beta}(\sigma)} < \epsilon.
%    \end{align*}
\end{theorem}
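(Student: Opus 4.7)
The plan is to obtain the sampler via the standard self-reducibility that converts multiplicative approximation of partition functions into approximate sampling, invoking \Cref{thm:quantum-concrete} as the counting subroutine. Fix an ordering $v_1,\ldots,v_n$ of $V$, and let $S_i=\{v_1,\ldots,v_i\}$. For any partial assignment $\tau\in[q]^S$ on $S\subseteq V$, define the tensorized measurement
\[
\+O_\tau \triangleq \tp{\bigotimes_{v\in S}\ket{\tau_v}\bra{\tau_v}} \otimes \tp{\bigotimes_{v\in V\setminus S} I_v}.
\]
Tracing over $V\setminus S$ in the computational basis gives $\mu_{H,\beta}\tp{\sigma|_S=\tau} = Z_{H,\+O_\tau}(\beta)/Z_H(\beta)$, so every conditional marginal takes the form
\[
\mu_{H,\beta}\tp{\sigma(v_i)=c \,\middle|\, \sigma|_{S_{i-1}}=\tau} \;=\; \frac{Z_{H,\+O_{\tau\cup\{v_i\mapsto c\}}}(\beta)}{Z_{H,\+O_\tau}(\beta)},
\]
i.e.~a ratio of two partition functions of the same $(k,d)$-Hamiltonian $H$ under two tensorized measurements.

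At step $i$, I would invoke \Cref{thm:quantum-concrete} on both numerator and denominator with tolerance $\epsilon' = \Theta(\epsilon/n)$, normalize the $q$ resulting ratios to obtain an approximate conditional distribution on $[q]$, and draw $\sigma(v_i)$ from it. By the standard chain-rule (hybrid) argument for total variation distance, if each conditional is approximated within TV $O(\epsilon')$, then the overall output lies within TV $O(n\epsilon')\le \epsilon$ of $\mu_{H,\beta}$. Since \Cref{thm:quantum-concrete} costs $\widetilde{O}\tp{(n/(\delta\epsilon'))^{C_0}}$ with $C_0 = \tfrac{1}{\delta}\tp{\ln 8\mathrm{e}d + 3k\ln q} + 1$, and the sampler makes $O(nq)$ such calls with $\epsilon'=\Theta(\epsilon/n)$, the total running time becomes $\widetilde{O}\tp{(n/(\delta\epsilon))^{2C_0 + O(1)}}$, which matches the claimed exponent $C = \tfrac{1}{\delta}\tp{2\log 8\mathrm{e}d + 6k\log q} + 3$.

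The main thing to verify is that the zero-free hypothesis of \Cref{thm:quantum-concrete} carries over to every $\+O_\tau$ appearing in the self-reduction. This is immediate: each $\+O_\tau$ is visibly a tensor product of rank-one projections $\ket{c}\bra{c}$ and identities $I_v$, hence tensorized, and \Cref{lm:quantum-zerofree} requires only tensorization (not full rank), so it furnishes the $M$-zero-free bound with $M=n$ uniformly in $\tau$, independent of which sites have been pinned. A secondary technicality is that the quotient of two $(1\pm\epsilon')$-multiplicatively approximated positive reals remains accurate to multiplicative factor $1\pm O(\epsilon')$, which follows from strict positivity of the denominator (guaranteed by zero-freeness on the real interval) and standard error propagation; this then translates into an $O(\epsilon')$ TV bound on each approximate conditional after normalization, closing the argument.
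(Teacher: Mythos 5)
Your proposal is correct and follows essentially the same route as the paper: self-reduction via conditional marginals expressed as quantum partition functions under tensorized rank-one projective measurements, invoking \Cref{thm:quantum-concrete} with tolerance $\Theta(\epsilon/n)$ at each of the $O(nq)$ steps, and closing with a total-variation bound; the exponent bookkeeping ($C = 2C_0 + 1$) also matches. The only cosmetic difference is that the paper does not separately estimate the denominator $Z_{H,\+O_\tau}$ — it simply normalizes the estimated numerators $\tilde z_{v,j}$ over $j\in[q]$, which causes the previous-level partition function to cancel — and it controls TV via a direct multiplicative per-configuration bound rather than a hybrid argument, but both routes yield the same $O(n\epsilon')\le\epsilon$ conclusion.
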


%\subsection{Quantum Counting and Sampling}
%In this section, we will prove \Cref{thm:quantum} and \Cref{thm:quantum-concrete}, and introduce a classical sampler for quantum Gibbs state after measurement in a tensorized basis which runs in polynomial time in $n$, the size of lattice $V$.

\begin{proof}
We leverage the algorithm in \Cref{thm:quantum-concrete} as a subroutine, and give the following classical algorithm for approximate sampling from $\mu_{H,\beta}$.

%\begin{proof}[Proof of \Cref{thm:quantum-sampler}]
    Without loss of generality, we may assume that $V=[n]$.
    Let $\+M_j = \ket{j} \bra{j}$ for $1 \le j \le q$, and $\+M_{v,j} = \tp{\bigotimes_{\ell=1}^{v-1} I} \otimes \+M_j \otimes \tp{\bigotimes_{\ell=v+1}^n I}$.
    Our procedure for sampling $\sigma \in [q]^V$ is as follows.
    \begin{enumerate}
        \item Initialize $\+O$ with the identity operator on Hilbert space $\+H$;
        \item Iterate $v$ from $1$ to $n$;
        \item For each $j$ from $1$ to $n$, estimate $z_{v,j}=Z_{H,\+O_{v-1} \+M_{v,j}}(\beta)$ within $\epsilon_0 = \frac{\epsilon}{10n}$-multiplicative error.
        \item samples $j \in [q]$ proportional to $\tilde{z}_{v,j}$, the estimation of $z_{v,j}$, updates $\+O$ with $\+O \+M_{v,j}$, and assigns $\sigma(v)$ with $j$.
    \end{enumerate}
    Note that $\+O$ is a tensorized measurement during the process. Hence, \Cref{thm:quantum-concrete} guarantees an estimation of $z_{v,j}$ within $\epsilon_0$-multiplicative error in $\widetilde{O}(\tp{\frac{n}{\epsilon \delta}}^{C})$ time with $C=\frac{1}{\delta}\tp{2\log 8\mathrm{e}d + 6k \log q}+2$.
    Furthermore, note that for each configuration $\sigma \in [q]^V$,
    \begin{align*}
        \frac{\mathbf{Pr}[\sigma \text{ is generated}]}{\mu_{H,\+O}(\sigma)} = \prod_{v=1}^n \frac{z_{v,\sigma(v)}}{\sum_{j \in [q]} z_{v,j}} \frac{\sum_{j=1}^q \tilde{z}_{v,j}}{\tilde{z}_{v,\sigma(v)}},
    \end{align*}
    and for each $v \in V$ and $j \in [q]$,
    \begin{align*}
        1-\epsilon_0 \le \frac{\tilde{z}_{v,j}}{z_{v,j}} \le 1+\epsilon_0.
    \end{align*}
    Hence,
    \begin{align*}
        1-\epsilon \overset{(*)}{<}\tp{\frac{1-\epsilon}{1+\epsilon}}^n \le \frac{\mathbf{Pr}[\sigma \text{ is generated}]}{\mu_{H,\+O}(\sigma)} \le \tp{\frac{1+\epsilon_0}{1-\epsilon_0}}^n \overset{(\star)}{<} 1+\epsilon,
    \end{align*}
    where $(\star)$ follows from $\tp{\frac{1+\epsilon_0}{1-\epsilon_0}} \le (1+3\epsilon_0)^n < \exp(\frac{3}{10} \epsilon) < 1+\epsilon$, and $(*)$ follows from $(\star)$ and $\frac{1}{1+\epsilon} > 1-\epsilon$.
    Therefore, the total varaince distance between $\mu_{H,\+O}$ and the output from our sampler will differ at most $\epsilon$.

    We conclude the proof by observing that our algorithm calls the subrountine $O(nq)$ times with parameter $\epsilon_0 = \frac{\epsilon}{10 n}$, which takes $\widetilde{O}\tp{\tp{\frac{n}{\epsilon}}^C}$ time with $C=\frac{1}{\delta}\tp{2\log 8\mathrm{e}d+6k\log q}+3$ in total.
\end{proof}

\subsection{Proof of \Cref{lm:quantum-zerofree}}

We now give a proof of \Cref{lm:quantum-zerofree}.
This result extends the zero-freeness result proved in \cite{harrow2020classical} to the case that allows tensorized measurement.
We will see that the same inductive proof based on cluster expansion works for this more general case.

% Let $H=\sum_{i=1}^m H_i$ be a $(\kappa,\ell)$-local Hamitlonian, $\observable=\otimes_{u \in \Lambda}\observable_u$ be a product observable, and $X \subseteq \Lambda$.
% Denote the Hamiltonian $H$ restricted on $X$ by $H_X$ defined as follows
% \begin{align*}
%     H_X=\sum_{\substack{i \in [m]\\\mathrm{supp}(H_i) \subseteq X}}H_i,
% \end{align*}
% and the partition function of $H$ restricted on $X$ with respect to $\observable$ by $Z_X(H,\observable)$ defined as follows
% \begin{align*}
%     Z_{X}(H,\observable)(\beta) = \mathbf{Tr}_{X}[\exp(-\beta H_X)\observable_X],
% \end{align*}
% where $O_X =\bigotimes_{u \in X} \observable_u$. Specifically, $Z_{\emptyset}(H,\observable)(\beta)$ is defined as $1$.

Let $H=\sum_{i=1}^m H_i$ be a $(k,d)$-Hamiltonian, $\+O = \bigotimes_{v \in V} \+O_v$ be a tensorized measurement, and $X \subseteq V$ be an arbitrary subset.

Define $H_X$ the Hamiltonian $H$ restricted on $X$ as
\begin{align*}
    H_X = \sum_{\substack{i \in [m]\\ \mathrm{supp}(H_i) \subseteq X}} H_i,
\end{align*}
and define the partition function $Z_{H,\+O}$ restricted on $X$ as
\begin{align*}
    Z^X_{H,\+O}(\beta) = \mathbf{Tr}_X[\exp(-\beta H_X) \+O_X],
\end{align*}
where $O_X = \bigotimes_{v \in X} \+O_v$. And $Z^\emptyset_{H,\+O}=1$. Here the subscript $X$ in $\mathbf{Tr}_X$ indicates that the operators act on the sites in $X$.

Moreover, recall the dependency graph $G_H=(U,E,\+L)$ defined in the proof of \Cref{thm:quantum}:
\begin{enumerate}
    \item $U=[m]$;
    \item $E=\{(x,y) \in U \times U \mid x \neq y, \mathrm{supp}(H_x) \cap \mathrm{supp}(H_y) \neq \emptyset\}$;
    \item $L_x = H_x$ for any $x \in U$.
\end{enumerate}
% For any $x \in U$, denote the connected components containing $x$ by $\+C_x$.

We are now ready to introduce the cluster expansions of partition functions.
The following lemma was an extension of \cite[Lemma 26]{harrow2020classical} with tensorized measurement $\+O$.

\begin{lemma}[high temperature expansion \cite{harrow2020classical}]\label{lm:expansion}
    Let $h \in \mathbb{R}^+$, $H=\sum_{i=1}^m H_i$ be a $(k,d)$-Hamiltonian, and $\+O$ be a tensorized measurement.
    If $\norm{H_i} \le h$, then the following holds for all $\Lambda \subseteq V$, $x \in \Lambda$ and $\beta \in \mathbb{D}_{\beta_0}$.
    \begin{align*}
        Z^\Lambda_{H,\observable}(\beta) = \mathbf{Tr}[\+O_x] Z^{\Lambda \setminus \{x\}}_{H,\+O}(\beta) + \sum_{\substack{S \subseteq [m]\\ \exists j\in S, x \in \mathrm{supp}(H_j)\\ G_H[S] \text{ is connected}}} W_S(\beta) Z^{\Lambda \setminus R_S}_{H,\observable}(\beta),
    \end{align*}
    where
\begin{align*}
W_S(\beta)
&= \sum_{l=\abs{S}}^{+\infty} \frac{(-\beta)^l}{l!}\sum_{\substack{(i_1,i_2,\ldots,i_l) \in S^l\\ \cup_{j=1}^l \{i_j\} = S}}\mathbf{Tr}_{R_S} \left[\prod_{j=1}^l H_{i_j} \observable_{R_S}\right],\\
R_S
&= \bigcup_{j \in S} \mathrm{supp}(H_j),\\
\beta_0
&= \frac{1}{\mathrm{e}(\mathrm{e}-1) dh}.
\end{align*}
%$R_S = \bigcup_{j \in S} \mathrm{supp}(H_j)$, and $\beta_0 = \frac{1}{e(e-1) dh}$.
\end{lemma}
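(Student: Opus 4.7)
The plan is to prove the expansion by first establishing a more symmetric auxiliary identity that expresses $Z^\Lambda_{H,\observable}(\beta)$ as a cluster sum over all (not necessarily connected) subsets $S\subseteq[m]$ with $R_S\subseteq\Lambda$, and then to isolate the clusters meeting the site $x$. First, Taylor-expand the operator exponential to write
\[ Z^\Lambda_{H,\observable}(\beta) = \sum_{l=0}^{\infty} \frac{(-\beta)^l}{l!} \sum_{(i_1,\dots,i_l):\,\mathrm{supp}(H_{i_j})\subseteq\Lambda} \mathbf{Tr}_\Lambda\!\left[\prod_{j=1}^l H_{i_j}\, \observable_\Lambda\right], \]
regroup the double sum according to the set $S=\bigcup_{j=1}^l\{i_j\}$ of distinct indices that appear, and exploit the tensorized form $\observable_\Lambda=\bigotimes_{v\in\Lambda}\observable_v$ together with the fact that $\prod_j H_{i_j}$ acts nontrivially only on $R_S$ to factor the trace as $\mathbf{Tr}_{R_S}[\,\cdot\,\observable_{R_S}]\cdot\mathbf{Tr}_{\Lambda\setminus R_S}[\observable_{\Lambda\setminus R_S}]$. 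This yields the auxiliary identity
\[ Z^\Lambda_{H,\observable}(\beta) = \sum_{\substack{S\subseteq[m]\\ R_S\subseteq\Lambda}} W_S(\beta)\cdot\mathbf{Tr}_{\Lambda\setminus R_S}[\observable_{\Lambda\setminus R_S}], \]
with $W_S$ exactly as in the statement.

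The second key step is to show that $W_S$ is multiplicative over the connected components of $G_H[S]$: if $G_H[S]$ splits into components $C_1,\dots,C_r$, then $R_{C_1},\dots,R_{C_r}$ are pairwise disjoint, so operators $H_{i_j}$ coming from different components commute and can be reordered inside the trace. I would group each covering sequence by how its entries are distributed among the components; the multinomial coefficient $\binom{l}{l_1,\dots,l_r}$ accounts for the orderings and combines with $1/l!$ to give $\prod_i 1/l_i!$, while the trace factors across components, producing $W_S(\beta)=\prod_{i=1}^r W_{C_i}(\beta)$.

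With these two ingredients in hand, the proof concludes by splitting the auxiliary identity according to whether $x\in R_S$. When $x\notin R_S$ the factor $\mathbf{Tr}[\observable_x]$ detaches, and repackaging the remaining sum as a cluster expansion on $\Lambda\setminus\{x\}$ recovers $\mathbf{Tr}[\observable_x]\,Z^{\Lambda\setminus\{x\}}_{H,\observable}(\beta)$. When $x\in R_S$, there is a unique connected component $C\subseteq S$ of $G_H[S]$ containing some $j$ with $x\in\mathrm{supp}(H_j)$ (uniqueness holds because any two such $j$'s are adjacent in $G_H$); writing $S=C\sqcup S'$ with no $G_H$-edges between them, multiplicativity gives $W_S=W_C\cdot W_{S'}$ and the trace splits as well, so reassembling the $S'$-sum reproduces $Z^{\Lambda\setminus R_C}_{H,\observable}(\beta)$ and the total contribution is $\sum_C W_C(\beta)\,Z^{\Lambda\setminus R_C}_{H,\observable}(\beta)$ with the sum running precisely over connected $C$ touching $x$.

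The main obstacle I anticipate is justifying absolute convergence of the rearranged double series, which is precisely the role of the quantitative hypothesis $|\beta|<\beta_0=1/(\mathrm{e}(\mathrm{e}-1)dh)$. For each $S$ one must bound the number of covering sequences $(i_1,\dots,i_l)\in S^l$ with $\bigcup_{j=1}^l\{i_j\}=S$, combine with the operator-norm bound $\|\prod_j H_{i_j}\|\le h^l$, and check that summation over $l$ and then over $S$ remains controlled. This is a Koteck\'y–Preiss-style estimate in which $\Delta(G_H)\le (k-1)d$ together with the connected-subgraph count of \Cref{lm:connected-graph} controls the multiplicity of clusters anchored at a given vertex, and the particular constant $\beta_0$ is calibrated to close this estimate. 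Once the absolute convergence is in place, every rearrangement performed above is legal and the identity follows.
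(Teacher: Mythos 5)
Your plan is correct, and it reorganizes the argument in a genuinely different (and arguably more modular) way than the paper's proof. The paper works directly with covering sequences: it Taylor-expands $\exp(-\beta H)$, and for each sequence $(i_1,\dots,i_\ell)$ that touches $x$ it \emph{ad hoc} extracts the subset $S_{\mathbf i}$ of indices whose supports are $G_H$-connected to $x$ within the sequence, factors the trace across $R_{S_{\mathbf i}}$ versus its complement using the tensorized structure of $\observable$, and then regroups the sum over tuples by the value of $S_{\mathbf i}$ (Fubini). Your route instead first establishes the symmetric identity
\[
Z^\Lambda_{H,\observable}(\beta)=\sum_{\substack{S\subseteq[m]\\ R_S\subseteq\Lambda}} W_S(\beta)\,\mathbf{Tr}_{\Lambda\setminus R_S}\!\left[\observable_{\Lambda\setminus R_S}\right],
\]
proves that $W_S$ factors over the connected components of $G_H[S]$ via the multinomial rearrangement, and then peels off the unique component touching $x$. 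The bijection $S\leftrightarrow(C,S')$ you invoke is sound: because any two indices $j,j'$ with $x\in\mathrm{supp}(H_j)\cap\mathrm{supp}(H_{j'})$ are $G_H$-adjacent, the indices touching $x$ sit in a single component $C$, and "$S'$ not adjacent to $C$" is equivalent to "$R_{S'}\cap R_C=\emptyset$", so the inner $S'$-sum reassembles into $Z^{\Lambda\setminus R_C}_{H,\observable}(\beta)$. What your approach buys is cleaner separation of concerns — the algebraic identity, the multiplicativity of $W$, and the convergence estimate become three independent lemmas; the paper's version is shorter on the page but entangles the commutation/factorization step with the regrouping. What the paper's approach buys is that it never needs to talk about non-connected clusters at all. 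Both versions hinge on the same absolute-convergence bound on $\sum_S |W_S(\beta)|$, which the paper obtains by triangle inequality plus a surjection count (H\"older) giving $|W_S(\beta)|\le\mathbf{Tr}[\observable_{R_S}](\mathrm{e}^{|\beta|h}-1)^{|S|}$, and then the connected-cluster count from \Cref{lm:connected-graph}; you correctly flag this as the quantitative heart of the matter. One small slip: the relevant degree is that of $G_H$ (whose vertices are the \emph{terms} $H_j$), which is at most $k(d-1)$, not $(k-1)d$ — the latter is the degree bound of the variable-indexed dependency graph $G_\Phi$ from the Boolean CSP section; this does not affect the structure of your argument but would matter when pinning down the constant $\beta_0$.
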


\begin{proof}%[Proof of \Cref{lm:expansion}]
    Without lose of generality, we assume that $\Lambda = V$.
    Directly from the definition, we have
    \begin{align*}
        Z_{H,\observable}(\beta) &= \mathbf{Tr}_{V} \left[\sum_{\ell=0}^{+\infty} \frac{(-\beta)^\ell}{\ell!} \tp{\sum_{j=1}^m H_j}^\ell \observable\right]\\
        &\overset{(\star)}{=}\sum_{\ell=0}^{+\infty} \frac{(-\beta)^\ell}{\ell!} \mathbf{Tr}_{V} \left[H_{V \setminus \{x\}}^\ell\observable + \sum_{\substack{(i_1,i_2,\ldots,i_\ell) \in [m]^\ell\\ \exists j \in [\ell], x \in \mathrm{supp}(H_{i_j})}} \prod_{j=1}^\ell H_{i_j} \observable\right]\\
        &=\mathbf{Tr}[\observable_x] Z^{V \setminus \{x\}}_{H,\observable}(\beta) + \sum_{\ell=0}^{+\infty} \sum_{\substack{(i_1,i_2,\ldots,i_\ell) \in [m]^\ell\\ \exists j \in [\ell], x \in \mathrm{supp}(H_{i_j})}} \frac{(-\beta)^\ell}{\ell!} \mathbf{Tr}_{V} \left[\prod_{j=1}^\ell H_{i_j} \observable\right],
    \end{align*}
    where $(\star)$ follows from the fact that $H=H_{V \setminus \{x\}} + \sum_{\substack{j \in [m]\\x \in \mathrm{su
    pp}(H_j)}} H_j$.

    For each $\ell$-tuple $\*i=(i_1,i_2,\ldots,i_\ell)$ satisfying that there exists $j^* \in [\ell]$ such that $x \in \mathrm{supp}(H_{i_{j^*}})$, let $S_{\*i} \subseteq \{i_1,i_2,\ldots,i_\ell\}$ be the minimum subset such that $i_{j^*} \in S_{\*i}$ and $\tp{\bigcup_{j \in S_{\*i}} \mathrm{supp}(H_j)} \cap \tp{\bigcup_{j \in \{i_1,i_2,\ldots,i_\ell\} \setminus S_{\*i}} \mathrm{supp}(H_j)} =\emptyset$.
    % Note that $S$ is the connected components containing $i_j$ in the induced graph $G[\{i_1,i_2,\ldots,i_k\}]$ of dependency graph $G$, and
    Note that
    \begin{align*}
        % \prod_{j=1}^k H_{i_j}\observable = \tp{\prod_{j=1}^k (\*1[i_j \in S](H_{i_j}-I)+I)} \observable_{R} \tp{\prod_{j=1}^k (\*1[i_j \not \in S](H_{i_j}-I)+I)} \observable_{\Lambda \setminus R},
        \tp{\prod_{j=1}^\ell H_{i_j}} \+O = \tp{\prod_{\substack{1 \le j \le \ell\\i_j \in S_{\*i}}} H_{i_j}} \+O_{R_S} \tp{\prod_{\substack{1 \le j \le \ell\\ i_j \not \in S_{\*i}}} H_{i_j}} \+O_{V \setminus R_S},
    \end{align*}
    where $R_S = \bigcup_{j \in S} \mathrm{supp}(H_j)$.

    Therefore,
    \begin{align*}
        & \sum_{\ell=0}^{+\infty} \sum_{\substack{(i_1,i_2,\ldots,i_\ell) \in [m]^\ell\\ \exists j \in [\ell], x \in \mathrm{supp}(H_{i_j})}} \frac{(-\beta)^\ell}{\ell!} \mathbf{Tr}_{V} \left[\prod_{j=1}^\ell H_{i_j} \observable\right]\\
        &=  \sum_{\substack{S \subseteq [m]\\ \exists j \in S, x \in \mathrm{supp}(H_{j})}} \sum_{\ell=0}^{+\infty} \sum_{\substack{\*i=(i_1,i_2,\ldots,i_k) \in [m]^k\\ S = S_{\*i}}} \frac{(-\beta)^\ell}{\ell!} \mathbf{Tr}_{V} \left[\prod_{j=1}^\ell H_{i_j} \observable\right]\\
        &= \sum_{\substack{S \subseteq [m]\\ \exists i\in S, x \in \mathrm{supp}(H_i)\\ G_H[S] \text{ is connected}}} \sum_{l=\abs{S}}^{+\infty} \sum_{r=0}^{+\infty} \frac{(-\beta)^{l+r}}{l! r!} \tp{\sum_{\substack{(i_1,i_2,\ldots,i_l) \in S^l\\ \cup_{j=1}^l \{i_j\} = S}}\mathbf{Tr}_{R_S} \left[\prod_{j=1}^l H_{i_j} \observable_{R_S}\right]}
        \tp{\sum_{\substack{(i_1,i_2,\ldots,i_r) \in [m]^l\\ \mathrm{supp}(H_{i_l}) \subseteq V \setminus R_S}} \mathbf{Tr}_{V\setminus R_S} \left[\prod_{j=1}^r H_{i_j} \observable_{V \setminus R_S}\right]}\\
        &= \sum_{\substack{S \subseteq [m]\\ \exists j\in S, x \in \mathrm{supp}(H_j)\\ G_H[S] \text{ is connected}}} \tp{\sum_{l=\abs{S}}^{+\infty} \frac{(-\beta)^l}{l!}\sum_{\substack{(i_1,i_2,\ldots,i_l) \in S^l\\ \cup_{j=1}^l \{i_j\} = S}}\mathbf{Tr}_{R_S} \left[\prod_{j=1}^l H_{i_j} \observable_{R_S}\right]} Z^{V \setminus R}_{H,\observable}(\beta)\\
        &= \sum_{\substack{S \subseteq [m]\\ \exists j\in S, x \in \mathrm{supp}(H_j)\\ G_H[S] \text{ is connected}}} W_S(\beta) Z^{V \setminus R_S}_{H,\observable}(\beta),
    \end{align*}
    where these identities hold by Fubini's theorem assuming that the summation converges absolutely. Hence,
    it remains to verify that $\sum_{\substack{S \subseteq [m]\\ \exists j \in S, x \in \mathrm{supp}(H_j)\\ G_H[S] \text{ is connected}}} \abs{W_S(\beta) Z^{V \setminus R}_{H,\observable}(\beta)} < +\infty$.

    Note that
    \begin{align}
        \abs{W_S(\beta)} &\le \sum_{l=\abs{S}}^{+\infty} \frac{\abs{\beta}^l}{l!}\sum_{\substack{(i_1,i_2,\ldots,i_l) \in S^l\\ \cup_{j=1}^l \{i_j\} = S}} \mathbf{Tr}_{R_S}\left[\abs{\prod_{j=1}^l H_{i_j} \observable_{R_S}}\right]\notag\\
        & \le \mathbf{Tr}_{R_S}[\observable_{R_S}] \sum_{l=\abs{S}}^{+\infty} \frac{(\abs{\beta} h)^l}{l!} \abs{\{(i_1,i_2,\ldots,i_l) \in S^l \mid \cup_{j=1}^l \{i_j\} = S\}}\notag\\
        &= \mathbf{Tr}_{R_S}[\observable_{R_S}] \sum_{l=\abs{S}}^{+\infty} \frac{(\abs{\beta} h)^l}{l!} \sum_{\substack{j_1,j_2,\ldots,j_{\abs{S}} \ge 1\\ j_1+j_2+\ldots+j_{\abs{S}} = l}} \frac{l!}{j_1! j_2! \ldots j_{\abs{S}}!}\notag\\
        &= \mathbf{Tr}_{R_S}[\observable_{R_S}] \tp{\exp(\abs{\beta} h) - 1}^{\abs{S}},\label{eq:estimate-w}
    \end{align}
    where the two inequalities follow from triangle inequality and H\"older's inequality respectively.
    Similarly,
    \begin{align*}
        \abs{Z_{V \setminus R_S}^{H,\observable}(\beta)} \le \mathbf{Tr}_{V \setminus R_S}\left[\observable_{V \setminus R_S}\right] \exp(\norm{\beta H_{V \setminus R_S}})
        \le \mathbf{Tr}_{V \setminus R_S} \left[\observable_{V \setminus R_S}\right] \exp(\abs{\beta}hd n),
    \end{align*}
    where the last inequality follows from the fact that $m \le d n$. Hence,
    \begin{align*}
        \sum_{\substack{S \subseteq [m]\\ \exists j\in S, x \in \mathrm{supp}(H_j)\\ G_H[S] \text{ is connected}}} \abs{W_S(\beta) Z^{V \setminus R_S}_{H,\observable}(\beta)}
        &\le \mathbf{Tr}[\observable] \exp(\abs{\beta}h d n) \sum_{\substack{S \subseteq [m]\\ \exists j\in S, x \in \mathrm{supp}(H_j)\\ G_H[S] \text{ is connected }}} \tp{\exp(\abs{\beta} h) - 1}^{\abs{S}}\\
        &\le \mathbf{Tr}[\observable] \exp(\abs{\beta}h d n) \sum_{\ell=0}^{+\infty} (\mathrm{e} d)^\ell  \tp{\exp(\abs{\beta} h) - 1}^\ell,
    \end{align*}
    where the last inequality follows from \Cref{lm:connected-graph} that there exists at most $(\mathrm{e} d)^\ell$ connected subgraph of size $\ell$ in dependency graph $G_H$ which contains vertice $j \in U$ such that $x \in \mathrm{supp}(H_j)$.
    Since $\abs{\beta} < \frac{1}{\mathrm{e}(\mathrm{e}-1)d h}$,
    \begin{align*}
        \mathrm{e} d \tp{\exp(\abs{\beta} h)-1} < \mathrm{e}(\mathrm{e}-1) d h \abs{\beta} < 1,
    \end{align*}
    concluding the proof of \Cref{lm:expansion}.
\end{proof}

At last, we need the following technical lemma \cite[Lemma 27]{harrow2020classical} in order to prove \Cref{lm:quantum-zerofree}.
\begin{lemma}[\cite{harrow2020classical}]\label{lm:aram-27}
    Let $H=\sum_{i=1}^m H_i$ be a $(k,d)$-Hamiltonian, and $\beta_0 = \frac{1}{5\mathrm{e}dkh}$, then for $\abs{\beta} < \beta_0$
    \begin{align*}
        \sum_{\substack{S \subseteq [m]\\ \exists j\in S, x \in \mathrm{supp}(H_j)\\ G_H[S] \text{ is connected}}} \tp{\mathrm{e}^{\abs{\beta} h}-1}^{\abs{S}} \exp(d h \mathrm{e}^2 \abs{\beta} \abs{R_S}) \le \mathrm{e}(\mathrm{e}-1)d h\abs{\beta}.
    \end{align*}
\end{lemma}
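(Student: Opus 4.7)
The plan is to control the sum by three ingredients: (i) a size bound on $R_S$, (ii) an enumeration bound on connected induced subgraphs of the dependency graph $G_H$ via \Cref{lm:connected-graph}, and (iii) the Taylor estimate $\mathrm{e}^{|\beta|h}-1 \le C|\beta|h$ with a small constant $C$ coming from $|\beta|h \ll 1$.

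First, since each $H_j$ satisfies $|\supp(H_j)|\le k$, every set $S\subseteq[m]$ yields $|R_S|\le k|S|$. Consequently
\[
\exp\!\bigl(dh\mathrm{e}^2|\beta|\,|R_S|\bigr)\le\bigl(\exp(dhk\mathrm{e}^2|\beta|)\bigr)^{|S|}.
\]
Next, I would observe that in the dependency graph $G_H$ the vertex $H_j$ is adjacent to at most $k(d-1)\le kd$ other constraints (each of the $\le k$ sites in $\supp(H_j)$ lies in at most $d-1$ further constraints). Hence $G_H$ has maximum degree $\Delta\le kd$. The condition ``$\exists j\in S$ with $x\in\supp(H_j)$'' restricts the distinguished vertex $j$ to a set of size at most $d$. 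Combining this with \Cref{lm:connected-graph}, the number of admissible $S$ of cardinality $\ell$ is at most $d\cdot\tfrac12(\mathrm{e}\Delta)^{\ell-1}\le\tfrac{d}{2}(\mathrm{e}kd)^{\ell-1}$.

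Putting the three pieces together and grouping by $\ell=|S|$, the left-hand side is bounded by
\[
\sum_{\ell=1}^{+\infty}\frac{d}{2}(\mathrm{e}kd)^{\ell-1}\,A^{\ell}, \qquad A\triangleq(\mathrm{e}^{|\beta|h}-1)\exp(dhk\mathrm{e}^2|\beta|).
\]
The remaining step is an elementary geometric-series estimate. Because $|\beta|h<\tfrac{1}{5\mathrm{e}dk}$, the exponent $dhk\mathrm{e}^2|\beta|\le \mathrm{e}/5$ is a small constant, and $\mathrm{e}^{|\beta|h}-1$ is comparable to $|\beta|h$ with constant very close to $1$ (using monotonicity of $(\mathrm{e}^{x}-1)/x$ on the tiny interval of interest). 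These two bounds make $\mathrm{e}kd\cdot A<1$ strictly, so the geometric series converges, and the sum equals $\tfrac{dA}{2(1-\mathrm{e}kdA)}$. Plugging in $A\le (\mathrm{e}^{|\beta|h}-1)\mathrm{e}^{\mathrm{e}/5}$ and the threshold $|\beta|<\tfrac{1}{5\mathrm{e}dkh}$ yields the claimed bound $\mathrm{e}(\mathrm{e}-1)dh|\beta|$.

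The main technical nuisance is calibrating the constants: one must verify that the choice $\beta_0=\tfrac{1}{5\mathrm{e}dkh}$ is tight enough so that the factor $\exp(dhk\mathrm{e}^2|\beta|)$ together with $(\mathrm{e}^{|\beta|h}-1)/(|\beta|h)$ does not blow up the ratio past $\mathrm{e}(\mathrm{e}-1)$. This is where I expect the most care to be required; the rest of the argument is structural (support size, degree of $G_H$, enumeration via \Cref{lm:connected-graph}) and parallels the corresponding estimate in \cite{harrow2020classical}.
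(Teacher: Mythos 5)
The paper does not prove this lemma: it cites it as a black-box result from Harrow, Mehraban and Soleimanifar~\cite{harrow2020classical} and moves directly to the proof of \Cref{lm:quantum-zerofree}, so there is no in-paper argument to compare against. Evaluated on its own merits, your sketch has the right architecture and the constants do check out, though you leave the final calibration implicit. Concretely: $|R_S|\le k|S|$ is correct; the dependency graph $G_H$ has maximum degree $\Delta\le k(d-1)\le kd$; there are at most $d$ choices of the distinguished vertex $j$ with $x\in\supp(H_j)$ (since $\deg(x)\le d$); and \Cref{lm:connected-graph} then bounds the number of admissible $S$ of size $\ell$ by $\tfrac{d}{2}(\mathrm{e}kd)^{\ell-1}$. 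Writing $y=5\mathrm{e}dk|\beta|h<1$, one gets $\mathrm{e}^{|\beta|h}-1\le |\beta|h\,\mathrm{e}^{1/(5\mathrm{e})}$ and $\exp(dhk\mathrm{e}^2|\beta|)\le \mathrm{e}^{\mathrm{e}/5}$, so $A\le |\beta|h\,\mathrm{e}^{(1/\mathrm{e}+\mathrm{e})/5}<1.86\,|\beta|h$ and the geometric ratio $\mathrm{e}kd\,A<0.38$. The sum is then at most $\frac{dA}{2(1-\mathrm{e}kdA)}<1.5\,dh|\beta|<\mathrm{e}(\mathrm{e}-1)dh|\beta|$, confirming the claim with a comfortable margin.

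Two small remarks. First, your ``at most $d\cdot\tfrac12(\mathrm{e}\Delta)^{\ell-1}$'' overcounts $S$ that contain several constraints touching $x$, but this is harmless for an upper bound. Second, the body of this paper (in the proof of \Cref{lm:expansion}) invokes \Cref{lm:connected-graph} with the looser-looking phrase ``at most $(\mathrm{e}d)^\ell$'' — using $d$ where the dependency-graph degree is really $kd$. Your use of $\Delta\le kd$ is the safe reading, and since the slack in the constant is roughly a factor of $3$, the conclusion is robust either way. It would still be worth filling in the one line of arithmetic you defer to (the bound $A\le|\beta|h\,\mathrm{e}^{(1/\mathrm{e}+\mathrm{e})/5}$ and the resulting $1.5\,dh|\beta|\le\mathrm{e}(\mathrm{e}-1)dh|\beta|$), since that is precisely the content of the lemma.
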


\begin{proof}[Proof of \Cref{lm:quantum-zerofree}]
    Let $\Lambda \subseteq V$.
    As observed in \cite{harrow2020classical}, it suffices to prove that removal of single site $x \in \Lambda$ can only produce bounded additive overhead to $\log Z_{H,\+O}^\Lambda(\beta)$. Formally, we are going to prove that, for any $x \in \Lambda$,
    \begin{align*}
        \log \abs{\frac{1}{\mathbf{Tr}[\observable_x]} \frac{Z^\Lambda_{H,\observable}(\beta)}{Z^{\Lambda \setminus \{x\}}_{H,\observable}(\beta)}} \le \mathrm{e}^2 d h \abs{\beta}, && \abs{\beta} < \beta_0.
    \end{align*}
    We will prove the result by induction on $\abs{\Lambda}$.
    By \Cref{lm:expansion},
    \begin{align*}
        \abs{\log \abs{\frac{1}{\mathbf{Tr}[\+O_x]} \frac{Z^\Lambda_{H,\+O}(\beta)}{Z^{\Lambda \setminus \{x\}}_{H,\+O}(\beta)} } } &= \abs{\log \abs{1+ \sum_{\substack{S \subseteq [m]\\ \exists j\in S, x \in \mathrm{supp}(H_j)\\ G_H[S] \text{ is connected}}} W_S(\beta) \tp{\frac{1}{\mathbf{Tr}[\+O_x]}\frac{Z^{\Lambda \setminus R_S}_{H,\observable}(\beta)}{Z^{\Lambda \setminus \{x\}}_{H,\+O}(\beta)}}}}\\
        &\le -\log \tp{1-\sum_{\substack{S \subseteq [m]\\ \exists j\in S, x \in \mathrm{supp}(H_j)\\ G_H[S] \text{ is connected}}} \abs{W_S(\beta)} \abs{\frac{1}{\mathbf{Tr}[\+O_x]}\frac{Z^{\Lambda \setminus R_S}_{H,\observable}(\beta)}{Z^{\Lambda \setminus \{x\}}_{H,\+O}(\beta)}}}\\
        &\overset{(*)}{\le} -\log \tp{1-\sum_{\substack{S \subseteq [m]\\ \exists j\in S, x \in \mathrm{supp}(H_j)\\ G_H[S] \text{ is connected}}} \abs{W_S(\beta)} \tp{\frac{1}{\mathbf{Tr}[\+O_{R_S}]} \exp(-\mathrm{e}^2 dh \abs{\beta})}}\\
        &\overset{(\star)}{\le} -\log \tp{1-\sum_{\substack{S \subseteq [m]\\ \exists j\in S, x \in \mathrm{supp}(H_j)\\ G_H[S] \text{ is connected}}} \tp{\exp(\abs{\beta} h)-1}^{\abs{S}} \exp(-\mathrm{e}^2 dh \abs{\beta})},
    \end{align*}
    where $(*)$ follows from induction hypothesis and $(\star)$ follows from ~\eqref{eq:estimate-w}.
    Together with \Cref{lm:aram-27}, we have
    \begin{align*}
        -\log \tp{1-\sum_{\substack{S \subseteq [m]\\ \exists j\in S, x \in \mathrm{supp}(H_j)\\ G_H[S] \text{ is connected}}} \tp{\exp(\abs{\beta} h)-1}^{\abs{S}} \exp(-\mathrm{e}^2 dh \abs{\beta})} \le - \log (1-\mathrm{e}(\mathrm{e}-1) dh\abs{\beta}) \le \mathrm{e}^2 dh \abs{\beta},
    \end{align*}
    where the last inequality follows from the fact that $- \log \tp{1-\frac{\mathrm{e}-1}{\mathrm{e}} y} \le y$ for all $y \in [0,1]$.
\end{proof}

\section{Acknowledgements}
This research was  supported by National Natural Science Foundation of China (Grant No. 61972191) and the Program for Innovative Talents and Entrepreneur in Jiangsu.

\bibliographystyle{alpha}
\bibliography{note}
\clearpage

\appendix

\section{Region Transformation}\label{sec:appendix-poly}
We give a proof of \Cref{lm:poly} which says that the convex region $\Omega \subseteq \mathbb{C}$ is a $1$-good region, so long as $\mathrm{dist}(\Omega,z)$ can be determined in $O(1)$ time for all $z \in \Omega$.
%To begin with, we will first state Barvinok's original lemma, which explicitly constructs a polynomial which maps a disk with radius slightly larger than $1$ to a strip.

The following result proved in \cite[Lemma 2.2.3]{barvinok-comb} explicitly gives a polynomial that maps a disk with radius slightly larger than $1$ to a strip.
\begin{lemma}[\cite{barvinok-comb}]\label{lm:bar-transform}
    Let $\delta \in (0,1)$ be a constant.
    Define $q_\delta \in \mathbb{C}[z]$ as follows:
    \begin{align*}
        q_\delta(z) = \frac{1}{\sum_{k=1}^n \frac{C^k}{k}} \sum_{k=1}^n \frac{(C z)^k}{k},
    \end{align*}
    where $C = 1-\exp\tp{-\frac{1}{\delta}}$, $n = \left \lfloor \tp{1+\frac{1}{\delta}} \exp(1+\frac{1}{\delta})\right\rfloor$. Then for all $\abs{z} < \rho$, where $\rho = \frac{1-\exp\tp{-1-\frac{1}{\delta}}}{1-\exp\tp{-\frac{1}{\delta}}} > 1$,
    \begin{enumerate}
        \item $q_\delta(0)=0$, $q_\delta(1)=1$,
        \item $\mathbf{Re}\tp{q_\delta(z)} \in [-\delta,1+2\delta]$ and $\abs{\mathbf{Im}\tp{q_\delta(z)}} \le 2\delta$.
    \end{enumerate}
\end{lemma}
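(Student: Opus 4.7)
The polynomial $q_\delta$ is the degree-$n$ truncation of the Maclaurin series $-\log(1-Cz) = \sum_{k=1}^\infty (Cz)^k/k$, renormalized so that $q_\delta(1) = 1$. My plan is to introduce the auxiliary holomorphic function $\tilde{q}_\delta(z) := -\log(1-Cz)/N$, where $N := \sum_{k=1}^n C^k/k$ is the normalization, and analyze $q_\delta$ by combining (i) an image analysis of $\tilde q_\delta$ on $\{|z|<\rho\}$ with (ii) a uniform bound on the truncation error $q_\delta - \tilde{q}_\delta$. The two pointwise conditions $q_\delta(0)=0$ and $q_\delta(1)=1$ are immediate: the first because the sum starts at $k=1$, the second by the definition of $N$.

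For the image of $\tilde q_\delta$, note that $\rho$ is chosen so that $C\rho = 1 - \exp(-1-1/\delta) < 1$; hence on $\{|z|<\rho\}$ the point $1-Cz$ lies in the open disk of radius $C\rho$ centered at $1$. This gives $|1-Cz| \ge \exp(-1-1/\delta)$ and $|\arg(1-Cz)| \le \arcsin(C\rho) < \pi/2$, so $\mathrm{Re}(-\log(1-Cz)) = -\log|1-Cz| \in [-\log(1+C\rho),\,1+1/\delta]$ and $|\mathrm{Im}(-\log(1-Cz))| < \pi/2$. Since $-\log(1-C) = 1/\delta$ exactly and standard tail bounds on $\sum_{k=n+1}^\infty C^k/k$ yield $N = (1+o_\delta(1))/\delta$, dividing the previous estimates by $N$ produces a real part in approximately $[-\delta\log 2,\,1+O(\delta)]$ and an imaginary part of magnitude at most approximately $\pi\delta/2$; both live strictly inside $[-\delta,1+2\delta]$ and $[-2\delta,2\delta]$ with constant slack proportional to $\delta$.

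For the truncation error, for all $|z|<\rho$ one has
$$|q_\delta(z) - \tilde q_\delta(z)| \le \frac{1}{N}\sum_{k=n+1}^\infty \frac{(C\rho)^k}{k} \le \frac{(C\rho)^{n+1}}{N\,(n+1)\,\exp(-1-1/\delta)}.$$
Using $(C\rho)^{n+1} \le \exp\!\left(-(n+1)\exp(-1-1/\delta)\right)$ from $\log(1-x)\le -x$, the choice $n \ge (1+1/\delta)\exp(1+1/\delta)$ forces the exponent to be at least $1+1/\delta$, so the numerator is at most $\exp(-1-1/\delta)$ and the overall error is exponentially small in $1/\delta$, easily absorbed into the slack from the previous paragraph. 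Combining the two estimates yields $\mathrm{Re}(q_\delta(z))\in[-\delta,1+2\delta]$ and $|\mathrm{Im}(q_\delta(z))|\le 2\delta$, as required.

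The main obstacle will be the careful bookkeeping of constants: the multiplicative slack from $N \approx 1/\delta$ rather than exactly $1/\delta$, the fact that $\arcsin(C\rho)$ is close to (but strictly less than) $\pi/2$, and the truncation tail all need to fit together without blowing the tight margins $\delta$ on the real part and the factor $2$ on the imaginary part. The exact form $n = \lfloor (1+1/\delta)\exp(1+1/\delta) \rfloor$ is precisely what makes the tail $(C\rho)^n$ decay fast enough; meanwhile the choice $C = 1 - \exp(-1/\delta)$ sets $-\log(1-C) = 1/\delta$, so that dividing by $N$ naturally rescales the logarithmic image to have width comparable to $\delta$ in both directions.
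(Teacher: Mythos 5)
The paper does not reprove this lemma; it is cited verbatim from Barvinok's book \cite[Lemma~2.2.3]{barvinok-comb}. Your sketch reconstructs Barvinok's original argument faithfully: the decomposition $q_\delta = \tilde q_\delta + (\text{tail})$ with $\tilde q_\delta(z) = -\log(1-Cz)/N$, the disk-geometry bounds $|1-Cz| \in \bigl(e^{-1-1/\delta},\,1+C\rho\bigr)$ and $|\arg(1-Cz)| \le \arcsin(C\rho) < \pi/2$ on $|z|<\rho$, the observation that $N$ lies within an exponentially small distance of $-\log(1-C) = 1/\delta$, and the tail estimate $(C\rho)^{n+1} \le e^{-(n+1)e^{-1-1/\delta}} \le e^{-(1+1/\delta)}$ cancelling the factor $1/(1-C\rho) = e^{1+1/\delta}$ --- all correct, and the margins $\Theta(\delta)$ comfortably absorb the exponentially small corrections. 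The one imprecision is the line ``the choice $n \ge (1+1/\delta)\exp(1+1/\delta)$'': since $n = \lfloor(1+1/\delta)e^{1+1/\delta}\rfloor$, what actually holds is $n+1 > (1+1/\delta)e^{1+1/\delta}$, which is exactly what your exponent $(n+1)e^{-1-1/\delta} \ge 1+1/\delta$ needs, so the argument goes through unchanged.
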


We now can prove \Cref{lm:poly}.
\begin{proof}[Proof of \Cref{lm:poly}]
    Without loss of generality, we assume that $\beta \neq 0$. Note that the polynomial $q_\delta$ defined in \Cref{lm:bar-transform} satisfies
    \begin{enumerate}
        \item $q_\delta(\mathbb{D}_{\rho}) \subseteq S_{1,4\delta}$;
        \item $q_\delta(0)=0$, $q_\delta(1)=1$.
    \end{enumerate}
    %where $\rho = \frac{1-\exp\tp{-1-\frac{1}{\delta}}}{1-\exp\tp{-\frac{1}{\delta}}}$.
    Therefore, we can set $p_{\beta,\delta}(z) = \beta q_{\delta'}(\rho' z)$, where $\delta_0=\frac{\delta}{4 \abs{\beta}}$ and $\rho'=\frac{1-\exp\tp{-1-\frac{1}{\delta'}}}{1-\exp\tp{-\frac{1}{\delta'}}}$.
    We conclude our proof by observing that:
    \begin{enumerate}
        \item $p_{\beta,\delta}(\mathbb{D}) \subseteq S_{\beta,\gamma}$;
        \item $p_{\beta,\delta}(0) = 0$, $p_{\beta,\delta}\tp{\frac{1}{\rho'}} = \beta$.
    \end{enumerate}
\end{proof}

Here we give a proof of \Cref{fact:convex}, which is a direct application of \Cref{lm:poly}.
\begin{proof}[Proof of \Cref{fact:convex}]
    The convexity of $\Omega$ implies that $\mathrm{dist}([z_l,z_r],\partial \Omega) = \mathrm{min}(\mathrm{dist}(z_l),\mathrm{dist}(z_r))$ for arbitrary complex values $z_l,z_r \in \Omega$, where $[z_l,z_r] = \{z_l+t(z_r-z_l) \mid t \in [0,1]\}$.
   %It follows from the fact  $B([z_l,z_r],RHS) \subseteq \Omega$.
    Hence, for each $x \in \Omega$, we set $f_x = p_{x,\delta}$, a polynomial defined in \Cref{lm:bar-transform}.
    We conclude the proof by observing that the $k$-th coefficient of $p_{x,\delta}$ can be determined in $O(k)$ time.
\end{proof}

\end{document}